\pdfoutput=1 
\documentclass[11pt]{article}

\usepackage{amssymb,amsmath,amsthm,bbm}
\usepackage{thmtools} 

\usepackage[T1]{fontenc}
\usepackage[tt=false, type1=true]{libertine}

\usepackage[libertine]{newtxmath}

\usepackage[margin=1in]{geometry}
\setlength{\parindent}{0pt}
\setlength{\parskip}{2pt}

\patchcmd{\abstract}{\small}{}{}{}

\usepackage{enumitem}
\setlist{nosep,topsep=0pt,leftmargin=*}

\usepackage{hyperref}

\usepackage[sortcites,sorting=nyt,style=alphabetic,backend=bibtex]{biblatex}
\addbibresource{bibliography.bib}

\usepackage{booktabs} 
\usepackage[ruled]{algorithm2e} 

\usepackage{float}

\usepackage{xcolor,nicefrac,physics,tcolorbox,enumitem}

\usepackage[capitalize]{cleveref}

\usepackage[suppress]{color-edits}
\definecolor{@gray}{HTML}{edc3c5}
\addauthor[Eva]{et}{cyan}
\addauthor[Sid]{sb}{orange}
\addauthor[Giannis]{gf}{teal}
\addauthor[David]{dl}{violet}
\addauthor[\textbf{TODO}]{todo}{purple}

\usepackage{tikz}
\usetikzlibrary{fadings,patterns,shadows.blur,shapes}

\SetAlgorithmName{\textbf{MECHANISM}}{\textbf{MECHANISM}}{List of Mechanisms}
\crefname{algorithm}{Mechanism}{Mechanisms} 
\Crefname{algorithm}{Mechanism}{Mechanisms} 

\DeclareMathOperator*{\E}{\mathbb{E}}
\DeclareMathOperator*{\argmin}{arg\,min}
\DeclareMathOperator*{\argmax}{arg\,max}

\let\hat\widehat

\newcommand{\Bern}{\mathrm{Bernoulli}}
\newcommand{\Binom}{\mathrm{Binomial}}

\renewcommand{\pmb}{\mathbbm}

\newcommand{\lnash}{\lambda_{\mathtt{NASH}}}
\newcommand{\lrob}{\lambda_{\mathtt{ROB}}}
\newcommand{\allPayMechanism}{\hyperref[alg:principal_bids_for_you_all_pay]{\color{black}Competitive Subsidy Mechanism}\xspace}
\newcommand{\allPayMechanismGeneralCost}{{\color{black}General Cost Mechanism}\xspace}
\newcommand{\allPayMechanismWithBiddingMinimum}{\hyperref[alg:principal_bids_for_you_all_pay_with_bidding_minimum]{\color{black}Competitive Subsidy Mechanism with Bidding Minimum}\xspace}
\newcommand{\allPayMechanismWithasymmetricFairShares}{\hyperref[alg:principal_bids_for_you_all_pay_with_asymmetric_fair_shares]{\color{black}Asymmetric Fair Share Mechanism}\xspace}
\newcommand{\paymentConstant}{\bar b}
\newcommand{\oneOverNAggressiveStrategy}{$1/n$-aggressive strategy}
\newcommand{\asymmetricRate}[1]{1-\left(1-\frac1m\right)^{k_{#1}}}
\newcommand{\asymmetricStrategy}[1]{$\left(\asymmetricRate{#1}\right)$-aggressive strategy}

\newtheorem{theorem}{Theorem}[section]

\newtheorem{lemma}[theorem]{Lemma}

\newtheorem{remark}[theorem]{Remark}
\newtheorem{proposition}[theorem]{Proposition}
\newtheorem{numericalresult}[theorem]{Numerical Result}
\newtheorem{fact}[theorem]{Fact}

\theoremstyle{definition}

\newtheorem{definition}{Definition}[section]

\newcommand{\citet}[1]{\textcite{#1}}

\title{
Robust Resource Allocation via Competitive Subsidies
}

\newcommand{\email}[1]{\protect\href{mailto:#1}{#1}}

\author{David X. Lin\thanks{Cornell University (\email{dxl2@cornell.edu}, \email{gfikioris@cs.cornell.edu}, \email{sbanerjee@cornell.edu}, \email{eva.tardos@cornell.edu}).}
    \and Giannis Fikioris\footnotemark[1]
    \and Siddhartha Banerjee\footnotemark[1]    
    \and \'Eva Tardos\footnotemark[1]}
\date{\vspace{-25pt}}

\begin{document}

\begin{titlepage}

\pagenumbering{gobble}  
\maketitle

\begin{abstract}
    A canonical setting for non-monetary online resource allocation is one where agents compete over multiple rounds for a single item per round, with i.i.d. valuations and additive utilities across rounds. With $n$ symmetric agents, a natural benchmark for each agent is the utility realized by her favorite $1/n$-fraction of rounds; a line of work has demonstrated one can \emph{robustly} guarantee each agent a constant fraction of this ideal utility, irrespective of how other agents behave. In particular, several mechanisms have been shown to be $1/2$-robust, and recent work established that repeated first-price auctions based on artificial credits have a robustness factor of $0.59$, which cannot be improved beyond $0.6$ using first-price and simple strategies. In contrast, even without strategic considerations, the best achievable factor is $1-1/e\approx 0.63$.

In this work, we break the $0.6$ first-price barrier to get a new $0.625$-robust mechanism, which almost closes the gap to the non-strategic robustness bound. Surprisingly, we do so via a simple auction, where in each round, bidders decide if they ask for the item, and we allocate uniformly at random among those who ask.
The main new ingredient is the idea of \emph{competitive subsidies}, wherein we charge the winning agent an amount in artificial credits that decreases when fewer agents are bidding (specifically, when $k$ agents bid, then the winner pays proportional to $k/(k+1)$, varying the payment by a factor of 2 depending on the competition). Moreover, we show how it can be modified to get an equilibrium strategy with a slightly weaker robust guarantee of $5/(3e) \approx 0.61$ (and the optimal $1-1/e$ factor at equilibrium). 
Finally, we show that our mechanism gives the best possible bound under a wide class of auction-based mechanisms. 

\end{abstract}

\setcounter{tocdepth}{1} 

\end{titlepage}

\clearpage
\pagenumbering{arabic}  
\setcounter{page}{1}    

\section{Introduction}

Consider an indivisible resource shared between multiple selfish agents over time, e.g., a telescope shared by different research labs.
Over multiple rounds, a principal needs to decide which agent gets allocated, aiming to allocate to each agent when their value is the highest.
The mechanism used by the principal should be:
$(i)$ non-monetary, as the resource is being shared, and no one agent owns it,
$(ii)$ fair, in that each agent is individually satisfied with the utility they get,
$(iii)$ simple and understandable, so that participating in the mechanism is straightforward,
$(iv)$ robust, i.e., each individual has utility guarantees that hold independent of the behavior of others.
Under these properties, any agent following a simple strategy is able to guarantee high utility regardless of how the other agents behave.
Standard techniques in mechanism design, like the VCG mechanism, cannot be applied due to the lack of money.
In addition, such techniques often ignore fairness in allocation and instead focus on efficiency, i.e., maximizing the total allocated value, which in this setting is ill-defined: the lack of money makes agents' values incomparable.

We study this problem via a canonical model, first introduced by~\citet{guo2010}: $n$ agents compete over $T$ rounds for a single indivisible resource in each round. Agent $i$ has (random) private values $V_i[t]$ for the resource in round $t$; these values are i.i.d. across rounds and independent across agents. The principal can award the item in each round to a single agent, or not award at all.
\sbedit{Early work on this model looked at welfare approximations, with symmetric agents or known value distributions.}
\sbreplace{More recent work on this setting has focused on the design of \emph{robust mechanisms}. In particular,~\citet{gorokh2021remarkable} suggested that each agent first be endowed with a \emph{fair share} indicating their nominal share of the resource.
They then used a repeated first-price auction with an artificial currency to show that any agent can guarantee $1/2$ fraction of her ideal utility (i.e., her maximum utility under her nominal share; see \cref{def:ideal_utility}) in a \textit{robust} way, i.e., under arbitrary behavior by the other agents.}{More recent work on this setting has focused on the design of \emph{share-based} mechanisms, where each agent $i$ is endowed with a `fair share' $\alpha_i$ (with $\sum_i \alpha_i = 1$) indicating the nominal fraction of items they should be allocated. This was first suggested by~\citet{gorokh2021remarkable}, who showed that under a repeated first-price auction with artificial credits, each agent can \emph{robustly} guarantee at least $1/2$ of her ideal utility   (i.e., her maximum utility under her nominal share of the resource -- see \cref{def:ideal_utility}), under arbitrary behavior by other agents.}
Recently, using the same mechanism, \cite{lin2025online} improved this to $2-\sqrt 2 \approx 0.59$ fraction of ideal utility, using a more complicated strategy where the agent has to use a randomized bid; they also show that for the first-price auction, no static bidding strategy can be better than $0.6$ robust.
\sbedit{In contrast, a natural upper bound on the robustness factor is $1 - 1/e$, which follows from ignoring strategic considerations -- consider $n$ agents with equal fair shares $1/n$ and $\Bern(1/n)$ values, wherein each agent should ideally get $T/n$ rounds, but there are at most $(1-(1-1/n)^n)T$ rounds in which any agent has non-$0$ utility. Thus, it appears fundamentally new ideas are required to make progress towards this bound.}

\subsection{Our Results}

Our main result presents a new mechanism where any agent can robustly guarantee at least a $5/8 = 0.625$ fraction of her ideal utility\sbedit{, thus almost closing the gap to the upper bound of $1 - 1/e \approx 0.63$.}
\sbdelete{This bound is very close to the upper bound of $1 - 1/e \approx 0.63$ \sbreplace{that \cite{fikioris2025beyond} proved based on conflicts between the desires of the agents.}{that holds \emph{without} strategic considerations -- consider $n$ agents with equal fair shares $1/n$ and $\Bern(1/n)$ \gfedit{ values}, wherein each agent ideally wants $T/n$ rounds, but can at most get a $0.63T/n$ of these rounds \gfedit{ due to conflicts with others}.}}
In addition, our \sbreplace{mechanism is}{mechanism and strategies turn out to be} simpler than the ones used in prior work on first-price mechanisms~\cite{gorokh2021remarkable,lin2025online}, where each agent needs to bid in a carefully chosen (and potentially randomized) way, so their spending is not too low (to be aggressive enough) and not too high (to conserve budget).
Instead, our proposed \allPayMechanism, abstracts this complexity away from the agent.
As in earlier works, each agent is first endowed with a budget of artificial credits proportional to their fair share.
However, in each round, instead of allowing arbitrary non-negative bids, we require each agent to either request the item or not, and allocate uniformly at random among requesting agents.
Finally, the mechanism charges the winning agent an amount that is based on the competition in that round, subsidizing when fewer agents request.
Specifically, if $k$ agents request, the winner is charged proportionally to $\frac{k}{k+1}$. 
This is intuitive, as winning when more other agents request causes more externalities, and hence higher payments.
Our $0.625$ ideal-utility guarantee is now achieved when an agent with fair share $\alpha$ requests whenever her value is in the top $\alpha$-quantile.

A natural question is whether the payment scheme $\frac{k}{k+1}$ is the optimal one.
In \cref{sec:robust_upper_bound}, assuming static bidding policies, we show how to bound the robustness factor of any payment scheme \sbedit{(i.e., any function $p_k$ charged to the winner when $k$ agents request)} via an optimization problem.
\dlreplace{Solving this numerically, we get that we cannot improve over our $0.625$ result, showing that our simple payment rule is in fact optimal over any comparable payment function in such a mechanism}{Our numerical results indicate that we cannot improve over our $0.625$ result, strongly suggesting that our simple payment rule is in fact optimal over any comparable payment function in such a mechanism}.

A notable benefit of the simplicity of our mechanism is that it is easy to modify to address metrics beyond robustness.
To demonstrate this, we next consider a question raised in~\cite{onyeze2025allocating,lin2025robust} as to whether robust strategies can also be realized as equilibrium actions.
Our proposed strategy of requesting when agent $i$'s value is in her top $\alpha_i$-quantile turns out not to be a best response if all other agents use this strategy (instead of acting adversarially, as we consider in \cref{sec:robust_mechanism}).
In fact, when all agents use this strategy, they do not utilize their entire budget.
However, this is easy to fix by modifying our mechanism to have higher prices.
Specifically, in \cref{sec:equilibrium_mechanism}, we show that when agents have equal fair shares, then increasing our payments by a factor of approximately $\frac38e\approx 1.02$ makes the earlier strategy an equilibrium, albeit with a slightly worse robustness guarantee. In particular, every agent now enjoys a $(1-1/e)\approx 0.63$ fraction of her ideal utility at equilibrium and $\frac{5}{3e} \approx 0.61$ robustly.
In \cref{sec:asymmetric_fair_shares}, we modify this to realize an equilibrium strategy with the same robustness guarantee for arbitrary fair shares.

The innovations in our mechanism are twofold:
First, our congestion-aware pricing is key to getting robustness better than $1/2$.
In fact, \cite{lin2025robust}, whose pricing is invariant of the number of bidders, show that one cannot do better than $1/2$ with static bidding under this rule.
Secondly, simplifying the action space of the agents is key to our improved guarantees.
Allowing agents to bid any real number in a first-price auction gives an adversary too much freedom and leads to a deterioration of robustness results, as shown by the robustness upper bound of $0.6$ in \cite{lin2025online}.
Even for our equilibrium guarantees, ensuring that a strategy profile results in an equilibrium is very complicated when the action space is too large.

\subsection{Related work}

There is a long line of work on online resource allocation without money, building on the model of~\cite{guo2010}.
Earlier work focused on emulating outcomes of monetary mechanisms without using money~\cite{guo2010,cavallo2014incentive,gorokh2021monetary}. These culminated in the black-box reduction of \citet{gorokh2021monetary}, which, building on the “linking decisions” idea of \citet{jackson}, showed how repeated all-pay auctions can emulate the equilibrium outcome of any monetary mechanism with vanishing efficiency loss. However, these approaches assume full knowledge of value distributions and provide no guarantees under off-equilibrium behavior.

Our work builds on a more recent line~\cite{gorokh2021remarkable,banerjee2023robust,fikioris2025beyond,onyeze2025allocating,lin2025online,lin2025robust}, that considers the same model, but shifts the focus to distribution agnostic mechanisms, both achieving robust, individual-level guarantees, as well as analyzing equilibrium outcomes. \citet{gorokh2021remarkable} first showed that in a first-price auction with artificial currency, each agent has a $1/2$-robust strategy when using an appropriate fixed bid.
\cite{banerjee2023robust} extend their results to reusable resources (where an agent might require the resource for multiple consecutive rounds) using the same mechanism with a reserve price; they also show that in this setting, the $1/2$ factor is tight.
\cite{fikioris2025beyond} use the Dynamic Max-min Fair (DMMF) mechanism (that allocates to the bidding agents with the least (normalized) number of wins) to get $1/2$-robustness in the worst-case; they also get $(1-o(1))$-robustness under assumptions on the agent's value distribution.
\cite{lin2025online} improve the robustness of the first-price mechanism to $2 - \sqrt 2\approx 0.59$ by using randomization, where the agent's bid is uniformly distributed.
In fact, our payment scheme and the uniform distribution used by \cite{lin2025online} share a connection: if $k$ bidders bid using a uniform $[0, 1]$ bid, then the expected payment of the winner is $\frac{k}{k+1}$.
However, \cite{lin2025online} allow arbitrary bidding by the other agents, giving an adversary too much freedom: they prove with this freedom of the adversary and static bidding cannot do better than a $0.6$ fraction of ideal utility.

The issue with the above works is that the robust strategies considered do not result in an equilibrium if used by all players.
In fact, \cite{onyeze2025allocating} proves that even in very simple scenarios, the suggested strategy in the DMMF mechanism does not result in an equilibrium, in fact, DMMF does not have equilibria with static player strategies.
\cite{lin2025robust} design a mechanism that remedies this issue.
They limit each agent $i$ with fair share $\alpha_i$ to at most $\alpha_i T$ requests and, using a complicated randomized allocation scheme, prove that requesting with probability $\alpha_i$ results in $1/2$-robustness and a $\big(1 - \prod_i(1- \alpha_i) \big)$-good equilibrium.
In comparison to our work, their robustness factor $\lrob\etedit{=1/2}$ is lower but their equilibrium factor $\lnash\dledit{=1-\prod_i(1-\alpha_i)}$ can be greater for asymmetric fair shares $(\alpha_i)_i$. They prove that $\lnash = 1-\prod_j (1-\alpha_j)$ is optimal in that no mechanism, even if the principal can see all the values upfront, can guarantee each agent a greater fraction of their ideal utility with the worst-case value distributions. In contrast, our mechanism's $\lnash \geq 1-1/e$ is the best factor that does not depend on the fair shares $\alpha_i$, but also is always worse than the factor of \cite{lin2025robust}, since $\inf_{(\alpha_i)_i}(1-\prod_i(1-\alpha_i))=1-1/e$.

The ideal utility benchmark falls within a wider class of \emph{share-based} approaches to fair allocation problems. A notable parallel notion is that of the AnyPrice Share (APS)~\cite{babaioff2024fair,babaioff2025share}, defined as the value an agent with a given budget can guarantee under \sbedit{any choice of \emph{normalized} item prices (i.e., where the sum of budgets equals the sum of prices)}.  Constant-factor approximations of the APS have been characterized for a variety of full-information one-shot allocation problems; in contrast, we focus on repeated allocation with stochastic private valuations and strategic bidding. The worst-case nature of the APS also makes it much weaker than the ideal utility; for example, an agent with value $1$ for a $1/n$-fraction of items can get a $\Theta(1/n)$-fraction of these in the worst case as all other agents may desire the same items; in contrast, under i.i.d. Bernoulli$(1/n)$ values, she can get \dlreplace{$\approx0.625$}{$\approx 0.63$} fraction of these items.

\section{Preliminaries}
We consider the following canonical setting for repeated online allocation, introduced by \citet{guo2010}. There are $T$ rounds, and in each round $t$, a single indivisible item is available for allocation among $n$ agents. At the start of round $t$, each agent $i$ realizes a private value $V_i[t]$ for the item, drawn from a fixed distribution $\mathcal F_i$, independently across both agents and time. The value $V_i[t]$ becomes known to agent $i$ at the start of round $t$, but is unknown to the principal and other agents. We assume the value distribution $\mathcal F_i$ is nonnegative and bounded by some constant not depending on $T$ (which we take to be 1 without loss of generality). 

At the end of each round $t$, following some mechanism, the principal chooses an agent to allocate the item to, or to not allocate. Let $W_i[t]$ be an indicator of whether agent $i$ wins the round-$t$ item or not, resulting in utility $U_i[t] = V_i[t]W_i[t]$. Each agent seeks to maximize their average per-round utility, $\frac1T\sum_{t=1}^T U_i[t]$.

With symmetric agents (i.e., with identical distributions $\mathcal F_i$ and equal importance), a natural aim for the principal is to allocate to the agent with the highest value. To extend this to heterogeneous agents in the absence of money, we use the benchmark of \textit{ideal utility} introduced by~\citet{gorokh2021remarkable}. Roughly speaking, the ideal utility is the highest expected per-round utility an agent could obtain if they are restricted to winning the item only for a pre-specified fraction of the rounds. 
Formally, we assume each agent has an exogenously given fair share $\alpha_i>0$, where $\sum_i\alpha_i=1$, and define their ideal utility as:
\begin{definition}[Ideal Utility]\label{def:ideal_utility}
The \textit{ideal utility} $v_i^*$ of agent $i$ is the value of the following maximization problem over measurable $\rho:[0,\infty)\to[0,1]$.
\begin{equation} \label{eq:ideal_utility_maximization_problem}
    \max_{\rho}\E_{V_i\sim\mathcal F_i}[V_i\rho(V_i)] \quad\text{subject to}\quad \E_{V_i\sim\mathcal F_i}[\rho(V_i)]\leq\alpha_i
\end{equation}
\end{definition}

An agent's fair share measures an exogenously defined importance of this agent; symmetric fair shares $\alpha_i=1/n$ mean each agent is equally important.
If $\mathcal F_i$ has an absolutely continuous CDF $F_i$, then $v_i^*$ is just the portion of the expectation of $V_i\sim\mathcal F_i$ that comes from the top $\alpha_i$-quantile of $\mathcal F_i$, i.e., $v_i^* = \E_{V_i\sim\mathcal F_i}[V_i\pmb1_{V_i\geq F_i^{-1}(1-\alpha_i)}]$. This is thus a natural benchmark for what an agent can hope to receive. Moreover, with identical $\mathcal F_i$ and equal shares, summing ideal utilities gives the so-called \emph{ex-ante welfare}~\cite{alaei2012bayesian}, which is an upper bound for overall welfare widely used in approximate mechanism design.

Fix a mechanism used to allocate the item. As in prior work, \cite{gorokh2021remarkable,banerjee2023robust,fikioris2025beyond,lin2025online,lin2025robust}, we are interested in \textit{robust} strategies, strategies that guarantee a certain fraction of the agent's ideal utility, regardless of the other agents' strategies (even if the other agents adversarially collude).
\begin{definition}[\bf $\lambda$-robust]
Fix a mechanism and an agent $i$. A strategy $\pi_i$ used by agent $i$ in the mechanism is \textit{$\lambda$-robust} if regardless of the strategies of agents $j\neq i$,
\begin{equation*}
    \frac1T\sum_{t=1}^T \E[U_i[t]] \geq \lambda v_i^*.
\end{equation*}
\end{definition}
Robust strategies are nice in that they guarantee utility for the agent without assumptions the behavior of other agents. In addition, if all agents have a $\lambda$-robust strategy, then at any equilibrium, they must obtain at least a $\lambda$-fraction of their ideal utility; if not, they can deviate to the $\lambda$-robust strategy to gain more utility. In particular, for identical $\mathcal F_i$ and equal shares $\alpha_i=1/n$, if each agent has a $\lambda$-robust strategy, then the ratio of the optimal social walfare and the achieved welfare (price of anarchy) is at most $1/\lambda$.

Our primary focus in this work is on getting mechanisms with good robustness bounds. A secondary goal is to realize such robustness bounds under equilibrium strategies.
Unfortunately, robust strategies are not guaranteed to form an equilibrium, as it is possible that agents may want to deviate to an even higher payoff strategy. Indeed, for natural mechanisms, known robust strategies do not admit any equilibrium~\cite{onyeze2025allocating}. In such cases, a $\lambda$-robust strategy may not help in understanding a mechanism's performance under real agent behavior. To this end, \cite{lin2025robust} offers a mechanism where $\frac12$-robust strategies do form an equilibrium, which motivates the following definition.

\begin{definition}[\bf $\lrob$-robust $\lnash$-good approximate-equilibrium] \label{def:robust_good_equilibrium}
A profile of strategies $(\pi_1,\dots,\pi_n)$ is a \textit{$\lrob$-robust $\lnash$-good approximate-equilibrium} if the following hold.
\begin{enumerate}
\item Each strategy $\pi_i$ is $\lrob$-robust.
\item For some $\epsilon(T) = o(1)$, the profile of strategies $(\pi_1,\dots,\pi_n)$ is an $\epsilon(T)$-equilibrium: no agent can deviate from the strategy profile and gain more than an additive $\epsilon(T)$ in expected per-round utility.\footnote{\dledit{We allow this additive $o(1)$ to account for stochastic deviation.}}
\item In the strategy profile $(\pi_1,\dots,\pi_n)$, each agent obtains at least a $\lnash$-fraction of their ideal utility in expectation.
\end{enumerate}

\end{definition}
By definition, a $\lrob$-robust $\lnash$-good approximate-equilibrium has $\lnash \geq \lrob - \dlreplace{\epsilon}{o(1)}$, but $\lnash$ could potentially be substantially higher. In \cref{sec:equilibrium_mechanism,sec:asymmetric_fair_shares} we improve the result of \cite{lin2025robust} by offering a mechanism where $0.61$-robust strategies form a $(1-1/e)$-good equilibrium.

\section{\texorpdfstring{$0.625$}{0.625}-robustness} \label{sec:robust_mechanism}
In our mechanism, \allPayMechanism, each agent $i$ is endowed with $\alpha_i T$ tokens of artificial credits. At each time $t$, each agent must either \textit{bid} or not.
As we mention in the introduction, having a binary action space limits the possible adversarial behavior of other agents, which is crucial for the improved guarantee over \cite{lin2025online}.
The item is allocated uniformly at random among bidding agents. The winning agent must pay $\paymentConstant k/(1+k)$ artificial tokens where $k$ is the number of bidding agents, and $\paymentConstant$ is a scale factor, chosen later.
As we show in \cref{sec:robust_upper_bound}, this payment scheme is optimal, in that we cannot get better robustness guarantees.
When an agent's budget of artificial tokens becomes non-positive, she is barred from bidding in future rounds.
We formally specify our mechanism in \cref{alg:principal_bids_for_you_all_pay}.

\begin{algorithm}
	\SetAlgoNoLine
	\KwIn{Fair shares $(\alpha_i)_{i\in[n]}$, number of rounds $T$, payment constant $\paymentConstant$}
	Endow each agent with a budget $B_i[1] = \alpha_iT$ of bidding credits\;
	\For{$t=1,2,\dots,T$}{
		Agents either request to bid or not (let $r_i[t]$ be the indicator that agent $i$ requests to bid)\;
		Enforce budgets: $r_i[t]\gets 0$ for each $i$ such that $B_i[t]\leq 0$\;
		Define $S[t] = \{i:r_i[t]=1\}$ to be the set of bidding agents\;
	 	Select a winner uniformly at random from $S[t]$ 
	 	(let $W_i[t]$ be the indicator agent $i$ wins)\;
		Set payments $P_i[t] = \paymentConstant\cdot\frac{|S[t]|}{1+|S[t]|}\cdot W_i[t]$  (note only the winner pays)\;
		Update budgets: $B_i[t+1]\gets B_i[t]-P_i[t]$
	 }
    \caption{Competitive Subsidy Mechanism}
    \label{alg:principal_bids_for_you_all_pay}
\end{algorithm}

Next, we describe our proposed robust strategy. We adopt the notion of an $\alpha$-aggressive strategy from \cite{fikioris2025beyond}, whereby an agent bids only for the values that realize her ideal utility.

\begin{definition}[$\alpha$-aggressive strategy]
Agent $i$ follows a \textit{$\alpha_i$-aggressive strategy} if she bids whenever her budget is positive and her value $V_i[t]$ is in the top $\alpha_i$-quantile of her value distribution.\footnote{If the CDF $\mathcal{F}_i$ has jumps and the top $\alpha_i$-quantile is not uniquely-defined, then the agent randomizes appropriately at the cutoff to bid with probability exactly $\alpha_i$. Formally, the agent bids at time $t$ with probability $\rho(V_i[t])$ where $\rho$ solves Eqn. \eqref{eq:ideal_utility_maximization_problem}.}
\end{definition}

Our main result of this section is as follows, which says that an $\alpha_i$-aggressive strategy is $(0.625-o(1))$-robust, i.e., each agent $i$ guarantees that fraction of ideal utility regardless of other agents' behavior.

\begin{restatable}{theorem}{robustnessTheorem} \label{thm:robustness_theorem}
When running \allPayMechanism with $\paymentConstant=8/3$, an $\alpha_i$-aggressive strategy is $\lambda_i$-robust for some $\lambda_i\geq \frac{5}{8}-O\left(\sqrt{\frac{\log T}{T}}\right)$.
\end{restatable}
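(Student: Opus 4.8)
The plan is to analyze an agent $i$ who plays the $\alpha_i$-aggressive strategy against an arbitrary (possibly colluding, adversarial) choice of behavior by the other agents, and lower-bound her time-averaged expected utility by $\frac58 v_i^*$ minus lower-order terms. The crux is a budget/accounting argument: the agent has $\alpha_i T$ credits and pays $\paymentConstant \frac{k}{k+1} = \frac{8}{3}\cdot\frac{k}{k+1}$ per win when $k$ agents bid. The first key observation is that when the agent bids, she wins with probability $1/k$ conditional on $k$ total bidders (including herself, so $k \ge 1$), so her \emph{expected payment per round in which she bids} is $\frac{1}{k}\cdot\frac{8}{3}\cdot\frac{k}{k+1} = \frac{8}{3(k+1)}$, which is decreasing in $k$. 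Meanwhile her expected utility gained per round in which she bids is $\frac{1}{k}\E[V_i[t]\mid V_i[t]\text{ in top }\alpha_i\text{-quantile}]$ — also scaling like $1/k$, but the ratio of utility-gained to credits-spent behaves like $\frac{k+1}{k}$ times a constant, i.e., it is \emph{best} (most utility per credit) when $k$ is large and \emph{worst} when $k=1$. So the adversary's most damaging move is to never bid, forcing $k=1$ always: then the agent pays $\frac{8}{3}\cdot\frac12 = \frac43$ per win, wins every round she bids, and with $\alpha_i T$ credits she can afford $\frac{\alpha_i T}{4/3} = \frac34 \alpha_i T$ wins before running out — but she \emph{wants} to bid in $\alpha_i T$ rounds (the top $\alpha_i$-quantile).

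This is where the factor $5/8$ must come from, and it is subtler than the naive "$3/4$ of the rounds" bound, because the rounds she is forced to skip are not the low-value ones — by the time the budget runs out she simply stops, losing the tail of her high-value rounds too. The right way to handle this is to compare against the relaxed LP defining $v_i^*$: think of the agent as choosing a bidding probability $\rho(v)$, and observe that the realized allocation she achieves is exactly "bid according to $\alpha_i$-aggressive until the budget hits zero." I would introduce a \emph{reduced target quantile}: instead of trying to win the full top-$\alpha_i$ mass, note the agent can only guarantee winning with some effective rate, and then optimize. Concretely, the worst case over adversary behavior and over value distributions $\mathcal F_i$ reduces (after the per-round conditioning above) to a one-dimensional optimization: the adversary picks, for each round, the number $k-1$ of other bidders; the agent gets utility $\propto 1/k$ and pays $\propto \frac{1}{k+1}$; subject to a total budget $\alpha_i T$, minimize the agent's total utility relative to $v_i^*$. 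Because $\frac{\text{utility}}{\text{payment}}$ is minimized at $k=1$, the adversary sets $k=1$ throughout, and the agent's guarantee becomes: she wins the top rounds until she has spent $\alpha_i T$ credits at rate $4/3$ per win, i.e., she wins a $\frac34$-fraction of the rounds she would ideally win. But $v_i^*$ is the \emph{top} $\alpha_i$-quantile, whereas if the agent instead preemptively restricts to bidding only in the top $\beta_i$-quantile for a well-chosen $\beta_i \le \alpha_i$, she never exhausts her budget and captures the full top-$\beta_i$ mass; the worst-case $\mathcal F_i$ (Bernoulli-type, putting all value mass at the quantile boundary) then makes "$\frac34$ of top-$\alpha_i$" versus "$\beta_i/\alpha_i$ fraction obtained but from a worse starting quantile" trade off, and balancing gives $5/8$. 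Actually the clean statement is that against $k\equiv 1$, bidding $\alpha_i$-aggressively, the agent's utility is at least $\frac58 v_i^*$: she captures the top $\frac34\alpha_i$ of rounds, worth at least $\frac34 v_i^*$ if value were uniform on the quantile, but in the worst Bernoulli instance the effective bound drops to $5/8$; I expect the paper pins this down by a direct computation on $\Bern(\alpha_i)$-type worst cases.

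After establishing the worst-case value-distribution and worst-case adversary bound in expectation, the remaining work is concentration: replacing "expected number of wins / expected budget consumption" by the actual realized trajectory. The budget $B_i[t]$ is a sum of at most $T$ bounded increments, and the number of rounds the agent actually bids before exhaustion is determined by a stopping time; standard martingale concentration (Azuma–Hoeffding / Freedman) gives that the realized consumption and realized utility are within $O(\sqrt{T\log T})$ of their expectations, which after dividing by $T$ yields the $O(\sqrt{\log T / T})$ additive slack in the statement. One has to be a little careful that the randomness of \emph{which} agent wins among bidders (the uniform tie-break) is what's being concentrated, and that the adversary is adaptive — so I would condition on the history and use the fact that, given the set of bidders in round $t$, the agent's win indicator and payment are determined by an independent uniform draw, making the per-round quantities a martingale difference sequence with bounded differences regardless of adversary adaptivity.

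The main obstacle, and where the real content lies, is the tight reduction in the second paragraph: showing that $5/8$ (and not something smaller) is the correct worst-case constant, i.e., simultaneously identifying the adversary's optimal policy (never bid, so $k\equiv1$) \emph{and} the worst value distribution, and verifying the $\alpha_i$-aggressive agent still nets $\ge\frac58 v_i^*$. The adversary-optimality part hinges on the monotonicity of utility-per-credit in $k$, which is where the specific choice $\paymentConstant = 8/3$ and the form $\frac{k}{k+1}$ interact; a wrong payment scale would break the balance. Everything else — the LP characterization of $v_i^*$, the per-round conditioning, the concentration — is routine once that extremal structure is nailed down.
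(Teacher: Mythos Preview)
Your proposal has a genuine gap: it omits the adversary's own budget constraint, and as a consequence misidentifies both the worst-case adversary and the source of the constant $5/8$.

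First, the monotonicity claim is inverted. With $k$ total bidders the agent's expected utility per bid scales like $1/k$ and her expected payment per bid is $\frac{\paymentConstant}{k+1}$, so utility-per-credit scales like $\frac{k+1}{k}$, which is \emph{largest} at $k=1$ and decreases toward $1$ as $k\to\infty$. Accordingly, ``adversary never bids'' is \emph{not} the worst case: if no one else ever bids, agent~$i$ wins every round she requests, pays $\frac{8}{3}\cdot\frac12=\frac{4}{3}$ per win, and with budget $\alpha_i T$ secures $\frac34\alpha_i T$ wins, i.e.\ a $\frac34$-fraction of ideal utility---strictly better than $\frac58$.

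What drives the bound down to $\frac58$ is that the adversary \emph{does} bid against agent~$i$, but is herself budget-limited: the other agents collectively hold only $(1-\alpha_i)T$ credits and also pay $\paymentConstant\frac{k}{k+1}$ when one of them wins. The paper's argument writes this as a linear (or linear-fractional) program over the empirical distribution $(x_k)_k$ of ``number of other bidders,'' with the objective being agent~$i$'s win rate and the binding constraint being the adversaries' aggregate spending. Solving it shows the minimizing $(x_k^*)$ is supported on $\{0,1\}$ other bidders---i.e.\ the adversary fields exactly one opposing bidder as often as her budget permits and is absent otherwise---and the resulting value is $\min\big\{1-\tfrac{1}{\paymentConstant},\,\tfrac{5}{3\paymentConstant}\big\}$ (up to $\alpha_i$-dependent terms), equalized at $\paymentConstant=8/3$ to $\frac58$. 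There are two cases, depending on whether agent~$i$ herself exhausts her budget, giving the two expressions in the minimum.

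Finally, the $\frac58$ has nothing to do with optimizing over the value distribution $\mathcal F_i$. Conditioning on ``agent~$i$ bids,'' her expected value is exactly $v_i^*/\alpha_i$ regardless of $\mathcal F_i$, so the problem reduces cleanly to lower-bounding her \emph{number of wins}; there is no worst-case-$\mathcal F_i$ step and no role for restricting to a smaller quantile $\beta_i<\alpha_i$. Your concentration outline (Azuma--Hoeffding on the win and payment processes, adaptive adversary handled by conditioning on history) is the right idea and matches the paper.
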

We prove the theorem formally in \cref{sec:app:robust_mechanism_proofs} and give a proof sketch here.

\begin{proof}[Proof Sketch]
When using an $\alpha_i$-aggressive strategy, the expected value of $V_i[t]$ when conditioned on requesting is the expected value of $V_i[t]$ conditioned on $V_i[t]$ being in the top $\alpha_i$-quantile. This is exactly $v_i^*/\alpha_i$ by the definition of the ideal utility $v_i^*$. Therefore, to show $\lambda_i$-robustness, we must show that agent $i$ wins at least an $\approx\lambda_i\alpha_i$ fraction of the rounds.

For this sketch, we will only work with expectations, the full proof, that uses standard probability concentration bounds to convert these to high probability statements, is given in \cref{sec:app:robust_mechanism_proofs}.
At a high level, we need to consider two cases: either agent $i$ does not use up all her budget (i.e., $\sum_{t=1}^T P_i[t] < \alpha_iT$), or she runs out of budget (i.e., $\sum_{t=1}^T P_i[t]\geq \alpha_i T$). In the first case, we argue that she gets utility $\frac1T\sum_{t=1}^T W_i[t] \geq \alpha_i(1 - 1/\paymentConstant)$; on the other hand, if she does use up all of her budget, we argue that she gets utility $\frac1T\sum_{t=1}^T W_i[t] \geq 5\alpha_i/(3\paymentConstant)$. By setting $\paymentConstant = 8/3$, we get that the fraction of rounds won is at least $5\alpha_i/8$ in either case.

First, let us consider the case that agent $i$ does not run out of budget. At each time $t$, some number $k$ of agents  $j\neq i$ bid. Let $x_k$ be the fraction of times $t$ that there are $k$ agents $j\neq i$ bidding,
\begin{equation*}
    x_k = \frac1T\cdot\#\big\{t: \#\{j\neq i:r_j[t]=1\}=k\big\}.
\end{equation*}
By the independence of agent $i$'s bid $r_i[t]$ from the others' bids $r_j[t]$ for $j\neq i$, 
if we restrict to only times $t$ in which agent $i$ bids, $x_k$ is also the fraction of these times $t$ that both agent $i$ bids and there are $k$ agents $j\neq i$ bidding at time $t$. Then focusing on expectations\footnote{Formally, we write  $f(T)\approx g(T)$ if $|f(T)-g(T)| \leq o(1)$ with probability at least $1-o(1)$. Similarly, $f(T)\lessapprox g(T)$ if $f(T) \leq g(T) + o(1)$ with probability at least $1-o(1)$.} we have,
\begin{equation}
\label{eq:handwave_agent_utility_not_running_out_of_budget}
    \frac1T\sum_{t=1}^T W_i[t] \approx\alpha_i\sum_{k=0}^{n-1} \frac{x_k}{1+k},
\end{equation}
since at each time $t$, agent $i$ bids with probability $\alpha_i$, and if $k$ others bid, there will be $1+k$ total bidders, and agent $i$ will win with probability $1/(1+k)$ since the mechanism uniformly at random allocates among bidding agents. Agents' $j\neq i$ budget constraint imply (again focusing on expectations and ignoring $o(1)$ terms)
\begin{equation}
\label{eq:handwave_adversary_budget_constraint_not_running_out_of_budget}
    \sum_{k=1}^{n-1}\left((1-\alpha_i)\cdot\frac{\paymentConstant k}{1+k} + \alpha_i\cdot\frac{\paymentConstant k}{2+k}\right)x_k \lessapprox 1-\alpha_i.
\end{equation}
This is because at any time, with probability $1-\alpha_i$, if $k$ other agents $j\neq i$ bid, they pay in total $\frac{\paymentConstant k}{1+k}$ by the allocation rule, and with probability $\alpha_i$, there are $1+k$ total bidding agents, so some agent $j\neq i$ wins with probability $\frac{k}{1+k}$ in which case they pay $\paymentConstant\cdot\frac{1+k}{2+k}$. Agents' $j\neq i$ total budget is $\sum_{j\neq i}\alpha_jT = (1-\alpha_i)T$, so their total expected per-round payment has to be at most $1-\alpha_i$. 
Using \eqref{eq:handwave_agent_utility_not_running_out_of_budget} and \eqref{eq:handwave_adversary_budget_constraint_not_running_out_of_budget}, it suffices to lower bound the value of the following linear program.
\begin{align}
    \min_{(x_k)_{k=0}^{n-1}} \quad & \alpha_i\sum_{k=0}^{n-1} \frac{x_k}{1+k} \nonumber \\
    \text{s.t.} \quad & \paymentConstant\sum_{k=1}^{n-1}\left((1-\alpha_i)\cdot\frac{k}{1+k} + \alpha_i\cdot\frac{k}{2+k}\right)x_k \leq 1-\alpha_i \label{eq:handwave_adversary_budget_constraint_agent_does_not_run_out_of_money_in_optimization_problem} \\
    & \sum_{k=0}^{n-1} x_k = 1 \nonumber \\
    & x_k \geq 0 & \forall k \nonumber
\end{align}
Since this is a linear program with two constraints, its minimum is achieved at some $(x_k^*)$ that only has two nonzero coordinates. It is not hard to show that of these nonzero coordinates must be $x_0^*$ if $\paymentConstant\geq 2$ for \eqref{eq:handwave_adversary_budget_constraint_agent_does_not_run_out_of_money_in_optimization_problem} to be satisfied. Letting $x_{k^*}^*$ be the other nonzero coordinate, one can work out that the objective value is at least
\begin{equation*}
    \alpha_i\sum_{k=0}^{n-1}\frac{x_k^*}{1+k} \geq \alpha_i\left(1 - \frac{(2+k^*)(1-\alpha_i)}{\paymentConstant(2+k^*-\alpha_i)}\right) \geq \alpha_i\left(1 - \frac{3(1-\alpha_i)}{3\paymentConstant - \alpha_i\paymentConstant}\right) \geq \alpha_i\left(1-\frac1\paymentConstant\right).
\end{equation*}

Now let us handle the case where agent $i$ does run out of budget. The argument is similar. Let $\tau$ be the time at which the agent uses up all of her budget. Let $x_k$ be the fraction of times $t\leq \tau$ that there are $k$ agents $j\neq i$ bidding. Agent $i$'s number of wins is as before, but we multiply by $\tau/T$ to account for the fact that agent $i$ runs out of budget early:
\begin{equation} \label{eq:handwave_agent_wins_running_out_of_budget}
    \frac1T\sum_{t=1}^T W_i[t] \approx \frac{\tau}{T}\cdot\alpha_i\sum_{k=0}^{n-1} \frac{x_k}{1+k}.
\end{equation}
The budget constraint on agents $j\neq i$ works similarly:
\begin{equation} \label{eq:handwave_adversary_budget_constraint_running_out_of_budget}
    \frac{\tau}{T}\sum_{k=1}^{n-1} \left((1-\alpha_i)\cdot \frac{\paymentConstant k}{1+k} + \alpha_i\cdot \frac{\paymentConstant k}{2+k}\right)x_k \lessapprox 1-\alpha_i.
\end{equation}
To calculate $\tau$, we calculate that
\begin{equation*}
    \frac1T\sum_{t=1}^\tau P_i[t] \approx \alpha_i\paymentConstant\sum_{k=0}^{n-1}\frac{x_k}{2+k},
\end{equation*}
using the payment rule, since at each time $t$, agent $i$ bids with probability $\alpha_i$, and if $k$ other agents $j\neq i$ bid, there are $1+k$ bidding agents total, so agent $i$ wins with probability $\frac{1}{1+k}$ in which case they pay $\paymentConstant\cdot \frac{1+k}{2+k}$. Since agent $i$ has $\alpha_iT$ tokens total,
\begin{equation} \label{eq:handwave_agent_stopping_time}
    \frac\tau T \approx \frac{\alpha_i T}{\sum_{t=1}^\tau P_i[t]} \approx \frac{1}{\paymentConstant\sum_{k=0}^{n-1}\frac{x_k}{2+k}}.
\end{equation}
Substitute \eqref{eq:handwave_agent_stopping_time} into \eqref{eq:handwave_agent_wins_running_out_of_budget} and \eqref{eq:handwave_adversary_budget_constraint_running_out_of_budget} to see that we now must lower bound the value of the following minimization problem.
\begin{align}
    \min_{(x_k)_{k=0}^n} \quad & \alpha_i\cdot\frac{\sum_{k=0}^{n-1} \frac{x_k}{1+k}}{\paymentConstant\sum_{k=0}^{n-1} \frac{x_k}{2+k}}. \nonumber \\
    \text{s.t.} \quad & (1-\alpha_i)\sum_{k=0}^{n-1} \frac{x_k}{2+k} \geq \sum_{k=1}^{n-1}\left((1-\alpha_i)\cdot\frac{k}{1+k} + \alpha_i\cdot\frac{k}{2+k}\right)x_k \nonumber\\
    & \sum_{k=0}^{n-1} x_k = 1 \nonumber \\
    & x_k \geq 0 & \forall k \nonumber
\end{align}
The rest of the argument is similar to the previous case. We argue that this is the minimization of a linear-fractional function with positive denominator, and thus quasi-concave function, over a polytope, so there is an optimal solution $x_k^*$ with only two nonzero coordinates. We then argue that one of these nonzero coordinates must be $x_0^*$, and letting $x_{k^*}^*$ be the other nonzero coordinates, we show the value of this minimization problem is lower bounded by
\begin{equation*}
    \alpha_i\cdot\frac{3+2k^*-\alpha_i}{\paymentConstant(2+k^*-\alpha_i)} \geq \alpha_i\cdot \frac{5-\alpha_i}{\paymentConstant(3-\alpha_i)} \geq \frac{5\alpha_i}{3\paymentConstant},
\end{equation*}
as desired.
\end{proof}
\section{Upper Bound on the Robustness under an Arbitrary Payment Rule} \label{sec:robust_upper_bound}

In this section, we consider payment schemes different than the ones in \cref{sec:robust_mechanism}, and show that the one used in our \allPayMechanism achieves optimal robustness under static strategies. We come up with an optimization problem that bounds this robustness factor and then solve it numerically to get the desired upper bound.

Consider the generalization of the \allPayMechanism where agents pay some $p_k \ge 0$ when there are $k$ bidding agents (with our mechanism being a special case where $p_k = \paymentConstant \cdot k/(1+k)$). We call this generalization \allPayMechanismGeneralCost{} and bound its performance under any payment scheme.

We consider an agent $i$ following the $\alpha_i$-aggressive strategy and all other agents (with a combined budget $(1 - \alpha_i) T$) coordinating so that at each time, exactly $k$ agents other than $i$ bid.
Under such simple strategies, we can explicitly calculate how many rounds agent $i$ wins in expectation, which is equal to the fraction of ideal utility she receives by the definitions of ideal utility and $\alpha_i$-aggressive strategy.
Given the commitment to these strategies, agents $j \ne i$ are picking $k$ to minimize $i$'s utility, while the mechanism picks the payment scheme to maximize it.
The harder part is limiting the search space of this optimization problem.
We consider agents $j \ne i$ only bidding $1$ or $2$ at a time, making only the payments $p_1, p_2, p_3$ relevant for the mechanism designer.
While in principle considering more agents bidding at the same time could harm agent $i$ more, we show below that even with this strategy by the other players, we can argue that the $0.625$ bound is best possible.
Next, we prove upper bounds on these three payments, given that having them too high can make agents run out of budget even without adversarial competition.

This process is shown in the next theorem, where $\mu(p_1, p_2, p_3, k)$ is the number of rounds agent $i$ wins, which is minimized over $k$ and maximized over $p_1, p_2, p_3$.
In the computation of $\mu(\cdot)$, $\gamma$ is the fraction of rounds where agents $j \ne i$ still have budget remaining out of all the rounds that agent $i$ still has budget remaining.
In \cref{sec:app:robust_upper_bound_proofs}, we work out and prove the above computation in detail, which considers agents with equal fair shares $1/n$ and $n \to \infty$.

\begin{restatable}{theorem}{robustnessUpperBound} \label{thm:robustness_upper_bound}
Suppose $\lambda$ is greater than the value of the following optimization problem.
\begin{equation*}
    \max_{\substack{0< p_1\leq 2e\\0\leq p_2\leq 4e\\0\leq p_3\leq 12e}} \min_{k\in\{1,2\}}\mu(p_1,p_2,p_3,k)
\end{equation*}
where
\begin{align*}
    \mu(p_1,p_2,p_3,k) := \max_{\gamma} & \left(\left(\frac{\gamma}{1+k} + (1-\gamma)\right)\min\left\{1, \frac{1}{\frac{p_{k+1}}{k+1}\cdot\gamma + p_1(1-\gamma)}\right\}\right)\\
    \mathrm{s.t.}&  \quad 1\geq \gamma \geq \min\left\{1, \max\left\{\frac{1}{p_k},\frac{p_1}{p_k - \frac{p_{k+1}}{k+1} + p_1} \right\}\right\}\\
	& \quad \gamma\in\left\{\min\left\{1, \max\left\{\frac{1}{p_k},\frac{p_1}{p_k - \frac{p_{k+1}}{k+1} + p_1} \right\}\right\}, \frac{p_1}{\frac{p_{k+1}}{k+1}-p_1}\right\}.
\end{align*}
Then, there exists a number of players $n$ such that with equal fair shares $\alpha_i=1/n$, an $1/n$-aggressive strategy has a robustness factor at most $\lambda + O\left(\sqrt{\nicefrac{\log T}{T}}\right)$ in \allPayMechanismGeneralCost{} no matter the choice of the payment scheme $(p_k)_k$.
\end{restatable}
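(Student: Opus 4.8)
\textit{Proof plan.}

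\textbf{Overall reduction.} The plan is to show that for any payment scheme $(p_k)_{k\ge1}$ and any sufficiently large $n$, the $1/n$-aggressive strategy in \allPayMechanismGeneralCost{} has robustness factor at most $\min_{k\in\{1,2\}}\mu(p_1,p_2,p_3,k)+O(\sqrt{\log T/T})$ when $(p_1,p_2,p_3)$ lies in the box $[0,2e]\times[0,4e]\times[0,12e]$, and at most $5/8$ otherwise. This suffices: writing $\Lambda^*$ for the value of the displayed program, applying this bound to the scheme $p_k=\paymentConstant k/(k+1)$, $\paymentConstant=8/3$ (which is already in the box) and combining with the lower bound $5/8-O(\sqrt{\log T/T})$ of \cref{thm:robustness_theorem} and the fact that $\min_k\mu$ has no $T$-dependence yields $\Lambda^*\ge5/8$, so $\lambda>\Lambda^*\ge5/8$; hence any scheme has robustness at most $\min_k\mu+O(\sqrt{\log T/T})\le\Lambda^*+O(\sqrt{\log T/T})<\lambda+O(\sqrt{\log T/T})$ when in the box and at most $5/8<\lambda$ otherwise. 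We take $n$ large enough, uniformly over the box, that the finite-$n$ corrections below stay under $(\lambda-\Lambda^*)/2$; since those corrections are $O(1/n)$ uniformly over the compact box, a fixed $n=n(\lambda-\Lambda^*)$ works.

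\textbf{Reduction to the box.} If $p_1>2e$, the adversary in which no other agent bids lets $i$ win and pay $p_1$ on each of her $\approx\alpha_iT$ bids, so she runs out after $<\alpha_iT/(2e)$ wins — below $1/(2e)$ of her ideal rounds — and robustness is $<1/(2e)<5/8$; assume then $p_1\le2e$. If $p_2>4e$, the adversary in which exactly one other agent bids each round spends $\approx p_1\le2e$ per round, hence outlasts the $\approx2T/p_2<T/(2e)$ rounds that $i$ survives while facing that competitor (winning, and paying $p_2$, with probability $1/2$), so $i$ wins $<\alpha_iT/(4e)$ times and robustness is $<1/(4e)$; assume then also $p_2\le4e$. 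The identical argument with two competitors (adversary spend $\approx p_2\le4e$ per round, $i$ winning w.p.\ $1/3$) shows $p_3>12e$ gives robustness $<1/(12e)$. So assume $(p_1,p_2,p_3)$ lies in the box.

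\textbf{The adversary and the win count.} Fix $k:=\argmin_{k'\in\{1,2\}}\mu(p_1,p_2,p_3,k')$ and have the $n-1$ other agents coordinate — rotating who bids so their budgets deplete together — to keep exactly $k$ of them bidding each round until their pooled budget $(1-\alpha_i)T$ is spent (possible for $n\ge3$, with a sharp deadline for $n,T$ large). Then only $1,2,3$ agents ever bid at once, so only $p_1,p_2,p_3$ are charged. Since $i$'s bid is independent of $V_i[t]$ given she bids and $\E[V_i[t]\mid i\text{ bids}]=v_i^*/\alpha_i$ by definition of ideal utility, standard martingale concentration gives $i$'s per-round utility as $\tfrac{v_i^*}{\alpha_i}\cdot\tfrac1T\sum_tW_i[t]\pm O(\sqrt{\log T/T})\,v_i^*$, so it remains to bound $\tfrac1{\alpha_iT}\sum_tW_i[t]$. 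While she has budget, $i$ bids in a $\approx\alpha_i$ fraction of rounds, so the other agents' per-round spend is $p_k\pm O(1/n)$ and they bid until a sharply-defined round $\approx\min\{T,T/p_k\}$, after which $i$ bids alone. Let $\gamma$ be the fraction of $i$'s active rounds on which the others still have budget, and $\tau T$ the round $i$ exhausts her budget ($\tau:=1$ if never). Since $i$'s expected per-bid spend is $p_{k+1}/(k+1)$ in the competitive phase and $p_1$ in the solo phase, her budget identity gives $\tau=\min\{1,\,1/(\gamma\tfrac{p_{k+1}}{k+1}+(1-\gamma)p_1)\}$, and matching the competitive-phase length to the adversary's deadline gives $\gamma\tau=\min\{\tau,1/p_k\}$. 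A short case analysis (whether $i$ exhausts her budget, and whether the others do so before $i$) shows the realized $\gamma$ equals $A:=\min\{1,\max\{1/p_k,\,p_1/(p_k-\tfrac{p_{k+1}}{k+1}+p_1)\}\}$, the first of the two candidates in the statement; since $A$ is always feasible, replacing it by the maximum over both candidates only weakens the bound. Finally $i$ wins a $1/(1+k)$ fraction of her competitive bids and all of her solo bids, so $\tfrac1{\alpha_iT}\sum_tW_i[t]=\big(\tfrac{\gamma}{1+k}+(1-\gamma)\big)\tau\pm O(\sqrt{\log T/T})\le\mu(p_1,p_2,p_3,k)+O(\sqrt{\log T/T})$, which with the choice of $k$ completes the reduction.

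\textbf{Main obstacle.} The crux is the dynamics analysis: making the ``expected spend per bidding round'' identities rigorous despite budgets and bids being adaptive (a martingale bound, which carries the $\sqrt{\log T/T}$ error), verifying the adversary can hold exactly $k$ bidders up to a sharp deadline with $O(1/n)$ finite-$n$ slack uniform over the box, and the case analysis that pins the realized $\gamma$ to the stated value and checks its feasibility constraints. The box reduction, the reduction to three relevant payments, and the value-stream concentration are routine.
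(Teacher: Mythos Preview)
Your overall plan—construct the adversary that keeps exactly $k\in\{1,2\}$ other bidders active and analyze the resulting two-phase (competitive/solo) dynamics—matches the paper's, and your box reduction is a clean alternative to the paper's \cref{lem:crude_upper_bounds}, which instead has \emph{all} agents play the $1/n$-aggressive strategy and derives $\E[p_{1+X}/(1+X)]\le 2$ for $X\sim\Binom(n-1,1/n)$ via a martingale on the collective spend. Your direct adversaries are simpler, though you should separately dispose of $p_1=0$, which the theorem's box excludes and yours does not.

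The real gap is the claim that ``the realized $\gamma$ equals $A$.'' It does not in general. Take $k=1$, $p_1=2$, $p_2=10$ (inside the box): then $p_{k+1}/(k+1)=5>p_k=2$, so agent $i$'s per-bid spend exceeds the adversary's per-round spend and $i$ runs out \emph{first}, giving $\gamma=1$, whereas $A=\min\{1,\max\{\tfrac12,\,2/(2-5+2)\}\}=\tfrac12$. More generally, your case analysis misses the sub-case $p_k>1$ and $p_1<p_{k+1}/(k+1)-p_k$, in which $\gamma=1\neq A$. Your ``weaken to the max over both candidates'' does not repair this, since you give no reason the second candidate $B=p_1/(p_{k+1}/(k+1)-p_1)$ should bound anything—it just appears in the statement. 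The paper's route is to prove only the inequality $\gamma\ge A$ (which is all the budget constraints yield; see \cref{lem:stopping_time_comparisons}), observe that the utility bound $f(\gamma)=\big(\tfrac{\gamma}{1+k}+(1-\gamma)\big)\min\{1,1/(\gamma\,p_{k+1}/(k+1)+(1-\gamma)p_1)\}$ is the minimum of two linear-fractional functions of $\gamma$ and hence attains its maximum over $[A,1]$ only at an endpoint or at their crossing point $B$, and then discard the endpoint $\gamma=1$ because $f(1)\le 1/(1+k)\le 1/2$ while the achievable robustness exceeds $1/2$ by \cref{thm:robustness_theorem}. It is this structural step—not a case analysis pinning down $\gamma$—that produces the two-point candidate set $\{A,B\}$ in the statement.
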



We numerically brute-force optimized the optimization problem in \cref{thm:robustness_upper_bound} by discretizing the space of $(p_1,p_2,p_3)$ and evaluating the objective function at each $(p_1,p_2,p_3)$ in the discretized space. In our code, we found no $(p_1,p_2,p_3)$ such that the objective is more than $0.625$, giving numerical evidence that the value of the optimization problem is $0.625$.\footnote{Our code can be found at \url{https://github.com/davidxlin/optimization-problem-for-robustness}}

\begin{numericalresult}
By numeric calculations, the expression of \cref{thm:robustness_upper_bound} is at most $0.625$, which indicates that there is no payment rule $(p_k)_k$ that makes the $\alpha_i$-aggressive strategy better than $\left(0.625 + \omega\left(\sqrt{\nicefrac{\log T}{T}}\right)\right)$-robust for all $n$.
\end{numericalresult}

\section{Making the Robust Strategies Form an Equilibrium} \label{sec:equilibrium_mechanism}

While robust strategies like in the previous section are nice, they do not always form an equilibrium. Previous work \cite{gorokh2021remarkable,fikioris2025beyond,banerjee2023robust,lin2025online} proposed robust strategies in different mechanisms for this setting that do not form an equilibrium.
On the other hand, \cite{onyeze2025allocating} designed strategies that form an equilibrium but offer no robustness.
\citet{lin2025robust} achieve both types of results by designing a mechanism that has a $1/2$-robust $\left(1 - \prod_{j\in[n]}(1-\alpha_j)\right)$-good equilibrium (see \cref{def:robust_good_equilibrium}). They also note that the $1 - \prod_j(1-\alpha_j)$ factor is optimal in that even if the principal is able to see all the values, they cannot guarantee a higher fraction of ideal utility than $1 - \prod_j(1-\alpha_j)$ to every agent in expectation. 

In this section, we explore if our mechanism achieves similar results.
First, we shall argue that in our mechanism, if we choose $\paymentConstant$ as suggested by the previous section, each player using an $\alpha_i$-aggressive strategy does not form an equilibrium. However, if we choose $\paymentConstant$ slightly differently and make a slight modification of the mechanism by forcing players to bid at least an $\alpha_i$ fraction of the time, we can make the $\alpha_i$-aggressive strategies form an equilibrium at the expense of sacrificing the robustness factor a little bit. Here, we demonstrate how to make this modification to $\paymentConstant$ in the symmetric case where $\alpha_i=1/n$. In \cref{sec:asymmetric_fair_shares}, we give a reduction from the general $\alpha_i$ case to the symmetric case.

First, let us argue why when choosing the payment constant $\paymentConstant = 8/3$ as in the previous section, everyone playing the robust strategy is not necessarily an equilibrium. Assume every agent is playing the robust strategy, so they each bid with probability $1/n$ independently across agents and time, so long as they have budget. Then, at any given time that everyone has budget remaining, for any agent $i$, the number of other agents $j\neq i$ that bid is distributed as $X\sim\Binom(n-1,1/n)$. Therefore, conditioned on agent $i$ bidding, agent $i$'s expected payment is $\E\left[\frac{\paymentConstant}{2+X}\right]$ since conditioned on the number of other agents bidding $X$, agent $i$ wins with probability $\frac{1}{1+X}$, in which case they pay $\paymentConstant\cdot\frac{1+X}{2+X}$. By substituting in $\paymentConstant = 8/3$ and computing $\E[1/(2+X)]$, which we do in \cref{prop:expectation_of_functions_of_binom}, we can show that this expected payment is less than $1$ for $n\geq 3$. Since this is less than $1$, agents have $T/n$ budget, and are bidding with probability $1/n$, this means that agents will have $\Omega(T)$ budget remaining at the end of the mechanism with high probability. Clearly, for value distributions that are nonzero with probability greater than $1/n$, the agents are not best responding: they should bid more to use more of their budget.

However, the above calculation suggests the following idea. We should set $\paymentConstant = 1/\E[1/(2+X)]$, and then agents will be spending their budget exactly in expectation. We still obtain high robustness beating all previous work with this choice of $\paymentConstant = 1/\E[1/(2+X)]  = (n+1)/(1 + n(1-1/n)^{n+1})\approx e$: the robustness factor (using the calculations in the proof of \cref{thm:robustness_theorem}) becomes
\begin{equation*}
    \min\left\{1 - \frac{3(1-1/n)}{3\paymentConstant - \paymentConstant/n}, \frac{5-1/n}{\paymentConstant(3-1/n)}\right\} \geq \frac{5}{3e} \approx 0.61.
\end{equation*}

Notice that the mechanism allocates the item so long as there is at least one bidder. Each bidder playing a \oneOverNAggressiveStrategy{} bids with probability $1/n$ as long as they have budget remaining. If everyone plays a \oneOverNAggressiveStrategy{}, by the fact that agents spend their budget exactly in expectation and by concentration inequalities, no agent runs out of budget too early with high probability. Then, there are $1-(1-1/n)^n$ rounds in which the item is allocated in expectation. 
By the strategies and symmetry, this will be the fraction of ideal utility each agent achieves, matching the optimal as shown in \cite{lin2025robust}.

What is left to do is to argue that choosing $\paymentConstant$ in this way is indeed an equilibrium. Intuitively, for a fixed agent $i$, they have two potential deviations: they could bid more often, or they could bid less often. By bidding more often, agent $i$ spends more artificial currency. We can also see that agent $i$ bidding more often decreases the spending of other agents as follows. In a given round, conditioned on there being $k$ bidders, the expected payment of a bidder is $\paymentConstant/(k+1)$: each bidder wins with probability $1/k$ in which case they pay $\paymentConstant k/(k+1)$. Hence, fixing the strategies of agents $j\neq i$, agent $i$ bidding more often increases the number of bidders $k$ at each timestep, lowering the other agents' payments. Therefore, agent $i$ will run out of money quicker with the same probability of winning each round (conditioned on bidding), so they will obtain more lower-valued rounds at the beginning instead of spacing their wins.

Whether or not agent $i$ should bid less is slightly more nuanced. In fact, if we retain the same mechanism as in \allPayMechanism, agent $i$ may want to bid less. For example, if agent $i$'s value distribution were $\mathcal F_i = \Bern(1/(2n))$, then a \oneOverNAggressiveStrategy{} would imply that agent $i$ is bidding sometimes when they have $0$ value. Instead, they should not bid when they have $0$ value, which would cause others' payments to go up (using the previous reasoning about the other agents' spending), and then when other agents run out of budget, agent $i$ will have a higher probability of winning on rounds that they actually have value $1$.

To accommodate this, we modify the mechanism to enforce that at each time $t$, each agent must have bid at least $t/n - o(T)$ times; otherwise, the principal will force them to bid. Then, attempting to underbid is not a helpful strategy. We describe the mechanism formally in \cref{alg:principal_bids_for_you_all_pay_with_bidding_minimum}, where the only difference from \cref{alg:principal_bids_for_you_all_pay} is \hyperref[line:minimum_bidding_requirement]{the enforcement of the minimum bidding requirement}.

\begin{algorithm}
\SetAlgoNoLine
\KwIn{Number of rounds $T$, payment constant $\paymentConstant$, Underbidding allowance $\epsilon = o(T)$ }
Endow each agent with a budget $B_i[1] = T/n$ of artificial credits\;
\For{$t=1,2,\dots,T$}{
    Endow each agent with a budget $B_i[1] = \alpha_iT$ of bidding credits\;
    \For{$t=1,2,\dots,T$}
    {
    Agents either request to bid or not (let $r_i[t]$ be the indicator that agent $i$ requests to bid)\;	    
    \textbf{Enforce bidding minimums: $r_i[t]\gets 1$ for each $i$ such that $\sum_{s=1}^t r_i[s] \leq \frac tn - \epsilon$} \label{line:minimum_bidding_requirement}\;
    Enforce budgets: $r_i[t]\gets 0$ for each $i$ such that $B_i[t]\leq 0$\;
    Define $S[t] = \{i:r_i[t]=1\}$ to be the set of bidding agents\;
    Select a winner uniformly at random from $S[t]$ 
    (let $W_i[t]$ be the indicator agent $i$ wins)\;
    Set payments $P_i[t] = \paymentConstant\cdot\frac{|S[t]|}{1+|S[t]|}\cdot W_i[t]$  (note only the winner pays)\;
    Update budgets: $B_i[t+1]\gets B_i[t]-P_i[t]$}
 }
    \caption{Competitive Subsidy Mechanism with Bidding Minimum}
    \label{alg:principal_bids_for_you_all_pay_with_bidding_minimum}
\end{algorithm}

Our main result of this section is as follows, proved formally in \cref{sec:app:equilibrium_mechanism_proofs}.
\begin{restatable}{theorem}{equilibriumTheorem} \label{thm:equilibrium_mechanism}
Consider \allPayMechanismWithBiddingMinimum{} with payment constant $\paymentConstant = (n+1)/(1+n(1-1/n)^{n+1})$ and underbidding allowance $\epsilon = \sqrt{T\log T}$. Then, when players have equal shares, every agent playing a \oneOverNAggressiveStrategy{} is a $\lrob$-robust $\lnash$-good approximate-equilibrium for some $\lrob$ and $\lnash$ satisfying
\begin{equation*}
    \lrob \geq \frac{5}{3e}-O\left(\sqrt{\frac{\log T}{T}}\right),\; \lnash \geq 1-\left(1-\frac1n\right)^n - O\left(\sqrt{\frac{\log T}{T}}\right).
\end{equation*}

\end{restatable}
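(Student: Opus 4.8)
The plan is to verify the three requirements of \cref{def:robust_good_equilibrium} in turn, reusing the linear program from the proof of \cref{thm:robustness_theorem} and the binomial identity (\cref{prop:expectation_of_functions_of_binom}) that makes $\paymentConstant = 1/\E[1/(2+X)]$ for $X\sim\Binom(n-1,1/n)$. \emph{Robustness (item 1).} A \oneOverNAggressiveStrategy{} bids with probability exactly $1/n$ while solvent, so by a Bernstein/Azuma bound her cumulative bid count stays in $[t/n-\epsilon,\,t]$ for all $t$ simultaneously with probability $1-o(1)$; on that event the minimum-bidding enforcement in \cref{alg:principal_bids_for_you_all_pay_with_bidding_minimum} never fires for agent $i$, so she faces exactly \allPayMechanism. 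Hence the LP lower bound from the proof of \cref{thm:robustness_theorem} applies with $\alpha_i = 1/n$ and this $\paymentConstant$, giving a robustness factor at least $\min\{1 - \tfrac{3(1-1/n)}{\paymentConstant(3-1/n)},\, \tfrac{5-1/n}{\paymentConstant(3-1/n)}\} - O(\sqrt{\log T/T})$. Two elementary checks finish this: $\paymentConstant \ge 2$, since $1 + n(1-1/n)^{n+1} = 1 + (n-1)(1-1/n)^n < 1 + (n-1)/2 = (n+1)/2$ (using $(1-1/n)^n < 1/e < \tfrac12$), and $\paymentConstant\ge 2$ is precisely the hypothesis the LP argument uses; and the displayed minimum is $\ge 5/(3e)$ for every $n$ (equal to $\tfrac34$ at $n=2$ and decreasing to $5/(3e)$ as $n\to\infty$).

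\emph{Equilibrium quality (item 3).} When every agent is \oneOverNAggressiveStrategy{}, each bids i.i.d.\ with probability $1/n$ while solvent, and the choice $\paymentConstant = 1/\E[1/(2+X)]$ makes each agent's expected per-round payment exactly $1/n$, equal to her budget rate. Concentration then shows that with probability $1-o(1)$ no agent is barred and no bidding minimum fires before round $T - O(\sqrt{T\log T})$ --- this is where $\epsilon = \sqrt{T\log T}$ is calibrated: large enough that honest play never triggers the forcing, small enough to bite against deviators. On that event the item is allocated in a $1-(1-1/n)^n$ fraction of rounds, agent $i$ wins a $\tfrac1n\E[1/(1+X)] = \tfrac1n(1-(1-1/n)^n)$ fraction of rounds, and each winning round carries one of her top-$1/n$ values; multiplying by $\E[V_i \mid V_i \text{ in top }1/n] = v_i^*/\alpha_i$ gives per-round utility $(1-(1-1/n)^n)v_i^* - O(\sqrt{\log T/T})$, so $\lnash \ge 1-(1-1/n)^n - O(\sqrt{\log T/T})$.

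\emph{The equilibrium condition (item 2), the crux.} Since item 3 lower-bounds the honest payoff, it suffices to prove the matching upper bound: against the other $n-1$ agents playing \oneOverNAggressiveStrategy{}, every deviation $\pi_i'$ has per-round utility at most $(1-(1-1/n)^n)v_i^* + o(1)$. Condition on the $(1-o(1))$-probability event that each other agent's bid count concentrates, so no other agent is ever forced; then the number $N_t$ of other bidders in each round is stochastically dominated by $X\sim\Binom(n-1,1/n)$, with equality exactly when all others are solvent. Let $\mathcal B$ be agent $i$'s adaptive set of bidding rounds. Since $V_i[t]$ is fresh and the winner is an independent uniform coin, $N_t \perp (V_i[t],\mathbf 1_{t\in\mathcal B})$ given the history, so $\E[U_i[t]\mid\text{history}] = \E[V_i[t]\mathbf 1_{t\in\mathcal B}\mid\text{history}]\cdot w_t$ where $w_t := \E[1/(N_t+1)\mid\text{history}]$; write $w_t = (1-(1-1/n)^n) + \beta_t$, and let $c_t := \paymentConstant\,\E[1/(N_t+2)\mid\text{history}]$ be agent $i$'s conditional per-bid expected payment. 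Stochastic dominance gives $\beta_t \ge 0$ and $c_t \ge \paymentConstant\,\E[1/(X+2)] = 1$; the budget bound $\sum_{t\in\mathcal B}c_t \le T/n + O(1)$ and the bidding minimum $|\mathcal B| \ge T/n - \epsilon$ together force $|\mathcal B| \le T/n + O(1)$ and $\sum_{t\in\mathcal B}(c_t - 1) \le \epsilon + O(1) = o(T)$; and a monotone coupling $N_t\le X'$ with $X'\sim X$ gives $\tfrac1{N_t+1} - \tfrac1{X'+1} = \tfrac{X'-N_t}{(N_t+1)(X'+1)} \le 4\bigl(\tfrac1{N_t+2} - \tfrac1{X'+2}\bigr)$, hence $\beta_t \le (4/\paymentConstant)(c_t-1)$ and $\sum_{t\in\mathcal B}\beta_t = o(T)$. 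Summing over $t$ and using $\beta_t\ge 0$, $V_i\le 1$ yields $\tfrac1T\sum_t\E[U_i[t]] \le (1-(1-1/n)^n)\cdot\tfrac1T\E\bigl[\sum_{t\in\mathcal B}V_i[t]\bigr] + o(1)$, and since $|\mathcal B|\le T/n$ the remaining expectation is at most the offline optimum of keeping the $T/n$ largest of $T$ i.i.d.\ values, which is $\le Tv_i^* + o(T)$ by the definition of ideal utility. The $o(1)$-probability bad event contributes $o(1)$ since per-round utility is at most $1$.

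\emph{Main obstacle.} The delicate part is exactly the last paragraph: ruling out the deviation in which agent $i$ underbids slightly so as to drain the other agents' budgets early and then exploit rounds with little competition. The resolution above is that the bidding minimum caps the total ``excess payment'' $\sum_{t\in\mathcal B}(c_t-1)$ at $o(T)$, while the stochastic-dominance coupling shows the matching ``excess win probability'' $\sum_t\beta_t$ is of the same order, so any such maneuver is worth only $o(T)$ extra utility; symmetrically, bidding more is self-defeating because each bid already costs at least $1$ in expectation, capping $|\mathcal B|$ at $\approx T/n$. The residual effort is bookkeeping: making the adaptivity and conditional-independence structure precise, and arranging the good event to be (essentially) filtration-adapted so that the per-round factorization and the offline-optimum comparison survive the conditioning --- standard, but it has to be executed with care.
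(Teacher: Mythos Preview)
Your items~1 and~3 match the paper's proof closely: the paper also couples the bidding-minimum mechanism to the plain \allPayMechanism{} on the high-probability event that the honest agent's bid count never triggers the forcing rule (its Lemma~D.3), then invokes the LP bound (its \cref{thm:general_robustness_claim}) and a separate check that the displayed minimum is~$\ge 5/(3e)$ for all~$n$. Likewise, for item~3 the paper shows no agent runs out before $T-O(\sqrt{T\log T})$ when everyone is honest, then computes the win rate via $\E[1/(1+X)]$.

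For item~2, however, you take a different and more involved route, and there is a gap. The paper's approach is to first prove directly (its Lemma~D.4) that the \emph{other} agents never run out of budget before $T-O(\sqrt{T\log T})$, \emph{regardless of agent~$i$'s deviation}. The argument is short: while the others are solvent, each one's expected payment per round is $\frac{\paymentConstant}{n}\bigl((1-r_i[t])\E[\tfrac1{2+Y}]+r_i[t]\E[\tfrac1{3+Y}]\bigr)$ with $Y\sim\Binom(n-2,1/n)$, which is monotone decreasing in $r_i[t]$; the bidding-minimum rule forces $\sum_{s\le t}r_i[s]\ge t/n-\epsilon$, and plugging this in gives payment rate at most $1/n+o(1)$. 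With Lemma~D.4 in hand, $N_t\sim\Binom(n-1,1/n)$ \emph{exactly} for all but $o(T)$ rounds, so your $c_t\equiv 1$ and $\beta_t\equiv 0$; the utility upper bound then reduces immediately to $|\mathcal B|\le T/n+o(T)$ and the $\beta$-ideal-utility/concavity step.

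Your route instead allows $N_t$ to be stochastically smaller than $\Binom(n-1,1/n)$ and controls the excess $\sum_{t\in\mathcal B}\beta_t$ via the coupling $\beta_t\le(4/\paymentConstant)(c_t-1)$ and the budget identity $\sum_{t\in\mathcal B}(c_t-1)\le\epsilon+O(1)$. The coupling inequality is correct and nice, but the final bound relies on $|\mathcal B|\ge T/n-\epsilon$, which you attribute to the bidding minimum. This fails precisely when agent~$i$ herself runs out of budget: in \allPayMechanismWithBiddingMinimum, budget enforcement comes \emph{after} the forcing rule, so once $B_i[t]\le 0$ the forced bid is overwritten to~$0$ and the running count $\sum_{s\le t}r_i[s]$ can fall arbitrarily below $t/n-\epsilon$. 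You need a separate argument that (a)~while agent~$i$ is solvent the others are too (so $c_t=1$ on $\mathcal B$), and (b)~after~$\tau_i$ she earns nothing --- but (a) is exactly the content of Lemma~D.4, and once you have it the whole $\beta_t$ apparatus becomes redundant. The cleanest fix is to prove Lemma~D.4 first and then run your argument with $\beta_t\equiv 0$.
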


\section{Robust Equilibrium with Asymmetric Fair Shares} \label{sec:asymmetric_fair_shares}

In this section, we generalize \cref{sec:equilibrium_mechanism} to asymmetric fair shares. 
The issue with \allPayMechanism{} is that it allocates the item uniformly at random among the bidding agents.
While this is not an issue for robustness claims (all such results of previous sections hold for arbitrary fair shares), it is an issue for the equilibrium claim.
In fact, to get better than $1/2$ ideal utility guarantees in equilibrium, we have to allocate the item asymmetrically.
As before, to get equilibrium guarantees, we want a payment scheme such that agents use their budget exactly in expectation.
Consider such a scheme and the following simplified setting: two agents with fair shares $\alpha$ and $1-\alpha$ that request the item with probabilities $\alpha$ and $1 - \alpha$, respectively.
Allocating uniformly at random makes the agent with fair share $\alpha$ win $\alpha(\alpha+(1-\alpha)/2) = \alpha(1 + \alpha)/2$ fraction of the rounds.
This results in $1/2$ fraction of her ideal utility when $\alpha$ is small, which is smaller than the robustness guarantee when not setting $\paymentConstant$ to obtain an equilibrium.

To remedy this issue and extend our mechanism, we simulate agents with different fair shares with multiple ``small'' agents.
The basic idea is that if the fair shares $\alpha_i$ are all rational with common denominator $m$\footnote{If the fair shares $\alpha_i$ are irrational, or if the common denominator is large, we can approximate the fair shares with rational fair shares with small denominators and obtain approximate guarantees. \dledit{Specifically, we show in \cref{sec:approximate_fair_shares} that it suffices to approximate the fair shares with rational numbers with denominators at least $1/(2\min_i\alpha_i\epsilon)$ to obtain a $(1-\epsilon)$-fraction of our guarantees on the achieved fraction of ideal utility.}}, we run \allPayMechanismWithBiddingMinimum{} with $m$ agents where agent $i$ gets to control $k_i :=\alpha_i m$ simulated agents in \allPayMechanismWithBiddingMinimum{}.

Implementing this idea naively does not quite work to obtain the same equilibrium behavior, where each simulated agent is bidding independently across rounds with probability $1/m$. There is no particular reason why an agent $i$ controlling multiple simulated agents should have the simulated agents bid independently of each other. Instead, we shall have the principal force some level of independence by only allowing an agent to request whether they want at least one simulated agent bidding or not. If agent $i$ requests to bid at time $t$, then the principal shall sample requests to bid $(\hat r_{(i,1)}[t], \dots, \hat r_{(i,k_i)}[t])$ distributed as i.i.d. $\Bern(1/m)$ conditioned on at least one of them being nonzero to use as the requests in the simulated agents that $i$ controls.

If agent $i$ uses a \asymmetricStrategy{i}, where a $\beta$-aggressive strategy is as before, requesting whenever the value is in the top $\beta$-quantile of the value distribution, then the $\hat r_{(i,i')}[t]$ will be i.i.d. $\Bern(1/m)$. This emulates each simulated agent playing a $1/m$-aggressive strategy.
If each simulated agent wins at least $\lambda_i/m$ rounds, then agent $i$ will win $\lambda_ik_i/m$ rounds, which then implies robustness and equilibrium utility lower bounds. In particular, if we use $\paymentConstant = (m+1)/(1+m(1-1/m)^{n+1})$ as in \cref{sec:equilibrium_mechanism}, a \asymmetricStrategy{i} is $(5/(3e) - o(1))$-robust, and if each agent $i$ plays a \asymmetricStrategy{i}, each agent obtains a $1-\left(1-\frac1m\right)^m - o(1) \geq 1-1/e-o(1)$ fraction of their ideal utility.

Using the same arguments as in \cref{sec:equilibrium_mechanism}, if we put a minimum bidding constraint, each agent playing this \asymmetricStrategy{i}{} is an equilibrium. The way we have set the mechanism up, agents can only control the frequency at which they bid. They can't underbid by enforcement, and overbidding does not help because this only decreases the payments of others while winning worse rounds for the overbidder.

Our full mechanism for asymmetric fair shares is as follows. Given the fair shares $\alpha_i = k_i/m$, create a set of simulated agents $\hat N = \{(i,i'):i\in[n],i'\in[k_i]\}$. Initialize an instance $\hat{\mathcal M}$ of \allPayMechanism{} with the agent set $\hat N$ and equal fair shares: $\hat{\alpha}_{(i,i')} = 1/m$ for each $(i,i')\in\hat N$. At each time $t$, each real agent $i$ can request to bid or not. Let $r_i[t]$ denote the indicator that agent $i$ requests to bid. We enforce a bidding minimum that $\sum_{s=1}^t r_i[s] \geq \left(\asymmetricRate{i}\right)t - o(T)$. Let $S[t] = \{i:r_i[t]=1\}$ be the set of bidding agents. For $X_1, \dots, X_k$ i.i.d. $\Bern(1/m)$ random variables, let $\mathcal D_{k,m}$ be the distribution of $(X_1, \dots, X_k)$ conditioned on the event that at least one of the $X_i=1$. For each bidding agent $i\in S[t]$, we sample $(\hat V_{(i,1)}[t], \dots, \hat V_{(i,k_i)}[t])\sim \mathcal D_{k_i,m}$. Set $\hat S[t] = \{(i,i'):i\in S[t],\hat V_{(i,i')}=1\}$, which we set as the set of requesting simulated agents at time $t$, and we simulate $\hat{\mathcal M}$ at time with bidding agents $\hat S[t]$. From $\hat{\mathcal M}$, there is simulated agent $(i,i')$ who won the item, and we give the item to the real agent $i$. (We fully simulate $\hat{\mathcal M}$, so the budgets of the simulated agents also get updated within $\hat{\mathcal M}$, and they are enforced in $\hat{\mathcal M}$ when the simulated agents request to bid.)

We formally describe this mechanism in \cref{alg:principal_bids_for_you_all_pay_with_asymmetric_fair_shares} and prove the following theorem in \cref{sec:app:asymmetric_fair_shares_proofs}.

\begin{algorithm}
	\SetAlgoNoLine
	\KwIn{Fair shares $(\alpha_i)_{i\in[n]}$ where $\alpha_i = k_i/m$, number of rounds $T$, payment constant $\paymentConstant$, underbidding allowance $\epsilon = o(T)$}
    Let $\hat N = \{(i,i'):i\in[n], i'\in [k_i]\}$\;
    Initialize an instance $\hat{\mathcal M}$ of \allPayMechanism{} with the agent set $\hat N$, equal fair shares $\hat {\alpha}_{(i,i')} = 1/m$, and payment constant $\paymentConstant$\;
    \For{$t=1,2,\dots,T$}{
        Agents either request to bid or not (let $r_i[t]$ be the indicator that agent $i$ requests to bid)\;	    
        Enforce bidding minimums: $r_i[t]\gets 1$ for each $i$ such that $\sum_{s=1}^t r_i[s]\leq \left(\asymmetricRate{i}\right)t - \epsilon$\;
        Let $S[t] = \{i:r_i[t]=1\}$ be the set of bidding agents\;
        For each $i\in S[t]$, sample $(\hat V_{(i,1)}[t], \dots, \hat V_{(i,k_i)}[t])\sim \mathcal D_{k_i,m}$ (where $\mathcal D_{k,m}$ is the distribution of $(X_1, \dots, X_k)$ conditioned on at least one  $X_i=1$ where the $X_i$ are i.i.d. $\Bern(1/m)$) \;
        Let $\hat S[t] = \left\{(i,i'):i\in S[t], \hat V_{(i,i')}=1\right\}$\;
        Simulate $\hat{\mathcal M}$ at time $t$ with requesting agents $\hat S[t]$ to determine a simulated winner in $\hat N$\;
        For the simulated agent $(i,i')$ that won the item in $\hat{\mathcal M}$, give the item to agent $i$.
    }
    \caption{Asymmetric Fair Share Mechanism}
    \label{alg:principal_bids_for_you_all_pay_with_asymmetric_fair_shares}
\end{algorithm}

\begin{restatable}{theorem}{asymmetricFairSharesTheorem} \label{thm:asymmetric_fair_shares}
Consider \allPayMechanismWithasymmetricFairShares with $\paymentConstant = (m+1)/(1+m(1-1/m)^{m+1})$ and $\epsilon = \sqrt{T\ln T}$. Then, each agent $i$ playing a \asymmetricStrategy{i} is a $\lrob$-robust $\lnash$-good approximate-equilibrium for some $\lrob$ and $\lnash$ satisfying
\begin{equation*}
    \lrob \geq \frac{5}{3e} - O\left(\sqrt{\frac{\log T}{T}}\right), \lnash \geq 1-\left(1-\frac1m\right)^m - O\left(\sqrt{\frac{\log T}{T}}\right).
\end{equation*}
\end{restatable}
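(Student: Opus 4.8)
The plan is to read \allPayMechanismWithasymmetricFairShares{} as the plain \allPayMechanism{} run on the simulated agent set $\hat N$, with real agent $i$ steering her $k_i$ simulated agents $(i,1),\dots,(i,k_i)$ only through the binary lever $r_i[t]$ plus the principal's conditional sampling, and to transport the symmetric analysis of \cref{thm:robustness_theorem} and \cref{sec:equilibrium_mechanism} through this correspondence. The first thing I would nail down is the probabilistic dictionary. Writing $\beta_i:=\asymmetricRate{i}$, a \asymmetricStrategy{i} has agent $i$ request exactly on the rounds where $V_i[t]$ lies in her top-$\beta_i$-quantile; since $\beta_i$ is precisely the probability that $k_i$ i.i.d.\ $\Bern(1/m)$ bits are not all $0$, composing this event with the draw $(\hat V_{(i,1)}[t],\dots,\hat V_{(i,k_i)}[t])\sim\mathcal D_{k_i,m}$ makes the \emph{unconditional} law of $(\hat V_{(i,1)}[t],\dots,\hat V_{(i,k_i)}[t])$ exactly that of $k_i$ mutually independent $\Bern(1/m)$ bits --- independent across rounds and of the simulated agents of every $j\neq i$. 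Moreover, conditioned on $(i,i')$ requesting in round $t$, $V_i[t]$ is a fresh sample of $V_i\sim\mathcal F_i$ restricted to its top-$\beta_i$-quantile, independent of all other simulated bids and of the mechanism's coins. Hence each $(i,i')$ behaves in $\hat{\mathcal M}$ exactly like a $1/m$-aggressive agent whose per-round value equals $V_i[t]$ on its requesting rounds and $0$ otherwise (atoms at the quantile cutoff handled by the same randomization as in earlier sections), so its ideal utility is $v_{(i,i')}^*=\tfrac1m\,\E[V_i\pmb1_{V_i\text{ top }\beta_i}]/\beta_i$, and agent $i$'s per-round utility is the sum of the $(i,i')$'s per-round utilities since at most one of her simulated agents wins each round.

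For \textbf{robustness} I would fix agent $i$ on the prescribed strategy, let the others be adversarial, and apply the calculation underlying \cref{thm:robustness_theorem} to $(i,i')$ inside $\hat{\mathcal M}$: since $\paymentConstant=(m+1)/(1+m(1-1/m)^{m+1})\ge 2$, that calculation yields $(i,i')$ the robustness factor $\lrob=\min\{1-\tfrac{3(1-1/m)}{\paymentConstant(3-1/m)},\ \tfrac{5-1/m}{\paymentConstant(3-1/m)}\}\ge 5/(3e)$ (up to the usual $O(\sqrt{\log T/T})$ concentration slack), i.e.\ $(i,i')$'s per-round utility is at least $(\lrob-o(1))v_{(i,i')}^*$. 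Summing over $i'\in[k_i]$ and invoking the elementary quantile bound $\tfrac{k_i/m}{\beta_i}\,\E[V_i\pmb1_{V_i\text{ top }\beta_i}]\ge v_i^*$ (the probability mass between the $\beta_i$- and $(k_i/m)$-quantiles sits below the $\beta_i$-cutoff, whereas the top-$\beta_i$ part averages above it) gives agent $i$ at least $(\lrob-o(1))v_i^*$, proving the stated $\lrob$. For the \textbf{equilibrium fraction} $\lnash$, when all agents play their strategies every one of the $m$ simulated agents bids i.i.d.\ $\Bern(1/m)$; as in \cref{sec:equilibrium_mechanism}, the stated $\paymentConstant$ equals $1/\E[1/(2+X)]$ for $X\sim\Binom(m-1,1/m)$, so each simulated agent spends its $T/m$ budget exactly in expectation; by concentration none runs dry before round $T-o(T)$ and the real-agent bidding minimum is slack, so the item is allocated in a $1-(1-1/m)^m-o(1)$ fraction of rounds, by symmetry each simulated agent wins a $\tfrac1m(1-(1-1/m)^m)-o(1)$ fraction, and the same quantile bound yields $\lnash\ge 1-(1-1/m)^m-o(1)\ge 1-1/e-o(1)$.

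The remaining and most delicate task is the \textbf{no-profitable-deviation} claim. With agents $j\ne i$ on their strategies, the principal's conditional sampling leaves agent $i$ only the choice of \emph{when} (as a possibly randomized function of her value and the public history) to set $r_i[t]=1$. The bidding minimum forbids fewer than $\beta_i t-\epsilon$ requests through round $t$, which on the high-probability concentration event forces each of her simulated agents to bid at least $t/m-o(T)$ times --- exactly the constraint of \allPayMechanismWithBiddingMinimum{} --- so under-requesting is foreclosed; placing a request on a lower-value round is weakly dominated because the set of rounds $(i,i')$ wins depends only on requests and mechanism coins and never on $V_i$; and requesting \emph{more often} than $\beta_i$ is unprofitable by the monotone-comparison argument of \cref{sec:equilibrium_mechanism}, since raising $r_i$ only inflates the number of bidders in $\hat{\mathcal M}$'s rounds, which depresses every other simulated agent's expected per-bid payment $\paymentConstant/(k+1)$ and so extends their budgets, while exhausting agent $i$'s simulated agents sooner and forcing her to collect lower-valued rounds early rather than spreading wins over the horizon. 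I expect this step to be the main obstacle: turning the monotone-comparison intuition into a coupling that dominates win-by-win when a single real agent controls $k_i$ negatively correlated simulated agents whose per-round payments are themselves random, together with the stopping-time and concentration bookkeeping that produces the $O(\sqrt{\log T/T})$ error terms throughout.
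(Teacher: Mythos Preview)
Your robustness and equilibrium-utility arguments are essentially the paper's: the dictionary making each $(i,i')$ a $1/m$-aggressive simulated agent in $\hat{\mathcal M}$, the application of the \cref{thm:robustness_theorem} calculation to each simulated agent, the concavity bound $\tfrac{k_i/m}{\beta_i}\,v_i^*(\beta_i)\ge v_i^*$, and the symmetric budget-matching at the prescribed profile are exactly how \cref{sec:app:asymmetric_fair_shares_proofs} proceeds.

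For the no-deviation claim, though, the paper does \emph{not} use the monotone-comparison / coupling route you sketch, and in fact sidesteps it entirely. The observation you are missing is that, once the other agents are on their strategies and have not yet run out (which the bidding minimum guarantees for all $t\le T-o(T)$, via the analogue of \cref{lem:equilibrium_stopping_time_lower_bound}), the expected \emph{total} payment of agent $i$'s simulated agents per real request is a fixed constant independent of the history:
\[
\E\Big[\textstyle\sum_{i'}\hat P_{(i,i')}[t]\,\Big|\,\mathcal G_{t-1}\Big]=\frac{k_i/m}{\beta_i}\,r_i[t].
\]
This falls out of a direct Binomial identity (\cref{prop:ratio_of_binomials}): with $X\sim\Binom(k_i,1/m)$, $Y\sim\Binom(m-k_i,1/m)$ independent, $\paymentConstant\,\E[X/(1+X+Y)]=k_i/m$ for the stated $\paymentConstant$, and conditioning on $X\ge1$ just divides by $\beta_i$. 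The combined simulated budget $k_iT/m$ then caps $\sum_t r_i[t]\le\beta_i T+o(T)$ \emph{regardless of when agent $i$ requests}. The same identity shows $\E[U_i[t]\mid\mathcal G_{t-1}]$ factors as a fixed win-probability times $\E[V_i[t]r_i[t]]$, so total utility is at most that win-probability times the $(\beta_i+o(1))$-ideal utility, which by concavity matches the equilibrium utility up to $o(1)$. There is no case split on under-/over-requesting, no coupling across correlated simulated agents, and no need to track how $i$'s requests perturb others' stopping times beyond the single bidding-minimum lemma. Your instinct that this step is the main obstacle is right, but the obstacle dissolves once you compute the per-request payment directly rather than trying to dominate trajectories.
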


\begin{remark}
At a high level, our techniques are similar to that of \cite{lin2025robust}: in the equal fair share case, their mechanism is just \allPayMechanism, but instead of agents paying an amount dependent on the number of bidders, each bidder always pays $1$ each round they request. Our works also differ in how we handle the asymmetric fair share case. Our approach in reducing asymmetric fair shares to symmetric fair shares does not seem to be able to get the optimal $\lnash = 1-\prod_j (1-\alpha_j)$, since if there are $m$ agents with equal fair shares, the best possible is $\lnash = 1-(1-1/m)^m$, and if $1/m \leq \alpha_j$ for each $j$ with the inequality strict for at least one $j$, then $1-(1-1/m)^m < 1-\prod_j(1-\alpha_j)$. In contrast, \citet{lin2025robust} are able to obtain $\lnash = 1-\prod_j(1-\alpha_j)$. They do this by changing the uniformly random allocation rule. Specifically, if at some time a set of agents $S$ bid, they allocate the item to an $i\in S$ according a probability distribution $(p_i^S)_{i\in S}$ that depends on $S$. These probability distributions $(p_i^S)_{i\in S}$ are extremely complicated. Much of their work is dedicated to only proving the existence of $(p_i^S)_{i\in S}$ that guarantee their robustness and equilibrium guarantees, and they do not have any formula for the $p_i^S$. In contrast, our mechanism, \allPayMechanismWithasymmetricFairShares, is much simpler.
\end{remark}

\section*{Acknowledgements}

David X. Lin and Siddhartha Banerjee were supported by AFOSR grant FA9550-231-0068, NSF grant ECCS-1847393 and through a SPROUT Award from Cornell Engineering. 
Giannis Fikioris was supported
in part by the Google PhD Fellowship, the Onassis Foundation – Scholarship ID: F ZS 068-1/2022-2023, and ONR MURI grant
N000142412742. 
Éva Tardos was supported in part by AFOSR grant FA9550-23-1-0410, AFOSR grant FA9550-231-0068, ONR MURI grant N000142412742, and through a SPROUT Award from Cornell Engineering.

\printbibliography

\appendix
\crefalias{section}{appendix}
\section{Deferred Proofs from Section \ref{sec:robust_mechanism}}
\label{sec:app:robust_mechanism_proofs}

Our goal of this section is to provide the full proof of \cref{thm:robustness_theorem}, which we restate below.

\robustnessTheorem*

We first reduce the problem to lower bounding the amount the number of wins the agent has. 

If agent $i$ has budget at time $t$, let $\hat V_i[t]$ be $1$ if agent $i$ bids at time $t$ and $0$ otherwise. If agent $i$ does not have budget at time $t$, let $\hat V_i[t]$ be sampled from $\Bern(\alpha_i)$, independently of everything else. Notice that the $\hat V_i[t]$ are i.i.d. $\Bern(\alpha_i)$ random variables across time such that $\hat V_i[t] = r_i[t]$ whenever agent $i$ has budget.

\begin{lemma} \label{lem:bernoulli_reduction}
Fix an agent $i$ and the strategies of other agents $j\neq i$. Assume agent $i$ is following an $\alpha_i$-aggressive strategy. Then,
\begin{equation*}
    \left|\frac1T\sum_{t=1}^T U_i[t] - \frac{v_i^*}{\alpha_i T}\sum_{t=1}^T W_i[t]\right| \leq \left(\sqrt{\frac{\log T}{T}}\right)
\end{equation*}
with probability at least $1-O(1/T^2)$.
\end{lemma}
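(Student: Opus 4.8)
The plan is to decompose the quantity $\frac1T\sum_t U_i[t]$ into two pieces and control each by concentration. First I would observe that $U_i[t] = V_i[t] W_i[t]$, and that agent $i$ only wins when she bids, so $W_i[t] \le r_i[t]$; moreover whenever she has budget, $r_i[t] = \hat V_i[t]$. The key structural fact from the $\alpha_i$-aggressive strategy is that, conditioned on bidding at time $t$ (equivalently on $\hat V_i[t]=1$, in the rounds where the agent has budget), the value $V_i[t]$ has conditional expectation exactly $v_i^*/\alpha_i$, and is independent across rounds of the bid indicators of the other agents (since agent $i$'s own randomness drives $r_i[t]$). So I would write
\begin{equation*}
\frac1T\sum_{t=1}^T U_i[t] = \frac1T\sum_{t=1}^T V_i[t] W_i[t],
\end{equation*}
and compare this to $\frac{v_i^*}{\alpha_i T}\sum_t W_i[t]$ by bounding $\frac1T\sum_t \big(V_i[t] - v_i^*/\alpha_i\big) W_i[t]$.

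The main step is a concentration argument for the martingale-type sum $\sum_t \big(V_i[t] - v_i^*/\alpha_i\big) W_i[t]$. I would set up the natural filtration $\mathcal F_{t-1}$ generated by everything up to the start of round $t$ (all values, bids, and allocations through round $t-1$), together with the bid indicators $r_j[t]$ of agents $j \ne i$ and agent $i$'s bid indicator $r_i[t]$ — but crucially \emph{not} the actual value $V_i[t]$ beyond the information that it lies in the top $\alpha_i$-quantile. Conditioned on this, whether $W_i[t]=1$ is determined by the mechanism's internal coin (uniform among bidders), which is independent of the realized magnitude of $V_i[t]$ within the top quantile. Hence $\E\big[(V_i[t]-v_i^*/\alpha_i) W_i[t] \mid \mathcal F_{t-1}, \text{bids}\big] = W_i[t]\cdot\E\big[V_i[t]-v_i^*/\alpha_i \mid V_i[t] \text{ in top quantile}\big] = 0$. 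So the partial sums form a martingale with increments bounded in $[-1,1]$ (since values lie in $[0,1]$ and $v_i^*/\alpha_i \le 1$, and $W_i[t]\in\{0,1\}$). Azuma–Hoeffding then gives $\big|\sum_t (V_i[t]-v_i^*/\alpha_i)W_i[t]\big| \le O(\sqrt{T\log T})$ with probability $1 - O(1/T^2)$, i.e. after dividing by $T$, an $O(\sqrt{\log T / T})$ bound. A small additional point: on rounds where agent $i$ has run out of budget, $W_i[t]=0$, so those rounds contribute nothing and the substitution of $\hat V_i[t]$ for $r_i[t]$ there is irrelevant to this particular sum; I would note this explicitly so the $\hat V_i$ coupling (which matters for later lemmas) doesn't cause confusion here.

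The hard part will be setting up the conditioning/filtration cleanly so that the zero-mean property of the martingale increments is actually correct — one must be careful that the allocation indicator $W_i[t]$ depends on the bids of \emph{all} agents (including, through budget exhaustion, the whole history), but is conditionally independent of the precise value $V_i[t]$ given that $V_i[t]$ is in the top quantile, which is the only feature of $V_i[t]$ that $r_i[t]$ (hence $W_i[t]$) depends on under the $\alpha_i$-aggressive strategy. Once that independence is pinned down, the rest is a routine Azuma–Hoeffding application and a union bound. I would conclude by combining: with probability $1-O(1/T^2)$,
\begin{equation*}
\left|\frac1T\sum_{t=1}^T U_i[t] - \frac{v_i^*}{\alpha_i T}\sum_{t=1}^T W_i[t]\right| = \frac1T\left|\sum_{t=1}^T \left(V_i[t]-\frac{v_i^*}{\alpha_i}\right)W_i[t]\right| \le O\!\left(\sqrt{\frac{\log T}{T}}\right),
\end{equation*}
which is the claimed bound.
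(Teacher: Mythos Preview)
Your proposal is correct and follows essentially the same approach as the paper: both hinge on the observation that, under the $\alpha_i$-aggressive strategy, $W_i[t]$ depends on $V_i[t]$ only through the bid indicator $\hat V_i[t]$, so conditioned on $\hat V_i[t]=1$ the value $V_i[t]$ has mean $v_i^*/\alpha_i$ independently of $W_i[t]$. The only cosmetic difference is that the paper conditions on the full vector $(W_i[s])_{s\in[T]}$ and applies vanilla Hoeffding to the then-independent $U_i[t]$, whereas you set up the natural martingale and apply Azuma--Hoeffding; both yield the same bound.
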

\begin{proof}

Conditioned on $(W_i[s])_{s\in[T]}$, using the fact that conditioned on $\hat V_i[t]$, the $V_i[t]$ are independent of $(W_i[s])_{s\in[T]}$ and the fact that the values $V_i[t]$ are independent across time, the random variables $U_i[t] = V_i[t]W_i[t]$ are independent random variables with expectation
\begin{align*}
    \E[V_i[t]\mid (W_i[s])_{s\in [T]}]W_i[t] & = \E[V_i[t]\mid \hat V_i[t]]W_i[t]\\
    & = \E[V_i[t]\mid \hat V_i[t]]\hat V_i[t]W_i[t] & \text{[Bids if and only if $\hat V_i[t]=1$]}\\
    & = \E[V_i[t]\mid \hat V_i[t]=1]\hat V_i[t]W_i[t] = \frac{v_i^*}{\alpha_i}W_i[t]. & \text{[Definition of ideal utility]}
\end{align*}
Thus, by the Hoeffding bound, letting $\bar v$ bound the value distribution $\mathcal F_i$,
\begin{equation*} 
    \Pr\left(\left|\sum_{t=1}^T U_i[t] - \frac{v_i^*}{\alpha_i}\sum_{t=1}^T W_i[t]\right| \geq \bar v\sqrt{T\ln T}\,\middle|\,(W_i[s])_{s\in [T]} \right) \leq 2\exp\left(-\frac{2\bar v^2T\ln T}{\bar v^2T}\right) = \frac{2}{T^2}.
\end{equation*}
This implies the lemma statement by dividing by $T$ and using the unconditional probability:
\begin{equation*}
    \Pr\left(\left|\frac1T\sum_{t=1}^T U_i[t] - \frac{v_i^*}{\alpha_i T}\sum_{t=1}^T W_i[t]\right| \geq \bar v\sqrt{\frac{\ln T}{T}}\right) \leq \frac{2}{T^2}.
\end{equation*}

\end{proof}

The below lemma uses standard concentration bounds to relate the random quantities to their (conditional) expectations. This allows us to only reason about these expectations.

Let $\mathcal H_t$ denote the history up to and including time $t$. Let $\mathcal G_t$ be the $\sigma$-algebra generated by $\mathcal H_t$ and $r_j[t+1]$ for $j\neq i$.

\begin{lemma}
    \label{lem:high_probability}
    There is an event $E$ of probability at least $1-O(1/T^2)$ such that on $E$, the following hold.
    \begin{align}
        & \frac1T\sum_{t=1}^T \hat V_i[t] \leq \alpha_i + O\left(\sqrt{\frac{\log T}{T}}\right) \label{eq:high_probability_agent_values}\\
        & \frac1T\sum_{j\neq i}\sum_{t=1}^T \E[P_j[t]\mid \mathcal G_{t-1}] \leq 1-\alpha_i + O\left(\sqrt{\frac{\log T}{T}}\right) \label{eq:high_probability_adversary_payment}\\
        & \frac1T\sum_{t=1}^T W_i[t]\geq \frac1T\sum_{t=1}^T \E[W_i[t]\mid \mathcal G_{t-1}] - O\left(\sqrt{\frac{\log T}{T}}\right) \label{eq:high_probability_value_and_allocation_agent_wins}\\
        & \frac1T\sum_{t=1}^T P_i[t]\leq \frac1T\sum_{t=1}^T \E[P_i[t]\mid \mathcal G_{t-1}] + O\left(\sqrt{\frac{\log T}{T}}\right) \label{eq:high_probability_agent_payment}
    \end{align}
    \end{lemma}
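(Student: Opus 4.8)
The plan is to prove each of the four displayed inequalities separately, showing each one holds outside an event of probability $O(1/T^2)$, and then to let $E$ be the intersection of these four ``good'' events; since there is only a constant number of them (recall $n$ and $\paymentConstant$ are fixed in this section), a union bound gives $\Pr(E)\ge 1-O(1/T^2)$. Each inequality reduces to a concentration bound for a sum of bounded increments, so the only real content is (a) identifying the correct filtration so that the relevant quantities form martingale-difference sequences, and (b), for \eqref{eq:high_probability_adversary_payment}, combining the concentration bound with a deterministic budget constraint.

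For \eqref{eq:high_probability_agent_values}: by construction the $\hat V_i[t]$ are i.i.d.\ $\Bern(\alpha_i)$, so a Chernoff--Hoeffding bound gives $\frac1T\sum_{t=1}^T\hat V_i[t]\le \alpha_i+\sqrt{(\ln T)/T}$ except with probability at most $\exp(-2\ln T)=T^{-2}$. For the remaining three, the key observation is that $\mathcal G_0\subseteq\mathcal G_1\subseteq\cdots$ is a filtration: $\mathcal G_{t-1}$ is generated by $\mathcal H_{t-1}$ together with the round-$t$ requests $r_j[t]$ of agents $j\ne i$, and all of this is measurable with respect to $\mathcal H_t\subseteq\mathcal G_t$. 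Moreover each of $W_i[t]$, $P_i[t]$, and $\sum_{j\ne i}P_j[t]$ is determined by the history through round $t$, hence $\mathcal G_t$-measurable, and takes values in $[0,1]$, $[0,\paymentConstant]$, and $[0,\paymentConstant]$ respectively (only one agent wins per round and pays at most $\paymentConstant$). Consequently, for $X_t\in\{W_i[t],\,P_i[t],\,\sum_{j\ne i}P_j[t]\}$ the partial sums $\sum_{s\le t}\big(X_s-\E[X_s\mid\mathcal G_{s-1}]\big)$ form a martingale with respect to $\{\mathcal G_t\}_{t\ge 0}$ whose increments are bounded in absolute value by the corresponding bound $B\in\{1,\paymentConstant\}$, so the Azuma--Hoeffding inequality gives $\big|\sum_{s\le T}(X_s-\E[X_s\mid\mathcal G_{s-1}])\big|\le B\sqrt{2T\ln(2T^2)}$ except with probability at most $T^{-2}$. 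Dividing by $T$ yields the $O(\sqrt{(\log T)/T})$ slack, which is exactly \eqref{eq:high_probability_value_and_allocation_agent_wins} and \eqref{eq:high_probability_agent_payment} (using only the one-sided directions).

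For \eqref{eq:high_probability_adversary_payment} we additionally use that each agent $j\ne i$ is barred from bidding once $B_j[t]\le 0$, so her total payment overshoots her endowment by at most one round's charge: $\sum_{t=1}^T P_j[t]\le \alpha_j T+\paymentConstant$. Summing over $j\ne i$ gives the deterministic bound $\sum_{j\ne i}\sum_{t=1}^T P_j[t]\le (1-\alpha_i)T+(n-1)\paymentConstant$; combining this with the Azuma bound above applied to $X_t=\sum_{j\ne i}P_j[t]$ gives $\sum_{j\ne i}\sum_{t=1}^T\E[P_j[t]\mid\mathcal G_{t-1}]\le (1-\alpha_i)T+(n-1)\paymentConstant+\paymentConstant\sqrt{2T\ln(2T^2)}$, and dividing by $T$ yields \eqref{eq:high_probability_adversary_payment} since $(n-1)\paymentConstant=O(1)$.

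The step I expect to require the most care is setting up the filtration exactly as the lemma's statement demands: $\mathcal G_{t-1}$ deliberately reveals the round-$t$ bids of agents $j\ne i$ but not agent $i$'s bid $\hat V_i[t]$ (this is what will make the conditional expectations $\E[W_i[t]\mid\mathcal G_{t-1}]$ and $\E[P_i[t]\mid\mathcal G_{t-1}]$ the useful quantities downstream), so one must verify carefully that $\mathcal G_{t-1}\subseteq\mathcal G_t$ and that $W_i[t]$, $P_i[t]$, and the $P_j[t]$ are adapted, so that the martingale-difference structure and Azuma genuinely apply. The rest is routine bookkeeping: checking the increment bounds, the one-sided tail choices, and that the $O(\sqrt{(\log T)/T})$ constants (which depend on $n$ and $\paymentConstant$) are absorbed correctly.
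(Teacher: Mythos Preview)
Your proposal is correct and follows essentially the same approach as the paper's proof: Hoeffding for the i.i.d.\ sum in \eqref{eq:high_probability_agent_values}, Azuma--Hoeffding on the $\mathcal G_t$-martingales for \eqref{eq:high_probability_value_and_allocation_agent_wins} and \eqref{eq:high_probability_agent_payment}, and Azuma--Hoeffding combined with the adversaries' deterministic budget constraint for \eqref{eq:high_probability_adversary_payment}. If anything, you are slightly more explicit than the paper in verifying that $\{\mathcal G_t\}$ is a filtration and in accounting for the $(n-1)\paymentConstant$ overshoot in the budget constraint.
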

    \begin{proof}
    To prove \eqref{eq:high_probability_agent_values}, note that the $\hat V_i[t]$ are i.i.d. random variables bounded by $1$ with mean $\alpha_i$. By Hoeffding's inequality,
    \begin{equation*}
        \Pr\left(\sum_{i=1}^n \hat V_i[t] \leq \alpha_iT + \sqrt{T\ln T}\right) \leq \exp\left(-\frac{2T\ln T}{T}\right) \leq \frac{1}{T^2}.
    \end{equation*}
    \cref{eq:high_probability_adversary_payment,eq:high_probability_value_and_allocation_agent_wins,eq:high_probability_agent_payment} can be proven similarly by applying Azuma-Hoeffding to the $\mathcal G_t$-martingale $\sum_{j\neq i}\sum_{s=1}^t (P_j[s] - \E[P_j[s]\mid\mathcal G_{t-1}])$ along with the budget constraint $\sum_{j\neq i}\sum_{s=1}^T P_j[s] \leq \sum_{j\neq i}\alpha_jT = (1-\alpha_i)T$ (with increments bounded by $\paymentConstant$),
    Azuma-Hoeffding to the $\mathcal G_t$-martingale $\sum_{s=1}^t (W_i[s] - \E[W_i[s]\mid\mathcal G_{s-1}])$,
    and Azuma-Hoeffding to the $\mathcal G_t$-martingale $\sum_{s=1}^t (P_i[s] - \E[P_i[s]\mid\mathcal G_{s-1}])$.
\end{proof}

The below lemma helps us reason about the payments of agents.
\begin{lemma} \label{lem:expected_payment_of_agent}
The expected payment of an agent $i$ conditioned on the bids $r_k[t]$ for $k\in[n]$ is
\begin{equation*}
    \E[P_i[t]\mid (r_k[t])_{k\in[n]}] = \paymentConstant\cdot\frac{r_i[t]}{1 + \sum_{k\in[n]}r_k[t]}
\end{equation*}
\end{lemma}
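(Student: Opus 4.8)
The plan is to condition on the realized bid profile in round $t$ and split into the two cases $r_i[t]=0$ and $r_i[t]=1$; in each case the computation is immediate from the mechanism's allocation and payment rules.

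First I would fix the round $t$ and condition on $(r_k[t])_{k\in[n]}$. This pins down the bidder set $S[t]=\{k:r_k[t]=1\}$ and hence the size $|S[t]|=\sum_{k\in[n]}r_k[t]$, as well as the payment scaling factor $\paymentConstant\cdot|S[t]|/(1+|S[t]|)$. The only remaining randomness in $P_i[t]=\paymentConstant\cdot\frac{|S[t]|}{1+|S[t]|}\,W_i[t]$ is then the identity of the winner, which the mechanism draws uniformly from $S[t]$ using fresh randomness independent of the history. If $r_i[t]=0$, then $i\notin S[t]$, so $W_i[t]=0$ and $P_i[t]=0$, which matches the claimed expression since its numerator is $r_i[t]=0$. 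If $r_i[t]=1$, then $|S[t]|\ge 1$ and the uniform draw gives $\E[W_i[t]\mid (r_k[t])_{k\in[n]}]=1/|S[t]|$, so
\[
\E[P_i[t]\mid (r_k[t])_{k\in[n]}] \;=\; \paymentConstant\cdot\frac{|S[t]|}{1+|S[t]|}\cdot\frac{1}{|S[t]|} \;=\; \frac{\paymentConstant}{1+|S[t]|} \;=\; \paymentConstant\cdot\frac{r_i[t]}{1+\sum_{k\in[n]}r_k[t]}.
\]
Combining the two cases yields the lemma.

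There is no real obstacle here: this is a one-line consequence of the definition of the mechanism. The only points worth stating carefully are that the winner-selection randomness is fresh and independent of the conditioning (so that $\E[W_i[t]\mid (r_k[t])_{k\in[n]}]=1/|S[t]|$ exactly when $i$ bids), and that the degenerate case $|S[t]|=0$ causes no division-by-zero issue because it forces $r_i[t]=0$ and hence $P_i[t]=0$, consistent with the stated formula.
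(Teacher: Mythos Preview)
Your proof is correct and follows the same approach as the paper: compute the winning probability conditioned on the bid vector and multiply by the deterministic payment amount. Your case split on $r_i[t]$ and your remark about the $|S[t]|=0$ case make the argument slightly more explicit than the paper's version, but the idea is identical.
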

\begin{proof}
Conditioned on the bids $r_k[t]$, agent $i$ wins with probability $\frac{r_i[t]}{\sum_kr_k[t]}$ by the uniform allocation rule in which case they pay $\paymentConstant\cdot\frac{1}{1+\sum_kr_k[t]}$ by the payment rule. The expected payment $\E[P_i[t]\mid (r_k[t])_{k\in[n]}]$ is just the product of these.
\end{proof}


Let $\tau_i$ be the time at which agent $i$ runs out of budget and $T$ if the agent never runs out of budget:
\begin{equation*}
    \tau_i = \begin{cases}\min\left\{t:\sum_{s=1}^tP_i[s] \geq \alpha_iT\right\} & \text{if $\sum_{s=1}^T P_i[s] \geq \alpha_iT$}\\ T & \text{otherwise}\end{cases}.
\end{equation*}
Notice that $\tau_i$ is a stopping time with respect to the filtration $\mathcal G_t$.
Let $k_i[t]$ be the number of agents $j\neq i$ that bid at time $t$ and have budget remaining, i.e., $k_i[t] = \#\{j\neq i:r_j[t]=1\}$.
Let $X_k$ be the fraction of times $t\leq\tau$ where $k_i[t]=k$, i.e., $X_k = \frac{1}{\tau_i}\cdot\#\{t\leq\tau_i: k_i[t]=k\}$. The below lemma expresses the agent's utility and adversary's budget constraint in terms of $\tau_i$ and the $X_k$.
\begin{lemma} \label{lem:agent_utility_adversary_budget_constraint_rewritten}
On the event $E$ as in \cref{lem:high_probability}, the following hold.
\begin{equation}
\label{eq:agent_adversary_budget_matching}
    \frac{\paymentConstant\tau_i}{T}\sum_{k=1}^{n-1} \left((1-\alpha_i)\cdot\frac{k}{1+k} + \alpha_i\cdot\frac{k}{2+k}\right)X_k \leq 1-\alpha_i + O\left(\sqrt{\frac{\log T}{T}}\right)
\end{equation}
\begin{equation} \label{eq:agent_wins_lower_bound}
    \frac{1}{\alpha_i T}\sum_{t=1}^T W_i[t] \geq \frac{\tau_i}{T}\sum_{k=0}^{n-1} \frac{X_k}{1+k} - O\left(\sqrt{\frac{\log T}{T}}\right)
\end{equation}
Also, if $\sum_{t=1}^T P_i[t]\geq \alpha_iT$, then on $E$,
\begin{equation}
\label{eq:agent_stopping_time_lower_bound}
    \tau_i \geq \frac{T}{\paymentConstant\sum_{k=0}^{n-1} \frac{X_k}{2+k}} - O\left(\sqrt{\frac{\log T}{T}}\right).
\end{equation}

\end{lemma}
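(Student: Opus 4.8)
The plan is to combine the martingale concentration estimates of \cref{lem:high_probability} with exact formulas for the one-step conditional expectations, and then translate the resulting time-averages into the $X_k$ notation. The one probabilistic input I need is that, conditioned on $\mathcal G_{t-1}$ and on $\{t\leq\tau_i\}$, the bid $r_i[t]$ is distributed as $\Bern(\alpha_i)$: for $t\leq\tau_i$ we have $\sum_{s<t}P_i[s]<\alpha_iT$, so agent $i$ still has positive budget, and under the $\alpha_i$-aggressive rule she bids iff $V_i[t]$ lies in the top $\alpha_i$-quantile, where $V_i[t]$ is independent of the history $\mathcal H_{t-1}$ and of the others' round-$t$ bids $(r_j[t])_{j\neq i}$. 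Moreover $\{t\leq\tau_i\}$ is $\mathcal G_{t-1}$-measurable (since $\tau_i$ is a stopping time) and $k_i[t]$ is $\mathcal G_{t-1}$-measurable. Applying \cref{lem:expected_payment_of_agent} (whose conclusion is unchanged upon also conditioning on $\mathcal H_{t-1}$, since the winner is uniform on $S[t]$ independently of the past) together with the tower property over $\sigma(\mathcal H_{t-1},(r_k[t])_{k\in[n]})\supseteq\mathcal G_{t-1}$, I get that for every $t\leq\tau_i$,
\begin{equation*}
\E[W_i[t]\mid\mathcal G_{t-1}] = \frac{\alpha_i}{1+k_i[t]},\qquad \E[P_i[t]\mid\mathcal G_{t-1}] = \frac{\paymentConstant\alpha_i}{2+k_i[t]},\qquad \sum_{j\neq i}\E[P_j[t]\mid\mathcal G_{t-1}] = \paymentConstant\left(\frac{(1-\alpha_i)k_i[t]}{1+k_i[t]}+\frac{\alpha_i k_i[t]}{2+k_i[t]}\right),
\end{equation*}
where the last identity uses that exactly $k_i[t]$ agents $j\neq i$ bid at time $t$; for $t>\tau_i$ agent $i$ has no budget, so $r_i[t]=W_i[t]=P_i[t]=0$ and the first two conditional expectations vanish.

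With these in hand, each of the three displays follows by a short manipulation. For \eqref{eq:agent_adversary_budget_matching} I start from \eqref{eq:high_probability_adversary_payment}, drop the nonnegative terms with $t>\tau_i$, substitute the formula above, and use $\sum_{t\leq\tau_i}g(k_i[t])=\tau_i\sum_k X_k\,g(k)$ by the definition of $X_k$ (the $k=0$ term is $0$, so the sum may start at $k=1$). For \eqref{eq:agent_wins_lower_bound} I start from \eqref{eq:high_probability_value_and_allocation_agent_wins}, use $\sum_{t=1}^T\E[W_i[t]\mid\mathcal G_{t-1}]\geq\sum_{t\leq\tau_i}\E[W_i[t]\mid\mathcal G_{t-1}]=\alpha_i\tau_i\sum_k X_k/(1+k)$, and divide by $\alpha_i$, absorbing the constant $1/\alpha_i$ into the $O(\cdot)$ term. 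For \eqref{eq:agent_stopping_time_lower_bound}, the hypothesis $\sum_{t=1}^T P_i[t]\geq\alpha_iT$ together with $P_i[t]=0$ for $t>\tau_i$ gives $\frac1T\sum_{t=1}^T\E[P_i[t]\mid\mathcal G_{t-1}]=\frac{\paymentConstant\alpha_i\tau_i}{T}\sum_k X_k/(2+k)$; combining with \eqref{eq:high_probability_agent_payment} and $\frac1T\sum_t P_i[t]\geq\alpha_i$ yields $\frac{\paymentConstant\tau_i}{T}\sum_k X_k/(2+k)\geq 1-O(\sqrt{\log T/T})$, and rearranging (using $\sum_k X_k=1$ and $k\leq n-1$, so $\sum_k X_k/(2+k)\geq 1/(n+1)$ stays bounded away from $0$) gives the stated bound.

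I expect the only real subtlety to be the bookkeeping around the stopping time $\tau_i$: justifying that the restricted sums $\sum_{t\leq\tau_i}$ are legitimate (this is exactly where $\mathcal G_{t-1}$-measurability of $\{t\leq\tau_i\}$ and nonnegativity of payments/wins are used), handling the at-most-$\paymentConstant$ overshoot of $\sum_{t\leq\tau_i}P_i[t]$ beyond $\alpha_iT$ (which is $O(1)$, negligible against $T$), and verifying that $r_i[t]$ is genuinely fresh $\Bern(\alpha_i)$ randomness given $\mathcal G_{t-1}$ — in particular independent of the other agents' round-$t$ bids, which are already in $\mathcal G_{t-1}$. None of these steps needs anything beyond independence of the valuations across time and agents, the definitions of the mechanism and of $\tau_i$, $k_i[t]$, $X_k$, and the already-proved \cref{lem:high_probability,lem:expected_payment_of_agent}.
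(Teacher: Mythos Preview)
Your proposal is correct and follows essentially the same route as the paper: compute the three conditional expectations for $t\le\tau_i$ via \cref{lem:expected_payment_of_agent} and the $\Bern(\alpha_i)$ freshness of $r_i[t]$ given $\mathcal G_{t-1}$, plug these into the concentration bounds \eqref{eq:high_probability_adversary_payment}, \eqref{eq:high_probability_value_and_allocation_agent_wins}, \eqref{eq:high_probability_agent_payment} of \cref{lem:high_probability}, and rewrite the resulting time-averages through the empirical frequencies $X_k$. The paper's proof is organized in exactly this order and uses the same measurability and nonnegativity observations you flag (stopping-time predictability of $\{t\le\tau_i\}$, dropping the $t>\tau_i$ terms from the adversary-payment sum, and $P_i[t]=W_i[t]=0$ past $\tau_i$).
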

\begin{proof}
For $t\leq\tau_i$, by \cref{lem:expected_payment_of_agent},
\begin{equation*}
    \E[P_i[t]\mid\mathcal G_{t-1}] = \E\left[\paymentConstant\cdot\frac{r_i[t]}{2+k_i[t]}\,\middle|\,\mathcal G_{t-1}\right] = \frac{\alpha_i\paymentConstant}{2+k_i[t]}.
\end{equation*}
since agent $i$ bids with probability $\alpha_i$ independent of $\mathcal G_{t-1}$ conditioned on $t\leq\tau_i$, and if agent $i$ bids, there are $1+k_i[t]$ agents total bidding.

Applying \eqref{eq:high_probability_agent_payment}, noting that $\sum_{t=1}^T P_i[t] = \sum_{t=1}^{\tau_i} P_i[t]$ by the budget constraint enforcement, on $E$,
\begin{equation*}
\begin{split}
    \frac1T\sum_{t=1}^{\tau_i} P_i[t] & \leq \frac{\alpha_i\paymentConstant}{T}\sum_{t=1}^{\tau_i}\frac{1}{2+k_i[t]} + O\left(\sqrt{\frac{\log T}{T}}\right)\\
    & = \frac{\alpha_i\paymentConstant\tau_i}{T}\sum_{k=0}^{n-1} \frac{X_k}{2+k} + O\left(\sqrt{\frac{\log T}{T}}\right).
\end{split}
\end{equation*}
If $\sum_{t=1}^{\tau_i} P_i[t] \geq \alpha_iT$, the above is at least $\alpha_i$, and we can rearrange to obtain \eqref{eq:agent_stopping_time_lower_bound}.

When $t\leq\tau_i$, for bidding agents $j\neq i$, by \cref{lem:expected_payment_of_agent},
\begin{equation*}
    \E\left[P_j[t]\mid\mathcal G_{t-1}\right] = \paymentConstant\left((1-\alpha_i)\cdot\frac{1}{1+k_i[t]} + \alpha_i\cdot\frac{1}{2+k_i[t]}\right)r_j[t]
\end{equation*}
since agent $i$ does not bid with probability $1-\alpha_i$, in which case there are $k_i[t]$ agents bidding, and the agent bids with probability $\alpha_i$, in which case there are $1+k_i[t]$ agents bidding. 
Then,
\begin{equation*}
\begin{split}
    \frac1T\sum_{t=1}^{\tau_i} \sum_{j\neq i}\E[P_j[t]\mid\mathcal G_{t-1}] & = \frac1T\sum_{t=1}^{\tau_i}\sum_{j\neq i}\paymentConstant\left((1-\alpha_i)\cdot\frac{1}{1+k_i[t]} + \alpha_i\cdot\frac{1}{2+k_i[t]}\right)r_j[t]\\
     & = \frac1T\sum_{t=1}^{\tau_i} \paymentConstant\left((1-\alpha_i)\cdot\frac{k_i[t]}{1+k_i[t]} + \alpha_i\cdot\frac{k_i[t]}{2+k_i[t]}\right)\\
    & = \frac{\paymentConstant \tau_i}{T}\sum_{k=1}^{n-1} \left((1-\alpha_i)\cdot\frac{k}{1+k} + \alpha_i\cdot\frac{k}{2+k}\right)X_k.
\end{split}
\end{equation*}
By \eqref{eq:high_probability_adversary_payment}, on $E$, this is at most $1-\alpha_i + O\left(\sqrt{\frac{\log T}{T}}\right)$, so we obtain \eqref{eq:agent_adversary_budget_matching}.

Agent $i$'s utility satisfies
\begin{equation*}
    \frac{1}{\alpha_i T}\sum_{t=1}^{\tau_i} \E[W_i[t]\mid\mathcal G_{t-1}] = \frac{1}{\alpha_iT}\sum_{t=1}^{\tau_i} \frac{\alpha_i}{1+k_i[t]} = \frac{\tau_i}{T}\sum_{k=0}^{n-1} \frac{X_k}{1+k}
\end{equation*}
since at each time $t\leq\tau_i$, agent $i$ bids with probability $\alpha_i$, and there are $1+k_i[t]$ total bidding agents.
Using \eqref{eq:high_probability_value_and_allocation_agent_wins}, we obtain \eqref{eq:agent_wins_lower_bound}.

\end{proof}

We now break into two cases and and lower bound agent $i$'s wins in each case. Either agent $i$ spends all their budget, or they don't. We first handle the case where they do not run out of budget.

\begin{lemma} \label{lem:robustness_when_not_running_out_of_budget}
Assume $\paymentConstant \geq 2$. On the event $E$ in \cref{lem:high_probability}, if $\sum_{t=1}^T P_i[t] < \alpha_iT$, then
\begin{equation}
\label{eq:expected_wins_agent_does_not_run_out_of_budget}
\frac1{\alpha_i T}\sum_{t=1}^T W_i[t]\geq 1 - \frac{3(1-\alpha_i)}{3\paymentConstant - \alpha_i\paymentConstant} - O\left(\sqrt{\frac{\log T}{T}}\right).
\end{equation}
\end{lemma}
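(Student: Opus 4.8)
The plan is to follow the proof sketch: use \cref{lem:agent_utility_adversary_budget_constraint_rewritten} to reduce the desired bound to the value of a linear program with two nontrivial constraints, and then solve that LP. Since agent $i$ does not run out of budget, $\tau_i=T$, so the $X_k$ of \cref{lem:agent_utility_adversary_budget_constraint_rewritten} are the fractions of all $T$ rounds in which exactly $k$ agents $j\neq i$ with budget remaining bid; they satisfy $X_k\ge 0$ and $\sum_{k=0}^{n-1}X_k=1$. Taking $\tau_i=T$ in \eqref{eq:agent_wins_lower_bound} reduces the goal to lower bounding $\sum_{k=0}^{n-1}\tfrac{X_k}{1+k}$, and taking $\tau_i=T$ in \eqref{eq:agent_adversary_budget_matching} gives, with $c_k:=(1-\alpha_i)\tfrac{k}{1+k}+\alpha_i\tfrac{k}{2+k}$, the constraint $\paymentConstant\sum_{k=1}^{n-1}c_kX_k\le 1-\alpha_i+\delta$ where $\delta=O(\sqrt{\log T/T})$. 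So it suffices to lower bound
\begin{equation*}
\min\Bigl\{\textstyle\sum_{k=0}^{n-1}\tfrac{X_k}{1+k}\ :\ X_k\ge 0,\ \sum_k X_k=1,\ \paymentConstant\sum_{k=1}^{n-1}c_kX_k\le 1-\alpha_i+\delta\Bigr\}.
\end{equation*}

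This LP is feasible and bounded, so it is minimized at a vertex with at most two nonzero coordinates, and the key structural claim is that one of them is $X_0$. Note $c_k=\tfrac{k}{1+k}\bigl(1-\tfrac{\alpha_i}{2+k}\bigr)$ is a product of positive increasing functions of $k\ge 1$, hence increasing, so $c_k\ge c_1=\tfrac{3-\alpha_i}{6}$ for all $k\ge 1$; thus if both nonzero coordinates had index $\ge 1$ we would get $\paymentConstant\sum_k c_kX_k\ge\paymentConstant c_1\ge\tfrac{3-\alpha_i}{3}=1-\tfrac{\alpha_i}{3}$, violating the budget inequality once $\delta<\tfrac{2\alpha_i}{3}$, i.e. for $T$ large — this is exactly where the hypothesis $\paymentConstant\ge 2$ is used. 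Writing the other nonzero coordinate as $X_{k^*}$ with $k^*\ge 1$, and noting that shifting mass from $X_0$ to $X_{k^*}$ strictly decreases the objective (since $\tfrac{1}{1+k^*}<1$), the budget constraint is tight at the minimizer, so $X_{k^*}=\tfrac{1-\alpha_i+\delta}{\paymentConstant c_{k^*}}$ and $X_0=1-X_{k^*}$; using $c_{k^*}=\tfrac{k^*(2+k^*-\alpha_i)}{(1+k^*)(2+k^*)}$,
\begin{equation*}
\textstyle\sum_{k=0}^{n-1}\frac{X_k}{1+k}=1-\frac{k^*}{1+k^*}X_{k^*}=1-\frac{(1-\alpha_i+\delta)(2+k^*)}{\paymentConstant(2+k^*-\alpha_i)}.
\end{equation*}
Since $\tfrac{2+k^*}{2+k^*-\alpha_i}=1+\tfrac{\alpha_i}{2+k^*-\alpha_i}$ is decreasing in $k^*$, this is smallest at $k^*=1$, so the LP value is at least $1-\tfrac{3(1-\alpha_i+\delta)}{\paymentConstant(3-\alpha_i)}=1-\tfrac{3(1-\alpha_i)}{3\paymentConstant-\alpha_i\paymentConstant}-O(\sqrt{\log T/T})$, which with the reduction above proves the lemma.

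The main difficulty is not conceptual but bookkeeping: carrying the additive slack $\delta$ through the vertex analysis and checking it does not disrupt the ``$X_0$ is nonzero'' step. A slicker alternative that avoids the case analysis altogether is weak LP duality: the pair $(\lambda,\mu)=\bigl(1,\ \tfrac{3}{\paymentConstant(3-\alpha_i)}\bigr)$ is dual-feasible — the only nontrivial dual constraints are $1\le\tfrac{1}{1+k}+\mu\paymentConstant c_k$ for $k\ge 1$, which via $c_k=\tfrac{k}{1+k}(1-\tfrac{\alpha_i}{2+k})$ reduce to $\tfrac{1}{2+k}\le\tfrac13$ — and it certifies the lower bound $1-\mu(1-\alpha_i+\delta)=1-\tfrac{3(1-\alpha_i)}{3\paymentConstant-\alpha_i\paymentConstant}-O(\sqrt{\log T/T})$ directly (in fact for any $\paymentConstant>0$, with $\paymentConstant\ge 2$ only used to keep the $O(\cdot)$ term uniformly small). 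Either route then combines with \eqref{eq:agent_wins_lower_bound} to finish.
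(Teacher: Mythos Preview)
Your primal-LP argument is correct and matches the paper's proof essentially line for line: reduce via \cref{lem:agent_utility_adversary_budget_constraint_rewritten} with $\tau_i=T$, argue the minimizing vertex has $X_0>0$ (the paper uses $c_k>\tfrac{1-\alpha_i}{2}$ rather than your $c_k\ge c_1$, but both work under $\paymentConstant\ge 2$), take the budget constraint tight, and minimize over $k^*$.

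Your weak-duality certificate $(\lambda,\mu)=\bigl(1,\tfrac{3}{\paymentConstant(3-\alpha_i)}\bigr)$ is a genuine streamlining the paper does not do: it bypasses the vertex structure entirely, and as you observe, the dual feasibility check reduces to $\tfrac{1}{2+k}\le\tfrac13$ and holds for any $\paymentConstant>0$, so the hypothesis $\paymentConstant\ge 2$ is in fact not needed for this lemma at all (your parenthetical about ``keeping the $O(\cdot)$ term uniformly small'' undersells this---the error term $\tfrac{3\delta}{\paymentConstant(3-\alpha_i)}$ is $O(\sqrt{\log T/T})$ for any fixed $\paymentConstant>0$). The paper's vertex argument has the advantage of being self-contained and exhibiting the worst-case adversary explicitly (mass split between $k=0$ and $k=1$), while your dual certificate is shorter and more robust.
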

\begin{proof}
The assumption that $\sum_{t=1}^T P_i[t] < \alpha_iT$ implies that $\tau_i = T$ by the definition of $\tau_i$. By \cref{lem:agent_utility_adversary_budget_constraint_rewritten}, specifically \eqref{eq:agent_adversary_budget_matching}, on $E$, we have
\begin{equation}
    \paymentConstant\sum_{k=1}^{n-1} \left((1-\alpha_i)\cdot\frac{k}{1+k} + \alpha_i\cdot\frac{k}{2+k}\right)X_k \leq 1-\alpha_i + O\left(\sqrt{\frac{\log T}{T}}\right)
\end{equation}
and
\begin{equation} 
    \frac{1}{\alpha_i T}\sum_{t=1}^T W_i[t] \geq \sum_{k=0}^{n-1} \frac{X_k}{1+k} - O\left(\sqrt{\frac{\log T}{T}}\right).
\end{equation}
Also, we clearly have $X_k\geq 0$ for each $k$ and $\sum_{k=0}^{n-1} X_k = 1$. Thus, $\frac{1}{\alpha_i T}\sum_{t=1}^T W_i[t]$ is lower bounded by the value of the following minimization problem, where $f(T)$ and $g(T)$ are functions in $ O\left(\sqrt{\frac{\log T}{T}}\right)$.
\begin{align}
    \min_{(x_k)_{k=0}^n} \quad & \sum_{k=0}^{n-1} \frac{x_k}{1+k} - f(T)\nonumber \\
    \text{s.t.} \quad & \paymentConstant\sum_{k=1}^{n-1}\left((1-\alpha_i)\cdot\frac{k}{1+k} + \alpha_i\cdot\frac{k}{2+k}\right)x_k \leq 1-\alpha_i + g(T)\label{eq:adversary_budget_constraint_agent_does_not_run_out_of_money_in_optimization_problem} \\
    & \sum_{k=0}^{n-1} x_k = 1 \nonumber \\
    & x_k \geq 0 & \forall k \nonumber
\end{align}
This is a linear program and so its minimum is achieved at an extreme point $(x_k^*)$ of its feasible polytope. This implies that there are only two nonzero coordinates $x_k^*$. At least one of these two nonzero coordinates must be $k=0$ because if $k>0$, then
\begin{equation*}
    \paymentConstant\left((1-\alpha_i)\cdot\frac{k}{1+k} + \alpha_i\cdot\frac{k}{2+k}\right) > \frac{\paymentConstant}{2}(1-\alpha_i) \geq 1-\alpha_i.
\end{equation*}
We use the assumption that $\paymentConstant \geq 2$ for the last inequality. Thus, if $\sum_{k=1}^{n-1} x_k = 1$, for $T$ sufficiently large, \eqref{eq:adversary_budget_constraint_agent_does_not_run_out_of_money_in_optimization_problem} cannot hold. It follows that any feasible $(x_k)$ has $x_0>0$.

Let $k^* > 0$ be the nonzero coordinate of $(x_k^*)$. The constraint \eqref{eq:adversary_budget_constraint_agent_does_not_run_out_of_money_in_optimization_problem} says
\begin{equation*}
    \paymentConstant\left((1-\alpha_i)\cdot \frac{k^*}{1+k^*} + \alpha_i\cdot \frac{k^*}{2+k^*}\right)x_{k^*}^* \leq 1-\alpha_i + g(T).
\end{equation*}
Solve for $x_{k^*}^*$ to obtain 
\begin{equation*}
    x_{k^*}^* \leq \frac{(1+k^*)(2+k^*)(1-\alpha_i)}{\paymentConstant k^*(2+k^*-\alpha_i)} + O\left(\sqrt{\frac{\log T}{T}}\right).
\end{equation*}
Then, the objective value that $(x_{k^*}^*)$ achieves is
\begin{equation*}
\begin{split}
    x_0^* + \frac{x_{k^*}^*}{1+k^*} - f(T) & = (1-x_{k^*}^*) + \frac{x_{k^*}^*}{1+k^*} - O\left(\sqrt{\frac{\log T}{T}}\right)\\
    & = 1 - \frac{(2+k^*)(1-\alpha_i)}{\paymentConstant(2+k^*-\alpha_i)} - O\left(\sqrt{\frac{\log T}{T}}\right).
\end{split}
\end{equation*}
The above is increasing in $k^*$, so it is minimized when $k^*=1$, in which case it is
\begin{equation*}
    1 - \frac{3(1-\alpha_i)}{3\paymentConstant - \alpha_i\paymentConstant} - O\left(\sqrt{\frac{\log T}{T}}\right),
\end{equation*}
as desired.

\end{proof}

Now we handle the case where the agent runs out of budget.
\begin{lemma} \label{lem:robustness_when_running_out_of_budget}
On the event $E$ in \cref{lem:high_probability}, if $\sum_{t=1}^T P_i[t] \geq \alpha_i T$, then
\begin{equation*}
    \frac1{\alpha_i T}\sum_{t=1}^T W_i[t] \geq \frac{5-\alpha_i}{\paymentConstant(3-\alpha_i)} - O\left(\sqrt{\frac{\log T}{T}}\right).
\end{equation*}
\end{lemma}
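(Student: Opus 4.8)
The plan is to mirror the argument used for \cref{lem:robustness_when_not_running_out_of_budget}, but now combining \emph{all three} conclusions of \cref{lem:agent_utility_adversary_budget_constraint_rewritten}. Since we are in the case $\sum_{t=1}^T P_i[t]\ge\alpha_iT$, the stopping-time bound \eqref{eq:agent_stopping_time_lower_bound} applies and lower-bounds $\tau_i/T$ by $1/(\paymentConstant\sum_k X_k/(2+k))$ up to $O(\sqrt{\log T/T})$. First I would substitute this lower bound on $\tau_i/T$ into the agent-wins bound \eqref{eq:agent_wins_lower_bound} and into the adversary-budget bound \eqref{eq:agent_adversary_budget_matching}, using $\sum_k X_k/(1+k)\le 1$, $\sum_k X_k/(2+k)\le\tfrac12$, and $\sum_{k\ge1}\big((1-\alpha_i)\tfrac{k}{1+k}+\alpha_i\tfrac{k}{2+k}\big)X_k\le 1$ to keep every residual term $O(\sqrt{\log T/T})$. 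This eliminates $\tau_i$ and shows that, on the event $E$ of \cref{lem:high_probability}, the quantity $\tfrac1{\alpha_iT}\sum_t W_i[t]$ is, up to an additive $O(\sqrt{\log T/T})$, at least the value of
\[
\min_{\substack{x\ge 0\\ \sum_k x_k=1}}\ \frac{\sum_{k=0}^{n-1}x_k/(1+k)}{\paymentConstant\sum_{k=0}^{n-1}x_k/(2+k)}\quad\text{s.t.}\quad (1-\alpha_i)\sum_{k=0}^{n-1}\frac{x_k}{2+k}\ \ge\ \sum_{k=1}^{n-1}\Big((1-\alpha_i)\tfrac{k}{1+k}+\alpha_i\tfrac{k}{2+k}\Big)x_k,
\]
whose denominator is bounded below by $\paymentConstant/(n+1)>0$ on the feasible set. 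Note that $\paymentConstant$ factors out of the objective and cancels in the budget constraint, so (unlike in \cref{lem:robustness_when_not_running_out_of_budget}) no lower bound on $\paymentConstant$ is needed here.

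Next I would solve this program. The objective is linear-fractional with a strictly positive denominator, hence quasi-concave on the (bounded) feasible polytope, so its minimum is attained at a vertex; each vertex has one equality plus at least $n-1$ of the remaining constraints active, hence at most two nonzero coordinates, and is either $x_0=1$ (budget constraint slack) or has exactly two nonzero coordinates with the budget constraint tight. Moreover every feasible point has $x_0>0$: if $x_0=0$, then since $\tfrac{k}{1+k}>\tfrac1{2+k}$ for $k\ge1$ the right-hand side of the budget constraint strictly exceeds $(1-\alpha_i)\sum_k x_k/(2+k)$, a contradiction. Hence the minimizing vertex is $x_0=1$, or it has $x_0$ together with a single $x_{k^*}$ ($k^*\ge1$) nonzero and the budget constraint tight — which forces $x_{k^*}<1$ because $\tfrac{k^*}{1+k^*}\ge\tfrac12$, so it genuinely is a vertex.

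Then I would evaluate the finitely many candidate vertices. At $x_0=1$ the objective equals $2/\paymentConstant$. At the $k^*$-vertex, the tight budget constraint pins down $x_{k^*}$; substituting $x_0=1-x_{k^*}$ into the objective and simplifying gives $\frac{3+2k^*-\alpha_i}{\paymentConstant(2+k^*-\alpha_i)}=\tfrac1{\paymentConstant}\big(1+\tfrac{1+k^*}{2+k^*-\alpha_i}\big)$, whose derivative in $k^*$ is $\tfrac{1-\alpha_i}{(2+k^*-\alpha_i)^2}>0$, so it is minimized at $k^*=1$, where it equals $\frac{5-\alpha_i}{\paymentConstant(3-\alpha_i)}$. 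Since $2(3-\alpha_i)\ge 5-\alpha_i$ iff $\alpha_i\le1$, the $x_0=1$ vertex is no smaller, so the program's value is $\frac{5-\alpha_i}{\paymentConstant(3-\alpha_i)}$. Re-introducing the $O(\sqrt{\log T/T})$ slack (the closed-form value is Lipschitz in the slack of the budget constraint) yields the claimed bound.

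I expect the only real obstacle to be the first step: correctly propagating the $O(\sqrt{\log T/T})$ errors of \cref{lem:high_probability} through the substitution of \eqref{eq:agent_stopping_time_lower_bound} into \eqref{eq:agent_wins_lower_bound} and \eqref{eq:agent_adversary_budget_matching}, and verifying that the realized $(X_k)$ lie in the feasible set of the relaxed program so that it is a genuine lower bound. The optimization itself — quasi-concavity, the two-nonzero-coordinate vertex structure, forcing $x_0>0$, and the monotonicity in $k^*$ — is elementary and closely parallels \cref{lem:robustness_when_not_running_out_of_budget}.
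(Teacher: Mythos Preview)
Your proposal is correct and follows essentially the same route as the paper: substitute the stopping-time bound \eqref{eq:agent_stopping_time_lower_bound} into \eqref{eq:agent_wins_lower_bound} and \eqref{eq:agent_adversary_budget_matching} to reduce to a linear-fractional program over the simplex, use quasi-concavity to restrict to vertices with at most two nonzero coordinates, force $x_0>0$ via the budget constraint, and then minimize the closed-form value $\tfrac{3+2k^*-\alpha_i}{\paymentConstant(2+k^*-\alpha_i)}$ over $k^*\ge1$. You are actually a bit more careful than the paper in explicitly treating the degenerate vertex $x_0=1$ (value $2/\paymentConstant$) and checking that it is dominated; the paper's write-up glosses over this case.
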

\begin{proof}
Substituting the stopping time bound \eqref{eq:agent_stopping_time_lower_bound} in \cref{lem:agent_utility_adversary_budget_constraint_rewritten} for when $\sum_{t=1}^T P_i[t] \geq \alpha_i T$ into \eqref{eq:agent_adversary_budget_matching} and \eqref{eq:agent_wins_lower_bound}, we obtain the following.
\begin{equation*}
    \frac{\sum_{k=1}^{n-1} \left((1-\alpha_i)\cdot\frac{k}{1+k} + \alpha_i\cdot\frac{k}{2+k}\right)X_k}{\sum_{k=0}^{n-1} \frac{X_k}{2+k}} \leq 1-\alpha_i + O\left(\sqrt{\frac{\log T}{T}}\right)
\end{equation*}
\begin{equation*}
    \frac{1}{\alpha_i T}\sum_{t=1}^T W_i[t] \geq \frac{\sum_{k=0}^{n-1} \frac{X_k}{1+k}}{\paymentConstant\sum_{k=0}^{n-1}\frac{X_k}{2+k}} - O\left(\sqrt{\frac{\log T}{T}}\right)
\end{equation*}

In addition, the $(X_k)_{k=0}^n$ satisfy $\sum_{k=0}^{n-1} X_k=1$ and $X_k\geq 0$ for all $k$. Therefore, it suffices to lower bound the value of the following minimization problem, where $f(T)$ and $g(T)$ are functions in $O\left(\sqrt{\frac{\log T}{T}}\right)$
\begin{align}
    \min_{(x_k)_{k=0}^n} \quad & \frac{\sum_{k=0}^{n-1} \frac{x_k}{1+k}}{\paymentConstant\sum_{k=0}^{n-1} \frac{x_k}{2+k}} - f(T). \nonumber \\
    \text{s.t.} \quad & (1-\alpha_i)\sum_{k=0}^{n-1} \frac{x_k}{2+k} \geq \sum_{k=1}^{n-1}\left((1-\alpha_i)\cdot\frac{k}{1+k} + \alpha_i\cdot\frac{k}{2+k}\right)x_k - g(T)\label{eq:agent_adversary_budget_matching_optimization_problem}\\
    & \sum_{k=0}^{n-1} x_k = 1 \nonumber \\
    & x_k \geq 0 & \forall k \nonumber
\end{align}
The objective function is a linear-fractional function where the denominator is always positive, so it is quasi-concave and achieves its minimum at a point $(x_k^*)$ at an extreme point of the feasible polytope. This must occur when there are only two coordinates $x_k^*$ that are nonzero. We now show that one of the nonzero coordinates must be $k=0$ by proving that $x_k>0$ for any feasible $(x_k)$. Since if $(x_k)$ has $x_0=0$, the left-hand side of \eqref{eq:agent_adversary_budget_matching_optimization_problem} is just $(1-\alpha_i)\sum_{k=1}^{n-1} \frac{x_k}{2+k}$ and the right-hand side is greater than $(1-\alpha_i)\sum_{k=1}^{n-1} \frac{k}{1+k}\cdot x_k - O\left(\sqrt{\frac{\log T}{T}}\right)$, which is greater for sufficiently large $T$, a contradiction.

Let $k^* > 0$ be the nonzero coordinate of $(x_k^*)$. The constraint \eqref{eq:agent_adversary_budget_matching_optimization_problem} says
\begin{equation*}
   (1-\alpha_i)\left(\frac{x_0^*}{2} + \frac{x_{k^*}^*}{2+k^*}\right) \geq  \left((1-\alpha_i)\cdot\frac{k^*}{1+k^*} + \alpha_i\cdot\frac{k^*}{2+k^*}\right)x_{k^*}^* - g(T).
\end{equation*}
Substitute $x_0^* = 1-x_{k^*}^*$ into the above and solve for $x_k^*$ to obtain
\begin{equation*}
    x_{k^*}^* \leq \frac{(1+k^*)(2+k^*)(1-\alpha_i)}{k^*(5 + 3k^* - 3\alpha_i - k^*\alpha_i)} + O\left(\sqrt{\frac{\log T}{T}}\right).
\end{equation*}
Then, using the above bound, the objective value that $(x_k^*)$ achieves is 
\begin{equation*}
\begin{split}
    \frac{\frac{x_0^*}{2} + \frac{x_{k^*}^*}{1+k^*}}{\paymentConstant\left(\frac{x_0^*}{2} + \frac{x_{k^*}^*}{2+k^*}\right)} - f(T) & = \frac{\frac{1-x_{k^*}^*}{2} + \frac{x_{k^*}^*}{1+k^*}}{\paymentConstant\left(\frac{1-x_{k^*}^*}{2} + \frac{x_{k^*}^*}{2+k^*}\right)} - f(T)\\
    & \geq \frac{3+2k^*-\alpha_i}{\paymentConstant(2+k^*-\alpha_i)} - O\left(\sqrt{\frac{\log T}{T}}\right).
\end{split}
\end{equation*}
The above is increasing in $k^*$, so it is minimized when $k^*=1$, in which case it is
\begin{equation*}
    \frac{5-\alpha_i}{\paymentConstant(3-\alpha_i)} - O\left(\sqrt{\frac{\log T}{T}}\right),
\end{equation*}
as desired.

\end{proof}

Now we prove our main robustness result from \cref{lem:robustness_when_not_running_out_of_budget,lem:robustness_when_running_out_of_budget}. We state the robustness in terms of $\paymentConstant$. The below theorem implies \cref{thm:robustness_theorem}; just note that substituting $\paymentConstant = 8/3$ guarantees
\begin{equation*}
    \lambda_i \geq \frac{3(5-\alpha_i)}{8(3-\alpha_i)} - O\left(\sqrt{\frac{\log T}{T}}\right) \geq \frac58 - O\left(\sqrt{\frac{\log T}{T}}\right).
\end{equation*}

\begin{theorem}
\label{thm:general_robustness_claim}
When running \allPayMechanism with $\paymentConstant\geq 2$, if agent $i$ uses an $\alpha_i$-aggressive strategy, regardless of the strategies of other agents,
\begin{equation*}
    \frac1T\sum_{t=1}^T U_i[t] \geq \lambda_i v_i^*
\end{equation*}
with probability at least $1-O(1/T^2)$ where 
\begin{equation*}
    \lambda_i \geq \min\left\{1 - \frac{3(1-\alpha_i)}{3\paymentConstant - \alpha_i\paymentConstant}, \frac{5-\alpha_i}{\paymentConstant(3-\alpha_i)}\right\} - O\left(\sqrt{\frac{\log T}{T}}\right).
\end{equation*}
In particular, an $\alpha_i$-aggressive strategy is $\lambda_i$-robust.
\end{theorem}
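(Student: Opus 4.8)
The plan is to assemble \cref{thm:general_robustness_claim} directly from the lemmas already in hand; no new ideas are needed. First I would invoke \cref{lem:bernoulli_reduction}: on an event of probability $1-O(1/T^2)$ we have $\big|\frac1T\sum_{t=1}^T U_i[t] - \frac{v_i^*}{\alpha_i T}\sum_{t=1}^T W_i[t]\big| \le \bar v\sqrt{\log T/T}$, where $\bar v$ bounds $\mathcal F_i$. Since $n$, $\alpha_i$ and $v_i^*$ are constants not depending on $T$, it suffices to show that with high probability $\frac{1}{\alpha_i T}\sum_t W_i[t] \ge \min\{1 - \tfrac{3(1-\alpha_i)}{3\paymentConstant - \alpha_i\paymentConstant},\ \tfrac{5-\alpha_i}{\paymentConstant(3-\alpha_i)}\} - O(\sqrt{\log T/T})$: multiplying this by $v_i^*/\alpha_i$ and subtracting the reduction error gives $\frac1T\sum_t U_i[t] \ge \lambda_i v_i^*$ for $\lambda_i$ equal to the displayed minimum minus an $O(\sqrt{\log T/T})$ term.

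Next I would condition on the event $E$ of \cref{lem:high_probability} (also of probability $1-O(1/T^2)$), on which the occupation frequencies $(X_k)$, the agent's wins, and both parties' payments are controlled by their conditional expectations. The argument then splits on the stopping time $\tau_i$: if $\sum_{t=1}^T P_i[t] < \alpha_i T$ (the agent never exhausts her budget, so $\tau_i = T$), then \cref{lem:robustness_when_not_running_out_of_budget} — applicable since $\paymentConstant = 8/3 \ge 2$ — gives $\frac{1}{\alpha_i T}\sum_t W_i[t] \ge 1 - \tfrac{3(1-\alpha_i)}{3\paymentConstant - \alpha_i\paymentConstant} - O(\sqrt{\log T/T})$; if instead $\sum_{t=1}^T P_i[t] \ge \alpha_i T$, then \cref{lem:robustness_when_running_out_of_budget} gives $\frac{1}{\alpha_i T}\sum_t W_i[t] \ge \tfrac{5-\alpha_i}{\paymentConstant(3-\alpha_i)} - O(\sqrt{\log T/T})$. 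In both cases the fraction of rounds won is at least the minimum of the two expressions minus the error term, and a union bound over the two failure events of \cref{lem:bernoulli_reduction} and \cref{lem:high_probability} (each $O(1/T^2)$) yields the high-probability statement of \cref{thm:general_robustness_claim}.

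Finally I would upgrade to the in-expectation statement needed for $\lambda_i$-robustness. The cleanest route uses the exact identity $\E[U_i[t]\mid(W_i[s])_s] = \frac{v_i^*}{\alpha_i}W_i[t]$ established inside the proof of \cref{lem:bernoulli_reduction}, so that $\frac1T\sum_t\E[U_i[t]] = \frac{v_i^*}{\alpha_i}\cdot\frac1T\sum_t\E[W_i[t]]$; since $W_i[t]\in[0,1]$ and the bad events have probability $O(1/T^2)$, the high-probability lower bound on $\frac1{\alpha_i T}\sum_t W_i[t]$ passes to its expectation up to an $O(1/T^2)$ loss absorbed into the $O(\sqrt{\log T/T})$ slack, giving $\frac1T\sum_t\E[U_i[t]] \ge \lambda_i v_i^*$ regardless of the other agents' strategies. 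Substituting $\paymentConstant = 8/3$ and simplifying $\min\{1 - \tfrac{3(1-\alpha_i)}{3\paymentConstant - \alpha_i\paymentConstant},\ \tfrac{5-\alpha_i}{\paymentConstant(3-\alpha_i)}\} = \tfrac{3(5-\alpha_i)}{8(3-\alpha_i)} \ge \tfrac58$ then recovers \cref{thm:robustness_theorem}. There is essentially no obstacle at the level of this theorem — it is a case split plus bookkeeping of $O(\sqrt{\log T/T})$ error terms; the real work sits upstream in \cref{lem:agent_utility_adversary_budget_constraint_rewritten,lem:robustness_when_not_running_out_of_budget,lem:robustness_when_running_out_of_budget}, namely reducing the adversary's best response to a two-constraint linear (resp. quasi-concave linear-fractional) program in the $(X_k)$, showing an optimum is attained with only two nonzero coordinates one of which is forced to be $X_0$ by $\paymentConstant \ge 2$, and checking monotonicity in the second index $k^*$ to pin the worst case at $k^* = 1$.
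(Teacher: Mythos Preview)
Your proposal is correct and follows essentially the same route as the paper: case-split on whether agent $i$ exhausts her budget, invoke \cref{lem:robustness_when_not_running_out_of_budget,lem:robustness_when_running_out_of_budget} on the event $E$ of \cref{lem:high_probability}, translate wins to utility via \cref{lem:bernoulli_reduction}, and then pass from high probability to expectation. The only cosmetic slip is that you write ``applicable since $\paymentConstant = 8/3 \ge 2$'' where the theorem is stated for general $\paymentConstant \ge 2$; also, your expectation step uses the conditional identity $\E[U_i[t]\mid (W_i[s])_s] = \tfrac{v_i^*}{\alpha_i}W_i[t]$ extracted from inside \cref{lem:bernoulli_reduction}, whereas the paper simply bounds $\E[U_i[t]\pmb1_E]$ directly---both are valid and equally short.
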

\begin{proof}
By \cref{lem:robustness_when_not_running_out_of_budget,lem:robustness_when_running_out_of_budget}, on all of $E$,
\begin{equation*}
    \frac{1}{\alpha_iT}\sum_{t=1}^T W_i[t] = \frac{1}{\alpha_iT}\sum_{t=1}^T W_i[t] \geq \min\left\{1 - \frac{3(1-\alpha_i)}{3\paymentConstant - \alpha_i\paymentConstant}, \frac{5-\alpha_i}{\paymentConstant(3-\alpha_i)}\right\} - O\left(\sqrt{\frac{\log T}{T}}\right).
\end{equation*}
where the first equality follows from the fact that the agent only bids when she has value $1$.

It follows from \cref{lem:bernoulli_reduction} that 
\begin{equation*}
    \frac1T\sum_{t=1}^T U_i[t] \geq \min\left\{1 - \frac{3(1-\alpha_i)}{3\paymentConstant - \alpha_i\paymentConstant}, \frac{5-\alpha_i}{\paymentConstant(3-\alpha_i)}\right\}v_i^* - O\left(\sqrt{\frac{\log T}{T}}\right).
\end{equation*}
with probability at least $\Pr(E) - O(1/T^2) \geq 1- O(1/T^2)$.

For the robustness claim, we just need to convert the high probability claim into one in expectation:
\begin{equation*}
\begin{split}
    \frac1T\sum_{t=1}^T\E[U_i[t]] \geq \frac1T\sum_{t=1}^T \E[U_i[t]\pmb1_E] \geq \left(\min\left\{1 - \frac{3(1-\alpha_i)}{3\paymentConstant - \alpha_i\paymentConstant}, \frac{5-\alpha_i}{\paymentConstant(3-\alpha_i)}\right\}v_i^* - O\left(\sqrt{\frac{\log T}{T}}\right)\right)\Pr(E)\\
    \geq \left(\min\left\{1 - \frac{3(1-\alpha_i)}{3\paymentConstant - \alpha_i\paymentConstant}, \frac{5-\alpha_i}{\paymentConstant(3-\alpha_i)}\right\}v_i^* - O\left(\sqrt{\frac{\log T}{T}}\right)\right)\left(1-\frac{1}{T^2}\right)\\
    \geq \min\left\{1 - \frac{3(1-\alpha_i)}{3\paymentConstant - \alpha_i\paymentConstant}, \frac{5-\alpha_i}{\paymentConstant(3-\alpha_i)}\right\}v_i^* - O\left(\sqrt{\frac{\log T}{T}}\right).
\end{split}
\end{equation*}
\end{proof}
\section{Expectations of Functions of Binomial Random Variables}
\begin{proposition}
\label{prop:expectation_of_functions_of_binom}
Let $X\sim\Binom(n-1,1/n)$ and $Y\sim\Binom(n-2,1/n)$. Then,
\begin{align*}
    \E\left[\frac{1}{1+X}\right] & = 1-\left(1-\frac1n\right)^n\\
    \E\left[\frac{1}{2+X}\right] & = \frac{1 + n(1-1/n)^{n+1}}{n+1}\\
    \E\left[\frac{1}{2+Y}\right] & = \left(1-\frac1n\right)^{n-1}\\
    \E\left[\frac{1}{3+Y}\right] & = \frac{n}{n+1}\left(1 - 2\left(1-\frac1n\right)^n\right)
\end{align*}
\end{proposition}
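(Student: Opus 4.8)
The plan is to reduce all four identities to a single elementary device: for any $Z\sim\Binom(m,p)$ and integer $k\ge 1$,
\begin{equation*}
\E\!\left[\frac{1}{k+Z}\right] \;=\; \int_0^1 x^{k-1}\,\E\!\left[x^{Z}\right]\,dx \;=\; \int_0^1 x^{k-1}(1-p+px)^{m}\,dx,
\end{equation*}
which follows from $\int_0^1 x^{k-1+j}\,dx = \tfrac{1}{k+j}$ together with the probability generating function $\E[x^{Z}]=(1-p+px)^m$. First I would record this identity and then specialize to $p=1/n$. The reason $p=1/n$ is convenient is that $1-p+px=(n-1+x)/n$, so each of the four expectations becomes $n^{-m}\int_0^1 x^{k-1}(n-1+x)^{m}\,dx$ — a polynomial integral, with $(k,m)$ equal to $(1,n-1)$, $(2,n-1)$, $(2,n-2)$, and $(3,n-2)$ respectively.

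Next, for each case I would substitute $u=n-1+x$, so $x=u-(n-1)$ with $u$ ranging over $[n-1,n]$, expand $x^{k-1}=(u-(n-1))^{k-1}$ by the binomial theorem (at most three terms, since $k\le 3$), and integrate the resulting monomials termwise via $\int_{n-1}^{n}u^{a}\,du=\tfrac{1}{a+1}\big(n^{a+1}-(n-1)^{a+1}\big)$. This produces an explicit closed form for each expectation; e.g. for $\E[1/(3+Y)]$ one needs $\int_{n-1}^n u^a\,du$ for $a=n,n-1,n-2$, while $\E[1/(1+X)]$ needs only the single term $\int_{n-1}^n u^{n-1}\,du$.

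The last step, and the one I expect to be the only real nuisance, is the algebraic simplification of each closed form into the stated shape. The key move there is to write $n-1=n(1-1/n)$, i.e. set $q:=1-1/n$ and use $(n-1)^{a}=n^{a}q^{a}$; after dividing by $n^{m}$ the powers of $n$ telescope, leaving an expression of the form $\tfrac{A(n)}{n+1}+\tfrac{B(n)}{n+1}q^{r}$ with $A,B$ low-degree polynomials, which collapses to the claimed formula. This is routine but error-prone, so I would verify each identity at $n=3$ as a sanity check (e.g. $\E[1/(3+Y)]=\tfrac{2}{3}\cdot\tfrac{1}{3}+\tfrac{1}{3}\cdot\tfrac{1}{4}=\tfrac{11}{36}=\tfrac{3}{4}\big(1-2(2/3)^3\big)$).

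Finally, as a cross-check or an alternative that avoids calculus, I would note the combinatorial route: $\E[1/(1+Z)]=\tfrac{1-(1-p)^{m+1}}{(m+1)p}$ via $\tfrac{1}{j+1}\binom{m}{j}=\tfrac{1}{m+1}\binom{m+1}{j+1}$, and more generally $\binom{m}{j}\big/\big((j+1)\cdots(j+\ell)\big)=\binom{m+\ell}{j+\ell}\big/\big((m+1)\cdots(m+\ell)\big)$ gives closed forms for $\E\big[1/((1+Z)\cdots(\ell+Z))\big]$; partial-fractioning $\tfrac{1}{k+Z}$ into such products (e.g. $\tfrac{1}{2+Z}=\tfrac{1}{1+Z}-\tfrac{1}{(1+Z)(2+Z)}$ and $\tfrac{1}{3+Z}=\tfrac{1}{1+Z}-\tfrac{2}{(1+Z)(2+Z)}+\tfrac{2}{(1+Z)(2+Z)(3+Z)}$) then yields all four. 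I would present the integral version as primary, since it dispatches all four cases uniformly.
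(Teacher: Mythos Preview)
Your proposal is correct and takes essentially the same approach as the paper: the paper also writes $\tfrac{1}{k+j}=\int_0^1 t^{k-1+j}\,dt$, swaps sum and integral, applies the binomial theorem to get $\int_0^1 t^{k-1}\big(1+\tfrac{t-1}{n}\big)^{m}\,dt$, and then integrates directly. Your substitution $u=n-1+x$ and the combinatorial cross-check are nice extra detail, but the core device is identical.
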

\begin{proof}
Write
\begin{equation*}
    \E\left[\frac{1}{1+X}\right] = \sum_{k=0}^{n-1}\frac{1}{1+k}\binom{n-1}{k}\left(\frac1n\right)^k\left(1-\frac1n\right)^{n-1-k}.
\end{equation*}
Now we write $\frac{1}{1+k} = \int_0^1 t^k\,dt$, swap the integral and summation, apply the binomial theorem, and integrate:
\begin{equation*}
\begin{split}
    \sum_{k=0}^{n-1}\frac{1}{1+k}\binom{n-1}{k}\left(\frac1n\right)^k\left(1-\frac1n\right)^{n-1-k} & = \sum_{k=0}^{n-1}\left(\int_0^1 t^k\,dt\right)\binom{n-1}{k}\left(\frac1n\right)^k\left(1-\frac1n\right)^{n-1-k}\\
    & = \int_0^1 \sum_{k=0}^{n-1}\binom{n-1}{k} \left(\frac tn\right)^k\left(1-\frac1n\right)^{n-1-k}\,dt\\
    & = \int_0^1 \left(1 + \frac{t-1}n\right)^{n-1}\,dt\\
    & = 1-\left(1-\frac1n\right)^n.
\end{split}
\end{equation*}

The others are computed similarly.
\begin{equation*}
\begin{split}
    \E\left[\frac{1}{2+X}\right] & = \sum_{k=0}^{n-1}\frac{1}{2+k}\binom{n-1}{k}\left(\frac1n\right)^k\left(1-\frac1n\right)^{n-1-k}\\
    & = \sum_{k=0}^{n-1}\left(\int_0^1 t^{1+k}\,dt\right)\binom{n-1}{k}\left(\frac1n\right)^k\left(1-\frac1n\right)^{n-1-k}\\
    & = \int_0^1 t\sum_{k=0}^{n-1}\binom{n-1}{k} \left(\frac tn\right)^k\left(1-\frac1n\right)^{n-1-k}\,dt\\
    & = \int_0^1 t\left(1 + \frac{t-1}n\right)^{n-1}\,dt\\
    & = \frac{1 + n(1-1/n)^{n+1}}{n+1}
\end{split}
\end{equation*}

\begin{equation*}
\begin{split}
    \E\left[\frac{1}{2+Y}\right] & = \sum_{k=0}^{n-2}\frac{1}{2+k}\binom{n-2}{k}\left(\frac1n\right)^k\left(1-\frac1n\right)^{n-2-k}\\
    & = \sum_{k=0}^{n-2}\left(\int_0^1 t^{1+k}\,dt\right)\binom{n-2}{k}\left(\frac1n\right)^k\left(1-\frac1n\right)^{n-2-k}\\
    & = \int_0^1 t\sum_{k=0}^{n-2}\binom{n-2}{k} \left(\frac tn\right)^k\left(1-\frac1n\right)^{n-2-k}\,dt\\
    & = \int_0^1 t\left(1 + \frac{t-1}n\right)^{n-2}\,dt\\
    & = \left(1-\frac1n\right)^{n-1}
\end{split}
\end{equation*}

\begin{equation*}
\begin{split}
    \E\left[\frac{1}{3+Y}\right] & = \sum_{k=0}^{n-2}\frac{1}{3+k}\binom{n-2}{k}\left(\frac1n\right)^k\left(1-\frac1n\right)^{n-2-k}\\
    & = \sum_{k=0}^{n-2}\left(\int_0^1 t^{2+k}\,dt\right)\binom{n-2}{k}\left(\frac1n\right)^k\left(1-\frac1n\right)^{n-2-k}\\
    & = \int_0^1 t^2\sum_{k=0}^{n-2}\binom{n-2}{k} \left(\frac tn\right)^k\left(1-\frac1n\right)^{n-2-k}\,dt\\
    & = \int_0^1 t^2\left(1 + \frac{t-1}n\right)^{n-2}\,dt\\
    & = \frac{n}{n+1}\left(1 - 2\left(1-\frac1n\right)^n\right)
\end{split}
\end{equation*}
\end{proof}

\begin{proposition} \label{prop:ratio_of_binomials}
Let $k\leq m$, and let $X\sim\Binom(k,1/m)$ and $Y\sim\Binom(m-k,1/m)$ be independent. Then,
\begin{equation*}
    \E\left[\frac{X}{X+Y}\pmb1_{X>0}\right] = \frac km\left(1-\left(1-\frac1m\right)^m\right)
\end{equation*}
and
\begin{equation*}
    \E\left[\frac{X}{1+X+Y}\right] = \frac{k}{m}\cdot \frac{1 + m(1-1/m)^{m+1}}{m+1}.
\end{equation*}
\end{proposition}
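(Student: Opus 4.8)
The plan is to decouple the two binomials into Bernoulli summands and exploit exchangeability. Write $X = Z_1 + \dots + Z_k$ and $Y = Z_{k+1} + \dots + Z_m$, where $Z_1, \dots, Z_m$ are i.i.d.\ $\Bern(1/m)$, so that $S := X + Y = \sum_{j=1}^m Z_j \sim \Binom(m, 1/m)$. Since $Z_j = 1$ forces $X \geq 1$, we have $Z_j \pmb1_{X>0} = Z_j$, and on the event $\{Z_j = 1\}$ the ratios $Z_j/S$ and $Z_j/(1+S)$ are well-defined; hence, with the convention that the $j$-th summand vanishes when $Z_j = 0$,
\[
\frac{X}{X+Y}\pmb1_{X>0} = \sum_{j=1}^k \frac{Z_j}{S}\pmb1_{S>0}, \qquad \frac{X}{1+X+Y} = \sum_{j=1}^k \frac{Z_j}{1+S}.
\]
Taking expectations and using linearity reduces both identities to evaluating a single per-coordinate expectation.

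The key observation is that, by exchangeability of $Z_1, \dots, Z_m$, the quantities $\E[Z_j\pmb1_{S>0}/S]$ and $\E[Z_j/(1+S)]$ do not depend on $j \in [m]$. Summing over all $m$ coordinates gives $\sum_{j=1}^m Z_j\pmb1_{S>0}/S = \pmb1_{S>0}$ and $\sum_{j=1}^m Z_j/(1+S) = S/(1+S)$, so
\[
\E\!\left[\frac{Z_j}{S}\pmb1_{S>0}\right] = \frac1m\Pr(S>0) = \frac1m\!\left(1-\left(1-\tfrac1m\right)^m\right), \qquad \E\!\left[\frac{Z_j}{1+S}\right] = \frac1m\,\E\!\left[\frac{S}{1+S}\right].
\]
Multiplying the first by $k$ yields the first identity immediately, and reduces the second to computing $\E[S/(1+S)] = 1 - \E[1/(1+S)]$.

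Finally, I would compute $\E[1/(1+S)]$ for $S \sim \Binom(m,1/m)$ by the same integral trick used in the proof of \cref{prop:expectation_of_functions_of_binom}: writing $\frac{1}{1+s} = \int_0^1 t^s\,dt$, interchanging sum and integral, and applying the binomial theorem gives $\int_0^1 \bigl(1 + \tfrac{t-1}{m}\bigr)^m dt = \frac{m}{m+1}\bigl(1 - (1-\tfrac1m)^{m+1}\bigr)$. Hence $\E[S/(1+S)] = \frac{1 + m(1-1/m)^{m+1}}{m+1}$, and multiplying by $k/m$ gives the second identity. There is no serious obstacle here; the only points needing care are the handling of the indicator $\pmb1_{X>0}$ (to avoid division by zero) and the observation that exchangeability is exactly what forces the per-coordinate expectations to coincide.
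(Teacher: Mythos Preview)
Your proof is correct and uses essentially the same approach as the paper: decompose $X$ and $Y$ into i.i.d.\ Bernoullis $Z_1,\dots,Z_m$ and exploit exchangeability to reduce to a single-coordinate expectation. The only minor difference is in how exchangeability is cashed out: the paper conditions on $Z_1=1$ to land on $\E[1/(1+W)]$ and $\E[1/(2+W)]$ with $W=Z_2+\dots+Z_m\sim\Binom(m-1,1/m)$ and then invokes \cref{prop:expectation_of_functions_of_binom} directly, whereas you sum over all $m$ coordinates to land on $\Pr(S>0)$ and $\E[S/(1+S)]$ with $S\sim\Binom(m,1/m)$ and redo the integral computation for the latter---either route gives the stated formulas.
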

\begin{proof}
Let $Z_1, Z_2, \dots, Z_m$ be i.i.d. $\Bern(1/m)$ random variables. Notice that $(X,Y)\overset{d} = (Z_1 + \dots + Z_k, Z_{k+1} + \dots + Z_m)$. Then, 
\begin{equation*}
\begin{split}
    \E\left[\frac{X}{X+Y}\pmb1_{X>0}\right] = \E\left[\frac{Z_1 + \dots + Z_k}{Z_1 + \dots + Z_m}\pmb1_{Z_1 + \dots + Z_k>0}\right] = k\E\left[\frac{Z_1}{Z_1 + \dots + Z_m}\pmb1_{Z_1=1}\right]\\
    = k\E\left[\frac{1}{1 + Z_2 + Z_3 + \dots + Z_m}\pmb1_{Z_1=1}\right] = \frac km\E\left[\frac{1}{1 + Z_2 + \dots + Z_m}\right],
\end{split}
\end{equation*}
using symmetry for the second equality. Similarly,
\begin{equation*}
\begin{split}
    \E\left[\frac{X}{1+X+Y}\right] = \E\left[\frac{Z_1 + \dots + Z_k}{1 + Z_1 + \dots + Z_m}\right] = k\E\left[\frac{Z_1}{1 + Z_1 + \dots + Z_m}\right] = k\E\left[\frac{1}{2 + Z_2 + Z_3 + \dots + Z_m}\pmb1_{Z_1=1}\right]\\
    = \frac km\E\left[\frac{1}{2 + Z_2 + \dots + Z_m}\right].
\end{split}
\end{equation*}
Notice that $Z_2 + \dots + Z_m\sim\Binom(m-2,1/m)$, so by \cref{prop:expectation_of_functions_of_binom},
\begin{equation*}
    \E\left[\frac{1}{1 + Z_2 + \dots + Z_m}\right] = 1 - \left(1-\frac1m\right)^m.
\end{equation*}
and
\begin{equation*}
    \E\left[\frac{1}{2 + Z_2 + \dots + Z_m}\right] = \frac{1 + m(1-1/m)^{m+1}}{m+1}.
\end{equation*}
The result follows.

\end{proof}
\section{Deferred Proofs from Section \ref{sec:robust_upper_bound}} \label{sec:app:robust_upper_bound_proofs}

In this section, we prove \cref{thm:robustness_upper_bound}, which we restate below for convenience.

\robustnessUpperBound*

We start with a generalization of \cref{lem:expected_payment_of_agent} for \allPayMechanismGeneralCost.
\begin{lemma} \label{lem:expected_payment_of_agent_general_cost}
The expected payment of an agent $i$ conditioned on the bids $r_k[t]$ for $k\in[n]$ is
\begin{equation*}
    \E[P_i[t]\mid (r_k[t])_{k\in[n]}] = \frac{p_{1 + \sum_{j\neq i}r_j[t]}}{1 + \sum_{j\neq i}r_j[t]}\cdot r_i[t].
\end{equation*}
\end{lemma}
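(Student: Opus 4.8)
The plan is to follow exactly the structure of the proof of \cref{lem:expected_payment_of_agent}, which is the special case $p_k = \paymentConstant\cdot k/(1+k)$ of this statement. First I would dispose of the case $r_i[t]=0$: if agent $i$ does not request the item, she is not in the set $S[t]$ of bidding agents in \allPayMechanismGeneralCost{}, so she cannot win and pays $P_i[t]=0$; this agrees with the right-hand side, which is multiplied by $r_i[t]=0$.

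Then I would handle $r_i[t]=1$. Conditioned on the full bid vector $(r_k[t])_{k\in[n]}$, the number of bidders is deterministic and equals $|S[t]| = \sum_{k\in[n]}r_k[t] = 1 + \sum_{j\neq i}r_j[t]$; write $K$ for this quantity. By the uniform allocation rule, agent $i$ wins (i.e.\ $W_i[t]=1$) with probability $1/K$, and in that event the payment rule of \allPayMechanismGeneralCost{} charges the winner $p_{|S[t]|}=p_K$ (and charges $0$ otherwise, since only the winner pays). Hence
\[
    \E[P_i[t]\mid (r_k[t])_{k\in[n]}] = \frac1K\cdot p_K = \frac{p_{1+\sum_{j\neq i}r_j[t]}}{1+\sum_{j\neq i}r_j[t]},
\]
which is the claimed expression once we restore the factor $r_i[t]=1$. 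Combining the two cases into the single formula (with the $r_i[t]$ prefactor absorbing the $r_i[t]=0$ case) yields the lemma.

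I do not expect a genuine obstacle: the statement is essentially a one-line consequence of the allocation rule (uniform among $|S[t]|$ bidders) and the payment rule ($p_{|S[t]|}$ to the winner), using only that conditioning on the bids fixes the number of bidders. The one point that takes a moment of care is the bookkeeping showing that the factor $r_i[t]$ correctly collapses both cases into a single identity.
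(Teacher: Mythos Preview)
Your proposal is correct and follows essentially the same approach as the paper: dispose of the $r_i[t]=0$ case, then for $r_i[t]=1$ multiply the win probability $1/(1+\sum_{j\neq i}r_j[t])$ from the uniform allocation rule by the winner's payment $p_{1+\sum_{j\neq i}r_j[t]}$, and absorb both cases into the $r_i[t]$ prefactor.
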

\begin{proof}
Conditioned on the bids $r_k[t]$, agent $i$ only pays a nonzero amount when $r_i[t]=1$. In this case, agent $i$ wins with probability $\frac{1}{\sum_kr_k[t]} = \frac{1}{1 + \sum_{j\neq i}r_j[t]}$ by the uniform allocation rule in which case they pay $p_{\sum_k r_k[t]} = p_{1 + \sum_{j\neq i}r_j[t]}$ by the payment rule. The expected payment $\E[P_i[t]\mid (r_k[t])_{k\in[n]}]$ is just the product of these and $r_i[t]$.
\end{proof}

Now let us establish some upper bounds on the $p_i$ that are required to have robust strategies. Intuitively, the $p_i$ cannot be too high, because if everyone uses the robust strategy and if the $p_i$ are too high, the agents will run out of budget quickly, contradicting robustness.

\begin{lemma} \label{lem:crude_upper_bounds}
Assume each fair share $\alpha_i=1/n$ with $n\geq 3$. If a \oneOverNAggressiveStrategy{} is $\lambda$-robust for $\lambda > 1/2$ not depending on $T$ for all $T$ sufficiently large, then
\begin{equation*}
    0 < p_1 \leq 2e, p_2 \leq 4e, p_3\leq 12e.
\end{equation*}
\end{lemma}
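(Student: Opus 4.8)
The intuition, as noted just before the lemma, is that an overlarge $p_k$ forces the agent to burn through her budget too fast, so she cannot win enough rounds; I would make this precise with two explicitly chosen adversaries. For the upper bounds on $p_1,p_2,p_3$, fix $n\ge 3$ and let every agent have value distribution $\Bern(1/n)$, with all agents $j\ne i$ also playing the $1/n$-aggressive strategy (a legitimate choice of value distributions and opponent strategies, against which $\lambda$-robustness must hold). Then $v_i^*=1/n$; and while all agents still have budget, agent $i$ bids with probability $1/n$ and faces $X\sim\Binom(n-1,1/n)$ other bidders, so by \cref{lem:expected_payment_of_agent_general_cost} her per-round win probability is $\tfrac1n\E[\tfrac1{1+X}]$ and her per-round payment is $\tfrac1n\rho$ with $\rho:=\E\!\big[\tfrac{p_{1+X}}{1+X}\big]$. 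By symmetry all agents spend at the same rate, so (tracking the stopping time at which budgets run out, and converting expectations to high-probability bounds exactly as in \cref{lem:high_probability,lem:agent_utility_adversary_budget_constraint_rewritten}) agent $i$'s per-round utility equals, up to $o(1)$, $\tfrac1n\E[\tfrac1{1+X}]\big/\max\{1,\rho\}$: if $\rho>1$ she runs out of her $T/n$ budget at time $\approx T/\rho$ and wins nothing afterwards, while everyone else runs out at essentially the same time. Hence $\E[\tfrac1{1+X}]\big/\max\{1,\rho\}\ge\lambda-o(1)$, and since $\lambda>\tfrac12$ is a fixed constant, letting $T\to\infty$ gives $\rho<2\,\E[\tfrac1{1+X}]=2\big(1-(1-\tfrac1n)^n\big)<2$ (the inequality is trivial when $\rho\le1$ too, as $\E[\tfrac1{1+X}]=1-(1-\tfrac1n)^n>\tfrac12$ for $n\ge2$).

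Because every $p_j\ge0$, each term of $\rho$ is at most $\rho$, so $\tfrac{p_k}{k}\Pr[X=k-1]<2$, i.e.\ $p_k<2k/\Pr[X=k-1]$, for $k=1,2,3$. Plugging in $\Pr[X=0]=\Pr[X=1]=(1-\tfrac1n)^{n-1}$ and $\Pr[X=2]=\tfrac{n-2}{2(n-1)}(1-\tfrac1n)^{n-1}$ and using $(1-\tfrac1n)^{n-1}>1/e$ (with $\tfrac{n-2}{2(n-1)}$ bounded below for $n\ge3$) yields $p_1<2e$, $p_2<4e$, $p_3<12e$; the $p_3$ bound is the one place where the constant is close enough that the binomial estimate must be carried out with some care. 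For $p_1>0$, suppose $p_1=0$ and take instead $\mathcal F_i=\Bern(\epsilon)$ with $\epsilon=\tfrac1{n(1+p_2)}\le\tfrac1n$, so that $v_i^*=\epsilon$ and the $1/n$-aggressive strategy has agent $i$ bid exactly on her value-$1$ rounds; let one fixed opponent bid in every round and the other $n-2$ never bid. On a round where agent $i$ bids (probability $\epsilon$) there are two bidders, so she wins with probability $\tfrac12$; on every other round the lone opponent wins and pays $p_1=0$. Hence the opponent's expected per-round spend is $\tfrac{\epsilon p_2}2<\tfrac1{2n}$, comfortably within her $T/n$ budget, so with high probability she keeps bidding for all $T$ rounds, and agent $i$'s per-round utility is $\approx\epsilon/2=v_i^*/2<\lambda v_i^*$ --- contradicting $\lambda$-robustness. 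Hence $p_1>0$.

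The binomial identities and the Hoeffding / Azuma--Hoeffding concentration are routine, essentially copying \cref{lem:bernoulli_reduction,lem:high_probability}. The step I expect to be the main obstacle is the stopping-time bookkeeping in the symmetric instance: showing that with high probability all $n$ agents exhaust their budgets within an $o(T)$ window of one another --- so that up to that point all $n$ are active and the competition really is $\Binom(n-1,1/n)$-distributed --- and that agent $i$ wins only $o(T)$ rounds after she herself drops out. This mirrors the treatment of $\tau_i$ in \cref{sec:app:robust_mechanism_proofs}, but must now be run for all $n$ stopping times simultaneously, so it needs a union bound and a little care that $n$ (which we are free to choose) does not interact badly with the $T\to\infty$ limit.
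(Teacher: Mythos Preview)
Your upper-bound argument is essentially the paper's: both of you take all agents to play the $1/n$-aggressive strategy, compute the per-round expected payment as $\tfrac1n\E\!\big[\tfrac{p_{1+X}}{1+X}\big]$ with $X\sim\Binom(n-1,1/n)$, and use $\lambda>1/2$ to force $\E\!\big[\tfrac{p_{1+X}}{1+X}\big]\le 2$, then read off the individual bounds termwise via $\tfrac{p_k}{k}\Pr(X=k-1)\le 2$. You even get the slightly sharper intermediate inequality $\rho<2\,\E[1/(1+X)]$ that the paper does not bother to record. Your diagnosis of the main technical obstacle --- that all $n$ stopping times must be shown to cluster within an $o(T)$ window --- is exactly what the paper spends most of its proof on.

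The gap is in your $p_1>0$ argument. By the paper's definition (see the footnote to the $\alpha$-aggressive strategy), when $\mathcal F_i=\Bern(\epsilon)$ with $\epsilon<1/n$, the $1/n$-aggressive strategy does \emph{not} bid only on value-$1$ rounds: it randomizes on value $0$ so as to bid with probability exactly $1/n$. Consequently your opponent's expected per-round spend is not $\tfrac{\epsilon p_2}{2}$ but $\tfrac{p_2}{2n}$, and your choice of small $\epsilon$ does nothing to control it. If $p_2>2$ --- which you have not yet ruled out at this stage --- the single fixed opponent burns through her $T/n$ budget early; once she drops out and $p_1=0$, agent $i$ pays nothing, so whoever survives keeps bidding free forever. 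Since the two budgets are equal and the spending is symmetric, with probability close to $1/2$ it is the opponent who runs out first, and then agent $i$ collects a $\Theta(T)$ windfall; the clean bound ``utility $\approx v_i^*/2$'' no longer follows from your reasoning. The paper sidesteps this entirely by having the $n-1$ opponents rotate so that exactly one bids each round, pooling their combined budget $(n-1)T/n$; because this strictly exceeds $T/n$ for $n\ge 3$, a martingale comparison shows agent $i$ must run out first with high probability regardless of how large $p_2$ is, and the $v_i^*/2$ bound follows.
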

\begin{proof}
Let us first prove that $p_1 > 0$. By contradiction, assume $p_1 = 0$. Fix an agent $i$ that uses a \oneOverNAggressiveStrategy. Suppose the other agents $j\neq i$ use the strategy of having only one agent $j\neq i$ bid each round until they are all out of budget. Let $\tau_k$ be time at which agent $k$ runs out of budget and $T$ if this never happens:
\begin{equation*}
    \tau_k = \begin{cases}\min\left\{t:\sum_{s=1}^tP_k[s] \geq T/n\right\} & \text{if $\sum_{s=1}^T P_k[s] \geq T/n$}\\ T & \text{otherwise}\end{cases}.
\end{equation*}
Agent $i$ runs out of budget at time $\tau_i$ and the agents $j\neq i$ run out of budget at time $\tau_{-i} := \max_{j\neq i}\tau_j$. Notice that since $p_1=0$, the $P_i[t] = P_j[t] = 0$ for every $t$ in which $r_i[t]=0$ and $j\neq i$, and when $r_i[t]=1$ and $t\leq \min\{\tau_i,\tau_{-i}\}$, the payments $P_i[t]$ and $\sum_{j\neq i}P_j[t]$ are equally distributed and independent conditioned on the history and $r_i[t]$ (each of $P_i[t]$ and $P_j[t]$ is $p_2$ with probability $1/2$ and $0$ otherwise). It follows that 
\begin{equation*}
    \sum_{t=1}^{\min\{t,\tau_i,\tau_{-i}\}}\left(P_i[s] - \sum_{j\neq i}P_j[s]\right)
\end{equation*}
is an $\mathcal H_t$-martingale, where $\mathcal H_t$ denotes the history up to and including time $t$. By the Azuma-Hoeffding inequality, noting that the above martingale has increments bounded by $\max\{p_1,p_2\} = p_2$,
\begin{equation} \label{eq:spending_of_naive_adversary}
    \sum_{s=1}^{\min\{\tau_i,\tau_{-i}\}}P_i[t] \geq \sum_{s=1}^{\min\{\tau_i,\tau_{-i}\}}\sum_{j\neq i}P_j[t] - p_2\sqrt{T\ln T}
\end{equation}
with probability at least $1-1/T^2$. Since the total budget of agent $i$ is $T/n$ and the total budget of agents $j\neq i$ is $(n-1)T/n \geq T/n + \Omega(T)$, for sufficiently large $T$, \eqref{eq:spending_of_naive_adversary} implies that $\tau_i \leq \tau_i'$: if $\tau_i' > \tau_i$, then $\sum_{s=1}^{\min\{\tau_i,\tau_{-i}\}}P_i[t] < T/n$ but $\sum_{s=1}^{\min\{\tau_i,\tau_{-i}\}}\sum_{j\neq i}P_j[t] \geq (n-1)T/n > T/n + \Omega(T)$.

For each time $t\leq \min\{\tau_i,\tau_{-i}\}$, $\E[W_i[t]\mid r_i[t]=1] = 1/2$ since there are $2$ total agents bidding at time $t$ if agent $i$ bids. For each time $t\leq \min\{\tau_i,\tau_{-i}\}$, conditioned on the history $\mathcal H_{t-1}$, $\Pr(r_i[t]=1\mid\mathcal H_{t-1}) = 1/n$ since agent $i$ bids with probability $1/n$ as long as she has budget remaining. Then, $\sum_{s=1}^{\min\{t,\tau_i,\tau_{-i}\}} W_i[s] - \min\{t,\tau_i,\tau_{-i}\}/(2n)$ is a supermartingale, so by the Azuma-Hoeffding inequality,
\begin{equation*}
    \sum_{t=1}^{\min\{\tau_i,\tau_i'\}} W_i[t] \leq \frac{\min\{\tau_i,\tau_{-i}\}}{2n} + \sqrt{T\ln T}
\end{equation*}
with probability at least $1-1/T^2$. Since we established before that $\tau_i\leq \tau_i'$ with probability at least $1-1/T^2$, and the fact that $W_i[t]=0$ for $t>\tau_i$, by the union bound, the above display implies that
\begin{equation*}
    \sum_{t=1}^T W_i[t]\leq \frac{T}{n} + \sqrt{T\ln T}
\end{equation*}
with probability at least $1-2/T^2$.

By the above, \cref{lem:bernoulli_reduction} (which was proved for \allPayMechanism{} but the exact proof also works for \allPayMechanismGeneralCost since we did not use the payment structure in the proof), and the union bound,
\begin{equation*}
    \Pr\left(\frac1T\sum_{t=1}^T U_i[t] \leq \frac{v_i^*}{2} + O\left(\sqrt{\frac{\log T}{T}}\right)\right)\geq 1-O\left(\frac{1}{T^2}\right).
\end{equation*}
It follows that the \oneOverNAggressiveStrategy{} cannot be $\lambda$-robust for $\lambda > 1/2$ for $T$ sufficiently large, a contradiction.

We have now established that $p_1 > 0$. For the upper bounds on the $p_k$, suppose each agent uses a \oneOverNAggressiveStrategy. Let $\tau_i$ be the time at which agent $i$ runs out of budget, and $T$ if no agent ever runs out of budget:
\begin{equation*}
    \tau_i = \begin{cases}\min\left\{t:\sum_{s=1}^tP_i[s]\geq \frac Tn\right\} & \text{if $\sum_{s=1}^T P_i[s] \geq \frac Tn$}\\T & \text{otherwise}\end{cases}.
\end{equation*}
Let $\mathcal H_t$ denote the history up to and including time $t$. Notice that $\tau_i$ is a stopping time with respect to $\mathcal H_t$. Let $\tau = \min_i\tau_i$. At each time $t\leq \tau$, for each agent $i$, by \cref{lem:expected_payment_of_agent_general_cost},
\begin{equation*}
    \E[P_i[t]\mid\mathcal H_{t-1}] = \frac1n\cdot\E\left[\frac{p_{1+X}}{1+X}\right].
\end{equation*}
where $X\sim\Binom(n-1,1/n)$ since each agent is bidding with probability $1/n$. By the Azuma-Hoeffding inequality applied to the martingale $\sum_{s=1}^{\min\{t,\tau\}}P_i[s] - \sum_{s=1}^{\min\{t,\tau\}}\E[P_i[s]\mid \mathcal H_{s-1}]$ with increments bounded by $\max_k p_k$,
\begin{equation} \label{eq:everyone_using_robust_stopping_time}
    \Pr\left(\left|\sum_{t=1}^\tau P_i[t]  - \frac\tau n\cdot\E\left[\frac{p_{1+X}}{1+X}\right]\right| \geq \left(\max_k p_k\right)\sqrt{T\ln T}\right) \leq \frac{2}{T^2}.
\end{equation}
Since $p_1>0$, at each time $t\leq\tau_i$, $\E[P_i[t]\mid\mathcal H_{t-1}]\geq c > 0$ for some constant $c$ not depending on $t$ or $T$ since the probability that agent $i$ is the sole bidder is positive. Then, by the Azuma-Hoeffding inequality applied to the submartingale $\sum_{s=\tau+1}^{\min\{t,\tau_i\}}P_i[s] - c(\min\{t,\tau_i\} - \tau)^+$,
\begin{equation} \label{eq:spending_after_one_agent_ran_out}
    \Pr\left(\sum_{t=\tau+1}^{\tau_i}P_i[t] \leq c(\tau_i - \tau)^+ - \left(\max_k p_k\right)\sqrt{T\ln T}\right) \leq \frac{1}{T^2}.
\end{equation}
For each time $t\leq \tau$,
\begin{equation*}
    \E[W_i[t]\mid\mathcal H_{t-1}] = \E\left[\frac{r_i[t]}{r_i[t] + \sum_{j\neq i}r_j[t]}\,\middle|\,\mathcal H_{t-1}\right] = \frac1n\cdot\E_{X\sim\Binom(n-1,1/n)}\left[\frac{1}{1+X}\right] = \frac1n\left(1-\left(1-\frac1n\right)^n\right),
\end{equation*}
using \cref{prop:expectation_of_functions_of_binom} for the last equality.
By the Azuma-Hoeffding-inequality,
\begin{equation} \label{eq:upper_bound_symmetric_agent_wins}
    \Pr\left(\sum_{t=1}^\tau W_i[t] \geq \frac1n\left(1-\left(1-\frac1n\right)^n\right)\tau + \sqrt{T\ln T}\right) \leq \frac{1}{T^2}.
\end{equation}

Consider what happens on the event $E$ that \cref{eq:everyone_using_robust_stopping_time,eq:spending_after_one_agent_ran_out,eq:upper_bound_symmetric_agent_wins} do not happen for any agent $i$, which has probability $\Pr(E)\geq 1-O(1/T^2)$ by the union bound.

First, consider the case that $\tau = T$, so by the fact that the event in \eqref{eq:everyone_using_robust_stopping_time} does not occur and the fact that $\sum_{t=1}^T P_i[t] \leq T/n + \max_k p_k$ by the budget constraint,
\begin{equation*}
    \frac Tn + \max_k p_k \geq \sum_{t=1}^T P_i[t] \geq  \frac{T}{n}\cdot\E\left[\frac{p_{1+X}}{1+X}\right] - \left(\max_k p_k\right)\sqrt{T\ln T}.
\end{equation*}
It follows that
\begin{equation} \label{eq:expected_payment_loose_bound_robust}
    \E\left[\frac{p_{1+X}}{1+X}\right] \leq 1 + O\left(\sqrt{\frac{\log T}{T}}\right).
\end{equation}

Now consider the case when $\tau < T$. Then $\sum_{t=1}^\tau P_i[t] \geq T/n$ for some $i$ by definition of $\tau$. By the fact that the event in \eqref{eq:everyone_using_robust_stopping_time} does not happen,
\begin{equation} \label{eq:stopping_time_some_agent_runs_out}
    \frac Tn \leq \frac{\tau}{n}\cdot\E\left[\frac{p_{1+X}}{1+X}\right] + \left(\max_k p_k\right)\sqrt{T\ln T}.
\end{equation}

For each agent $j$, since $\sum_{t=1}^{\tau_j}P_j[t]\leq T/n + \max_k p_k$, by the fact that \eqref{eq:everyone_using_robust_stopping_time} and \eqref{eq:spending_after_one_agent_ran_out} do not happen, and then using the above display
\begin{equation*}
\begin{split}
    \frac Tn + \max_k p_k & \geq \sum_{t=1}^{\tau}P_j[t] + \sum_{t=\tau+1}^{\tau_j}P_j[t]\\
    & \geq \frac{\tau}{n}\cdot\E\left[\frac{p_{1+X}}{1+X}\right] + c(\tau_j-\tau)^+ - O\left(\sqrt{T\log T}\right)\\
    & \geq \frac Tn + c(\tau_j - \tau)^+ - O\left(\sqrt{T\log T}\right).
\end{split}
\end{equation*}
Therefore, for each agent $j$,
\begin{equation*}
    \tau_j - \tau \leq O\left(\sqrt{T\log T}\right).
\end{equation*}
Again by the fact that the event in \eqref{eq:everyone_using_robust_stopping_time} does not happen, 
\begin{equation*} 
    \frac Tn \geq \frac{\tau}{n}\cdot\E\left[\frac{p_{1+X}}{1+X}\right] - \left(\max_k p_k\right)\sqrt{T\ln T}.
\end{equation*}
By the fact that the event in \eqref{eq:upper_bound_symmetric_agent_wins} does not happen and the above,
\begin{equation*}
\begin{split}
    \sum_{t=1}^{\tau_j} W_j[t] & \leq \frac{\tau}{n}\left(1-\left(1-\frac1n\right)^n\right) + (\tau_j-\tau)^+ + O\left(\sqrt{T\log T}\right)\\
    & \leq \frac{\tau}{n}\left(1-\left(1-\frac1n\right)^n\right) + O\left(\sqrt{T\log T}\right)\\
    & \leq \frac{T}{n\E\left[\frac{p_{1+X}}{1+X}\right]}\left(1-\left(1-\frac1n\right)^n\right) + O\left(\sqrt{T\log T}\right).
\end{split}
\end{equation*}
This happens on the event $E$; in expectation, recalling that $\Pr(E)\leq O(1/T^2)$,
\begin{equation*}
\begin{split}
    \frac1T\sum_{t=1}^T \E[W_j[t]] & \leq \frac{1}{n\E\left[\frac{p_{1+X}}{1+X}\right]}\left(1-\left(1-\frac1n\right)^n\right) + T(1-\Pr(E)) + O\left(\sqrt{\frac{\log T}{T}}\right)\\
    & \leq \frac{1}{n\E\left[\frac{p_{1+X}}{1+X}\right]}\left(1-\left(1-\frac1n\right)^n\right) + O\left(\sqrt{\frac{\log T}{T}}\right)\\
    & \leq \frac{1}{n\E\left[\frac{p_{1+X}}{1+X}\right]} + O\left(\sqrt{\frac{\log T}{T}}\right)
\end{split}
\end{equation*}
By \cref{lem:bernoulli_reduction} and the hypothesis that a $1/n$-aggressive strategy is $\lambda$-robust for some $\lambda > 1/2$, the right-hand side is at least $1/(2n)$ for all $T$ sufficiently large, which implies
\begin{equation} \label{eq:expected_payment_not_too_high_robust}
    \E\left[\frac{p_{1+X}}{1+X}\right] \leq 2.
\end{equation}

Notice that \eqref{eq:expected_payment_not_too_high_robust} holds in either case; we proved it one case and \eqref{eq:expected_payment_loose_bound_robust} in the other. In particular, from \eqref{eq:expected_payment_not_too_high_robust}, all of $p_1\Pr(X=0)$, $\frac{p_2}{2}\cdot\Pr(X=1)$, and $\frac{p_3}{3}\cdot\Pr(X=2)$ are at most $2$. It can be computed that
\begin{equation*}
    \Pr(X=0) = \left(1-\frac1n\right)^{n-1} \geq \frac1e, \Pr(X=1) = \left(1-\frac1n\right)^{n-2}\geq \frac1e, \Pr(X=2) = \frac{n-1}{2n}\left(1-\frac1n\right)^{n-3} \geq \frac{1}{2e}.
\end{equation*}
Therefore,
\begin{equation*}
    \frac{p_1}{e}\leq 2, \frac{p_2}{2e}\leq 2, \frac{p_3}{6e}\leq 2.
\end{equation*}
The lemma statement follows.

\end{proof}

Fix any $(p_i)_{i=1}^n$. In the rest of this section, suppose agent $i$ uses an $\alpha_i$-aggressive strategy and the other agents $j\neq i$ use the following strategy. They choose a $k\in\{1,2,\dots,n-1\}$. At each time $t$, the $k$ agents $j\neq i$ with the highest budget remaining bid. If there are less than $k$ agents other than $i$ with remaining budget, no agent $j\neq i$ bids. We shall upper bound agent $i$'s utility in this scenario.

Let $\tau$ be the time at which agent $i$ runs out of budget and $T$ if they never run out of budget,
\begin{equation*}
    \tau = \begin{cases}\min\left\{t:\sum_{s=1}^tP_i[s] \geq \alpha_iT\right\} & \text{if $\sum_{s=1}^T P_i[s] \geq \alpha_iT$}\\ T & \text{otherwise}\end{cases}.
\end{equation*}
and let $\tau'$ be the last time that agents $j\neq i$ bid.

\begin{lemma} \label{lem:stopping_time_comparisons}
Let $\gamma = \min\{1, \tau'/\tau\}$. With probability at least $1-O(1/T^2)$, 
\begin{equation} \label{eq:upper_bound_agent_stopping_time}
    \frac{\tau}{T} \leq \min\left\{1, \frac{1}{\frac{p_{k+1}}{k+1}\cdot\gamma + p_1(1-\gamma)}\right\} + O\left(\sqrt{\frac{\log T}{T}}\right)
\end{equation}
and
\begin{equation} \label{eq:lower_bound_x_k}
    \gamma \geq \min\left\{1, \max\left\{\frac{1-\alpha_i}{\left((1-\alpha_i)\cdot\frac{p_k}{k} + \alpha_i\cdot\frac{p_{k+1}}{k+1}\right)k},\frac{p_1}{\frac{\left((1-\alpha_i)\cdot\frac{p_k}{k} + \alpha_i \cdot\frac{p_{k+1}}{k+1}\right)k}{(1-\alpha_i)} - \frac{p_{k+1}}{k+1} + p_1} \right\}\right\} - O\left(\sqrt{\frac{\log T}{T}}\right).
\end{equation}
\end{lemma}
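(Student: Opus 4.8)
The plan is to split the horizon into two phases according to whether the agents $j\neq i$ are still bidding, and then extract both inequalities from the budget constraints of agent $i$ and of the agents $j\neq i$, passing between cumulative payments and their conditional expectations via the same Azuma--Hoeffding arguments as in \cref{lem:high_probability}. I would first observe that, since the agents $j\neq i$ have nonincreasing budgets and bid in a round exactly when at least $k$ of them still have positive budget, $\tau'$ is well-defined, they bid in every round $t\le\tau'$ and none afterward, and both $\tau$ and $\tau'$ are stopping times. Writing $\sigma=\min\{\tau,\tau'\}$, so $\sigma=\gamma\tau$, define \emph{Phase 1} as the rounds $t\le\sigma$ (agent $i$ has budget and exactly $k$ agents $j\neq i$ bid) and \emph{Phase 2} as the rounds $\sigma<t\le\tau$ (nonempty only if $\gamma<1$; agent $i$ has budget but no agent $j\neq i$ bids), of length $(1-\gamma)\tau$. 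By \cref{lem:expected_payment_of_agent_general_cost} and the uniform allocation rule, conditioning on the history $\mathcal H_{t-1}$: in a Phase-1 round $\E[P_i[t]\mid\mathcal H_{t-1}]=\alpha_i\frac{p_{k+1}}{k+1}$ and $\sum_{j\neq i}\E[P_j[t]\mid\mathcal H_{t-1}]=q$, where $q:=(1-\alpha_i)p_k+\alpha_i\frac{k}{k+1}p_{k+1}=k\big((1-\alpha_i)\frac{p_k}{k}+\alpha_i\frac{p_{k+1}}{k+1}\big)$, while in a Phase-2 round $\E[P_i[t]\mid\mathcal H_{t-1}]=\alpha_i p_1$ and the agents $j\neq i$ pay nothing. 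Summing over phases on the concentration event (probability $1-O(1/T^2)$), $\sum_{t\le\tau}P_i[t]=\alpha_i\tau\big(\gamma\frac{p_{k+1}}{k+1}+(1-\gamma)p_1\big)+O(\sqrt{T\log T})$, and when $\gamma<1$ all rounds $t\le\tau'$ lie in Phase 1, so $\sum_{t\le\tau'}\sum_{j\neq i}P_j[t]=q\,\tau'+O(\sqrt{T\log T})$.

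To get \eqref{eq:upper_bound_agent_stopping_time}, I would use agent $i$'s budget constraint $\sum_{t\le\tau}P_i[t]\le\alpha_iT$, which holds with equality up to $O(1)$ when $\tau<T$. Dividing by $\alpha_iT$ yields $\frac{\tau}{T}\big(\gamma\frac{p_{k+1}}{k+1}+(1-\gamma)p_1\big)\le1+O(\sqrt{(\log T)/T})$, with near-equality if $\tau<T$. When $\tau<T$ this rearranges to $\frac{\tau}{T}\le\frac{1}{\gamma p_{k+1}/(k+1)+(1-\gamma)p_1}+O(\sqrt{(\log T)/T})$ (and $\tau/T\le1$ forces the denominator to be $\ge1-o(1)$, so the rearrangement is legitimate); when $\tau=T$ the inequality forces $\gamma p_{k+1}/(k+1)+(1-\gamma)p_1\le1+o(1)$, hence $\frac1{\gamma p_{k+1}/(k+1)+(1-\gamma)p_1}\ge1-o(1)=\tau/T-o(1)$. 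Combining with the trivial bound $\tau/T\le1$ gives \eqref{eq:upper_bound_agent_stopping_time}.

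For \eqref{eq:lower_bound_x_k} I would argue as follows. If $\tau'\ge\tau$ then $\gamma=1$ and there is nothing to show, so assume $\gamma=\tau'/\tau<1$. The agents $j\neq i$ bid until fewer than $k$ of them retain budget, so at time $\tau'+1$ the unspent budget sits with at most $k-1$ of them and is $o(T)$ --- this is where the (near-)equal-share regime of \cref{thm:robustness_upper_bound} enters. Combined with the Phase-1 identity, $q\tau'\ge(1-\alpha_i)T-o(T)$, so $\tau'\ge\frac{(1-\alpha_i)T}{q}-o(T)$ and $\gamma\ge\tau'/T\ge\frac{1-\alpha_i}{q}-o(1)$, which is the first term (call it $A$) in the $\max$. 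For the second term (call it $B$), I would use agent $i$'s budget: if $\tau<T$ she exhausts it, so $\tau'\frac{p_{k+1}}{k+1}+(\tau-\tau')p_1=T+o(T)$; substituting $\tau'=\frac{(1-\alpha_i)T}{q}+o(T)$, solving for $\tau$, and forming $\gamma=\tau'/\tau$ gives, after simplification, $\gamma=\frac{p_1}{q/(1-\alpha_i)-p_{k+1}/(k+1)+p_1}+o(1)=B+o(1)$; since also $\gamma\ge A-o(1)$, this yields $\gamma\ge\max\{A,B\}-o(1)$. If instead $\tau=T$, agent $i$ ends with unspent budget, and substituting $\tau'=\frac{(1-\alpha_i)T}{q}+o(T)$ into her payment identity shows this is \emph{precisely} the inequality $B\le A+o(1)$, so the already-established bound $\gamma\ge A-o(1)$ gives $\gamma\ge\max\{A,B\}-o(1)$ in this case too. (If $q/(1-\alpha_i)-p_{k+1}/(k+1)+p_1\le0$ then $B\le0<A$ and the case $\tau'<\tau<T$ cannot occur, so only the $A$-bound is needed.) Unwinding $q=k\big((1-\alpha_i)\frac{p_k}{k}+\alpha_i\frac{p_{k+1}}{k+1}\big)$ reproduces exactly the two quantities in the statement.

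The routine parts are the concentration estimates and the per-round payment computations, which follow the pattern of \cref{lem:high_probability} and \cref{lem:expected_payment_of_agent_general_cost}. The hard part will be the case analysis according to whether $\tau'<\tau$ and whether $\tau<T$, and in particular verifying the algebraic equivalence ``agent $i$ finishes with unspent budget $\iff B\le A$'': this identity is what makes the two derived lower bounds combine into the single $\max\{A,B\}$ in \eqref{eq:lower_bound_x_k}. A secondary subtlety is guaranteeing, in the $A$-bound, that the agents $j\neq i$ really spend all but $o(T)$ of their budget, which needs their shares to be (nearly) equal as in \cref{thm:robustness_upper_bound}.
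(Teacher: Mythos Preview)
Your approach matches the paper's: split the horizon at $\sigma=\min\{\tau,\tau'\}$, compute the per-round conditional payment expectations from \cref{lem:expected_payment_of_agent_general_cost}, and extract \eqref{eq:upper_bound_agent_stopping_time} from agent $i$'s budget cap and \eqref{eq:lower_bound_x_k} from the adversaries'. The derivation of \eqref{eq:upper_bound_agent_stopping_time} is exactly the paper's.

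The gap is in the step you yourself flag as subtle. To get $\tau'\ge\tfrac{(1-\alpha_i)T}{q}-o(T)$ you need that at time $\tau'$ the adversaries have spent all but $o(T)$ of their combined budget, and your proposed reason --- ``at most $k-1$ of them still have positive budget, and by (near-)equal shares this is $o(T)$'' --- does not deliver it: those $k-1$ agents could each still hold up to their initial endowment $T/n$, so the count only yields leftover $\le(k-1)T/n$, which is $\Theta(T)$, not $o(T)$, for fixed $n$. What is actually needed is the \emph{balancing} property of the greedy rule: since it is always the highest-budget agents who bid (and hence pay), by the time any one of them has been driven to zero the rest cannot be far behind. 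The paper invokes exactly this, asserting that ``by the greedy rule \dots\ each agent $j\neq i$ has budget at most $2\max\{p_k,p_{k+1}\}$'' at time $\tau'$, which makes the leftover $O(n)=O(1)$ in $T$ and produces the stated $O(\sqrt{\log T/T})$ error.

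Your case analysis for the $B$-bound (including the pleasing equivalence ``$\tau=T\iff B\le A$'') is correct but more work than needed. The paper is more direct: from $\tau'\ge\tfrac{(1-\alpha_i)T}{q}-o(T)$ one reads off $T\le\tfrac{q}{1-\alpha_i}\tau'+o(T)$, and substituting this upper bound on $T$ into agent $i$'s budget inequality $\alpha_iT+O(1)\ge\alpha_i\big(\tfrac{p_{k+1}}{k+1}\tau'+p_1(\tau-\tau')\big)-o(T)$ gives $\gamma\ge B-o(1)$ in one step, regardless of whether $\tau<T$. In particular only the \emph{lower} bound on $\tau'$ is ever used, so there is no need for your two-sided estimate $\tau'=\tfrac{(1-\alpha_i)T}{q}\pm o(T)$.
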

\begin{proof}
Let $\mathcal H_t$ denote the history up to, and including, time $t$. Notice that $\tau$ and $\tau'$ are stopping times with respect to the filtration $\mathcal H_t$. For any time $t\leq\min\{\tau,\tau'\}$, by \cref{lem:expected_payment_of_agent_general_cost},
\begin{equation*}
    \E\left[\sum_{j\neq i}P_j[t]\,\middle|\,\mathcal H_{t-1}\right] = \left((1-\alpha_i)\cdot\frac{p_k}{k} + \alpha_i\cdot \frac{p_{k+1}}{k+1}\right)k
\end{equation*}
since $k$ agents $j\neq i$ are bidding, and with probability $1-\alpha_i$, agent $i$ does not bid, and with probability $\alpha_i$, agent $i$ bids so there are $k+1$ total bidders. By the Azuma-Hoeffding inequality applied to the martingale $\sum_{s=1}^{\min\{t,\tau,\tau'\}}\sum_{j\neq i}P_j[s] - \sum_{s=1}^{\min\{t,\tau,\tau'\}}\sum_{j\neq i}\E[P_j[s]\mid\mathcal H_{s-1}]$ that has increments bounded by $\max\{p_k, p_{k+1}\}$,
\begin{equation} \label{eq:upper_bound_payment_adversary}
    \left|\sum_{t=1}^{\min\{\tau,\tau'\}}\sum_{j\neq i}P_j[t] - \left((1-\alpha_i)\cdot\frac{p_k}{k} + \alpha_i\cdot \frac{p_{k+1}}{k+1}\right)k\min\{\tau,\tau'\}\right| \leq \max\{p_k, p_{k+1}\}\sqrt{T\ln T}
\end{equation}
with probability at least $1-2/T^2$. For $t\leq \min\{\tau,\tau'\}$, $\E[P_i[t]\mid\mathcal H_{t-1}] = \alpha_i\cdot \frac{p_{k+1}}{k+1}$ by \cref{lem:expected_payment_of_agent_general_cost} because agent $i$ bids with probability $\alpha_i$ and $k$ others also bid, and for $\tau'+1\leq t\leq \tau$, $\E[P_i[t]\mid\mathcal H_{t-1}] = \alpha_i p_1$ since agent $i$ bids with probability $\alpha_i$ and no others bid, so by the Azuma-Hoeffding inequality applied to the martingale $\sum_{s=1}^{\min\{t,\tau\}}P_i[s] - \sum_{s=1}^{\min\{t,\tau\}}\E[P_i[s]\mid\mathcal H_{s-1}]$ with increments bounded by $\max\{p_1, p_{k+1}\}$,
\begin{equation} \label{eq:lower_bound_payment_agent}
    \left|\sum_{t=1}^\tau P_i[t] - \alpha_i\left(\frac{p_{k+1}}{k+1}\min\{\tau,\tau'\} + p_1(\tau - \tau')^+\right)\right| \leq \max\{p_1, p_{k+1}\}\sqrt{T\ln T}
\end{equation}
with probability at least $1-2/T^2$.

Consider what happens on the event of probability at least $1-4/T^2$ that \eqref{eq:upper_bound_payment_adversary} and \eqref{eq:lower_bound_payment_agent} happen. 

We first prove \eqref{eq:upper_bound_agent_stopping_time}.
Notice that $\sum_{t=1}^\tau P_i[t] \leq \alpha_i T + \max\{p_1, p_{k+1}\}$ by the budget constraint; substituting into \eqref{eq:lower_bound_payment_agent},
\begin{equation} \label{eq:stopping_condition_hardness}
\begin{split}
    \alpha_i T + \max\{p_1, p_{k+1}\} & \geq \alpha_i\left(\frac{p_{k+1}}{k+1}\min\{\tau,\tau'\} + p_1(\tau - \tau')^+\right) - \max\{p_1, p_{k+1}\}\sqrt{T\ln T}.
\end{split}
\end{equation}
Notice that $\min\{\tau,\tau'\} = \min\{1,\tau'/\tau\}\tau = \gamma\tau$ and $(\tau - \tau')^+ = \tau - (\min\{\tau,\tau'\}) = (1 - \min\{1, \tau'/\tau\})\tau = (1-\gamma)\tau$. Substituting into \eqref{eq:stopping_condition_hardness}, and simplifying, we obtain
\begin{equation*}
    T \geq \left(\frac{p_{k+1}}{k+1}\cdot\gamma + p_1(1-\gamma)\right)\tau - O\left(\sqrt{T\log T}\right).
\end{equation*}
Rearranging,
\begin{equation} \label{eq:upper_bound_agent_stopping_time_general}
    \frac{\tau}{T} \leq \frac{1}{\frac{p_{k+1}}{k+1}\cdot\gamma + p_1(1-\gamma)} + O\left(\sqrt{\frac{\log T}{T}}\right).
\end{equation}
This is \eqref{eq:upper_bound_agent_stopping_time}.

Now, prove \eqref{eq:lower_bound_x_k}.
If $\tau \leq\tau'$, then $\gamma=1$, so \eqref{eq:lower_bound_x_k} holds trivially.

Assume $\tau' < \tau$. At time $\tau'$, by the greedy rule of the agents' $j\neq i$ strategy, each agent $j\neq i$ has budget at most $2\max\{p_k, p_{k+1}\}$. It follows that $\sum_{t=1}^\tau\sum_{j\neq i}P_j[t] \geq (1-\alpha_i)T - 2(n-1)\max\{p_k, p_{k+1}\}$. Then, by \eqref{eq:upper_bound_payment_adversary},
\begin{equation*}
    (1-\alpha_i)T - 2(n-1)\max\{p_k, p_{k+1}\} \leq \left((1-\alpha_i)\cdot\frac{p_k}{k} + \alpha_i\cdot\frac{p_{k+1}}{k+1}\right)k\tau'  + O\left(\sqrt{T\log T}\right).
\end{equation*}
Then,
\begin{equation} \label{eq:lower_bound_adversary_stopping_time_one}
    \tau' \geq \frac{(1-\alpha_i)T}{\left((1-\alpha_i)\cdot\frac{p_k}{k} + \alpha_i \cdot\frac{p_{k+1}}{k+1}\right)k} - O\left(\sqrt{T\log T}\right).
\end{equation}
Then,
\begin{equation} \label{eq:first_nontrivial_lower_bound_on_x_k}
    \frac{\tau'}{\tau} \geq \frac{\tau'}{T} \geq \frac{(1-\alpha_i)}{\left((1-\alpha_i)\cdot \frac{p_k}{k} + \alpha_i\cdot \frac{p_{k+1}}{k+1}\right)k} - O\left(\sqrt{\frac{\log T}{T}}\right).
\end{equation}

Also, solving for $T$ in \eqref{eq:lower_bound_adversary_stopping_time_one},
\begin{equation} \label{eq:lower_bound_adversary_stopping_time}
    T \leq \frac{\left((1-\alpha_i)\cdot \frac{p_k}{k} + \alpha_i \cdot\frac{p_{k+1}}{k+1}\right)k}{(1-\alpha_i)}\cdot\tau' + O\left(\sqrt{T\log T}\right).
\end{equation}
We have $\sum_{t=1}^\tau P_i[t] \leq \alpha_iT + \max\{p_1, p_{k+1}\}$, so by \eqref{eq:lower_bound_payment_agent},
\begin{equation*}
    \alpha_i T + \max\{p_1, p_{k+1}\} > \alpha_i\left(\frac{p_{k+1}}{k+1}\cdot\tau' + p_1(\tau - \tau')\right) - O\left(\sqrt{T\log T}\right).
\end{equation*}
Substituting \eqref{eq:lower_bound_adversary_stopping_time},
\begin{equation*}
    \alpha_i\cdot \frac{((1-\alpha_i)\cdot\frac{p_k}{k} + \alpha_i \cdot\frac{p_{k+1}}{k+1})k}{(1-\alpha_i)}\cdot\tau' > \alpha_i\left(\frac{p_{k+1}}{k+1}\cdot\tau' + p_1(\tau - \tau')\right) - O\left(\sqrt{T\log T}\right).
\end{equation*}
Rearranging, we obtain
\begin{equation*}
    \frac{\tau'}{\tau} \geq \frac{p_1}{\frac{\left((1-\alpha_i)\cdot\frac{p_k}{k} + \alpha_i \cdot\frac{p_{k+1}}{k+1}\right)k}{(1-\alpha_i)} - \frac{p_{k+1}}{k+1} + p_1} - O\left(\sqrt{T\log T}\right).
\end{equation*}
Notice that $\gamma = \tau'/\tau$ in this case where $\tau' \leq \tau$. The above display and \eqref{eq:first_nontrivial_lower_bound_on_x_k} establish \eqref{eq:lower_bound_x_k}.

\end{proof}

Now we upper bound the utility of the agent in terms of $\gamma$.
\begin{lemma} \label{lem:stopping_time_to_utility}
Let $\gamma = \min\{1,\tau'/\tau\}$. With probability at least $1-1/T^2$,
\begin{equation*}
    \frac1T\sum_{t=1}^T U_i[t] \leq \left(\frac{\gamma}{1+k} + (1-\gamma)\right)\min\left\{1, \frac{1}{\frac{p_{k+1}}{k+1}\cdot\gamma + p_1(1-\gamma)}\right\}v_i^* + O\left(\sqrt{\frac{\log T}{T}}\right).
\end{equation*}
\end{lemma}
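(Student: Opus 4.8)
The plan is to bound agent $i$'s total number of wins by a martingale argument, plug in the stopping-time estimate \eqref{eq:upper_bound_agent_stopping_time} from \cref{lem:stopping_time_comparisons}, and then pass from wins to utility via \cref{lem:bernoulli_reduction}. First I would note that agent $i$ wins nothing once she is out of budget, so $\sum_{t=1}^T W_i[t] = \sum_{t=1}^\tau W_i[t]$ and likewise $\sum_{t=1}^T U_i[t] = \sum_{t=1}^\tau U_i[t]$. Next, conditioning on the history $\mathcal H_{t-1}$ for $t\le\tau$: when $t\le\min\{\tau,\tau'\}$ exactly $k$ agents $j\ne i$ bid, so if agent $i$ bids (which happens with probability $\alpha_i$ under the $\alpha_i$-aggressive strategy, independently of $\mathcal H_{t-1}$ given $t\le\tau$, since $V_i[t]$ is fresh) there are $k+1$ bidders and she wins with probability $1/(k+1)$ by the uniform allocation rule, giving $\E[W_i[t]\mid\mathcal H_{t-1}] = \alpha_i/(k+1)$; when $\tau'<t\le\tau$ no agent $j\ne i$ bids, so she wins iff she bids, i.e. $\E[W_i[t]\mid\mathcal H_{t-1}] = \alpha_i$. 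Summing over $t\le\tau$ and using $\min\{\tau,\tau'\}=\gamma\tau$ and $(\tau-\tau')^+ = (1-\gamma)\tau$ gives $\sum_{t=1}^\tau \E[W_i[t]\mid\mathcal H_{t-1}] = \alpha_i\left(\tfrac{\gamma}{1+k}+(1-\gamma)\right)\tau$.

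Then I would apply the Azuma--Hoeffding inequality to the $\mathcal H_t$-martingale $\sum_{s=1}^{\min\{t,\tau\}}\bigl(W_i[s]-\E[W_i[s]\mid\mathcal H_{s-1}]\bigr)$, whose increments are bounded by $1$ (here the fact that $\tau$ is a stopping time with respect to $\mathcal H_t$ is what makes this a martingale), to conclude that with probability at least $1-1/T^2$,
\[
\frac1T\sum_{t=1}^T W_i[t] \le \alpha_i\left(\frac{\gamma}{1+k}+(1-\gamma)\right)\frac{\tau}{T} + \sqrt{\frac{\ln T}{T}}.
\]
Substituting the bound \eqref{eq:upper_bound_agent_stopping_time} on $\tau/T$, and using that $\alpha_i\left(\tfrac{\gamma}{1+k}+(1-\gamma)\right)\le 1$ so the $O(\sqrt{\log T/T})$ term coming from that bound is absorbed, yields
\[
\frac1T\sum_{t=1}^T W_i[t] \le \alpha_i\left(\frac{\gamma}{1+k}+(1-\gamma)\right)\min\left\{1,\frac{1}{\frac{p_{k+1}}{k+1}\gamma + p_1(1-\gamma)}\right\} + O\left(\sqrt{\frac{\log T}{T}}\right).
\]

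Finally I would invoke \cref{lem:bernoulli_reduction} --- which, as observed in the proof of \cref{lem:crude_upper_bounds}, applies verbatim to \allPayMechanismGeneralCost{} because its proof never uses the payment rule --- to obtain $\frac1T\sum_t U_i[t] \le \frac{v_i^*}{\alpha_i}\cdot\frac1T\sum_t W_i[t] + O(\sqrt{\log T/T})$ on a further event of probability $1-O(1/T^2)$. Since $V_i\le 1$ forces $v_i^*\le\alpha_i$, we have $v_i^*/\alpha_i\le 1$, so multiplying the displayed wins bound through by $v_i^*/\alpha_i$ keeps every error term $O(\sqrt{\log T/T})$ and gives exactly the claimed inequality; a union bound over the Azuma event, the event of \cref{lem:stopping_time_comparisons}, and the event of \cref{lem:bernoulli_reduction} finishes. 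There is no real obstacle here: the only thing requiring care is the bookkeeping of the $O(\sqrt{\log T/T})$ terms and checking that the martingale structure is valid at the stopping time $\tau$.
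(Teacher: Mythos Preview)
Your proposal is correct and follows essentially the same route as the paper: compute $\E[W_i[t]\mid\mathcal H_{t-1}]$ in the two regimes $t\le\min\{\tau,\tau'\}$ and $\tau'<t\le\tau$, apply Azuma--Hoeffding to the wins martingale, rewrite via $\min\{\tau,\tau'\}=\gamma\tau$ and $(\tau-\tau')^+=(1-\gamma)\tau$, substitute \eqref{eq:upper_bound_agent_stopping_time}, and finish with \cref{lem:bernoulli_reduction}. Your additional care in noting $v_i^*/\alpha_i\le 1$ to control the error term and in explicitly tracking the union bound is slightly more meticulous than the paper, but the argument is the same.
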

\begin{proof}
As before, let $\mathcal H_t$ denote the history up to and including time $t$. For $t\leq\min\{\tau,\tau'\}$, $\E[W_i[t]\mid\mathcal H_{t-1}] = \alpha_i/(k+1)$, since agent $i$ bids with probability $\alpha_i$, in which case there are $k+1$ total bidding agents. For $\tau+1'\leq t\leq \tau$, $\E[W_i[t]\mid\mathcal H_{t-1}]=\alpha_i$, since agent $i$ bids with probability $\alpha_i$ and no other agent bids. Thus, by Azuma-Hoeffding applied to the martingale $\sum_{s=1}^t W_i[s] - \sum_{s=1}^t \E[W_i[s]\mid\mathcal H_{s-1}]$,
\begin{equation*} 
    \sum_{t=1}^T W_i[t] \leq \frac{\alpha_i}{1+k}\cdot\min\{\tau,\tau'\} + \alpha_i(\tau - \tau')^+ + \sqrt{T\ln T}
\end{equation*}
with probability at least $1-1/T^2$. Using the definition of $\gamma$, we can compute that $\min\{\tau,\tau'\} = \gamma\tau$ and $(\tau-\tau')^+ = (1-\gamma)\tau$, so the above display implies that
\begin{equation*} 
    \sum_{t=1}^T W_i[t] \leq \alpha_i\left(\frac{\gamma}{1+k} + (1-\gamma)\right)\tau.
\end{equation*}
Substituting \eqref{eq:upper_bound_agent_stopping_time} from \cref{lem:stopping_time_comparisons} in,
\begin{equation*}
    \frac1T\sum_{t=1}^T W_i[t] \leq \frac{\alpha_i}{T}\left(\frac{\gamma}{1+k} + (1-\gamma)\right)\min\left\{1, \frac{1}{\frac{p_{k+1}}{k+1}\cdot\gamma + p_1(1-\gamma)}\right\} + O\left(\sqrt{\frac{\log T}{T}}\right)
\end{equation*}
with probability at least $1-O(1/T^2)$. By \cref{lem:bernoulli_reduction},
\begin{equation*}
    \frac1T\sum_{t=1}^T U_i[t] \leq \left(\frac{\gamma}{1+k} + (1-\gamma)\right)\min\left\{1, \frac{1}{\frac{p_{k+1}}{k+1}\cdot\gamma + p_1(1-\gamma)}\right\}v_i^* + O\left(\sqrt{\frac{\log T}{T}}\right)
\end{equation*}
with probability at least $1-O(1/T^2)$.
\end{proof}

Now we prove \cref{thm:robustness_upper_bound}.
\begin{proof}[Proof of \cref{thm:robustness_upper_bound}]
Assume each $\alpha_i = 1/n$. It follows from \cref{lem:stopping_time_comparisons,lem:stopping_time_to_utility} that the maximum robustness of a $1/n$-aggressive strategy is no higher than
\begin{align*}
    \max_{\substack{(p_1, \dots, p_n)\in[0,\infty)^n}}\min_{k\in\{1,2,\dots,n-1\}}\left(\frac{\gamma(p,k)}{1+k} + (1-\gamma(p,k))\right)\min\left\{1, \frac{1}{\frac{p_{k+1}}{k+1}\cdot\gamma(p,k) + p_1(1-\gamma(p,k))}\right\} + O\left(\sqrt{\frac{\log T}{T}}\right)
\end{align*}
for some
\begin{equation*}
    1 \geq \gamma(p,k) \geq \min\left\{1, \max\left\{\frac{1-1/n}{\left((1-1/n)\frac{p_k}{k} + \frac{p_{k+1}}{(k+1)n}\right)k},\frac{p_1}{\frac{\left((1-1/n)\frac{p_k}{k} + \frac{p_{k+1}}{(k+1)n}\right)k}{(1-1/n)} - \frac{p_{k+1}}{k+1} + p_1} \right\}\right\} - O\left(\sqrt{\frac{\log T}{T}}\right).
\end{equation*}
We know that there are $(p_1, \dots, p_n)$ such that a $1/n$-aggressive strategy is at least $\lambda$-robust for some $\lambda > 1/2$ not depending on $T$ for all $T$ sufficiently large by \cref{thm:robustness_theorem}, so applying \cref{lem:crude_upper_bounds}, we can assume $0 < p_1 \leq 2e$, $p_2\leq 4e$, and $p_3\leq 12e$.
It follows that robustness of a $1/n$-aggressive strategy is not more than $O\left(\sqrt{\frac{\log T}{T}}\right)$ than the value of the following optimization problem.
\begin{align*}
    & \max_{\substack{0 < p_1\leq 2e\\0\leq p_2\leq 4e\\0\leq p_3\leq 12e\\p_4,p_5,\dots,p_n\in[0,\infty)}}\min_{k\in\{1,2,\dots,n-1\}}\max_{\gamma}\left(\frac{\gamma}{1+k} + (1-\gamma)\right)\min\left\{1, \frac{1}{\frac{p_{k+1}}{k+1}\cdot\gamma + p_1(1-\gamma)}\right\}\\
    & \text{s.t.} \quad 1\geq \gamma \geq \min\left\{1, \max\left\{\frac{1-1/n}{\left((1-1/n)\frac{p_k}{k} + \frac{p_{k+1}}{(k+1)n}\right)k},\frac{p_1}{\frac{\left((1-1/n)\frac{p_k}{k} + \frac{p_{k+1}}{(k+1)n}\right)k}{(1-1/n)} - \frac{p_{k+1}}{k+1} + p_1} \right\}\right\}
\end{align*}
By taking $n$ sufficiently large, the above problem's objective value can be made arbitrarily close to that of the below problem.
\begin{align}
    & \max_{\substack{0 < p_1\leq 2e\\0\leq p_2\leq 4e\\0\leq p_3\leq 12e\\p_4,p_5,\dots,p_n\in[0,\infty)}}\min_{k\in\{1,2,\dots,n-1\}}\max_{\gamma}\left(\frac{\gamma}{1+k} + (1-\gamma)\right)\min\left\{1, \frac{1}{\frac{p_{k+1}}{k+1}\cdot\gamma + p_1(1-\gamma)}\right\} \nonumber\\
    & \text{s.t.} \quad 1\geq  \gamma \geq \min\left\{1, \max\left\{\frac{1}{p_k},\frac{p_1}{p_k - \frac{p_{k+1}}{k+1} + p_1} \right\}\right\} \label{eq:upper_bound_robustness_optimization_problem_constraint}
\end{align}
Since the objective function can be written as the minimum of two linear-fractional functions with nonnegative denominator and thus quasiconvex functions in $\gamma$, the maximum over $\gamma$ occurs when either \eqref{eq:upper_bound_robustness_optimization_problem_constraint} is tight or if the two linear-fractional functions are equal, i.e.,
\begin{equation*}
    1 = \frac{1}{\frac{p_{k+1}}{k+1}\cdot\gamma + p_1(1-\gamma)}\iff \gamma = \frac{p_1}{\frac{p_{k+1}}{k+1}-p_1}.
\end{equation*}
Also, note that the $1\geq \gamma$ should not be tight at the maximizing $\gamma$, because then, then objective function is at most $\frac{1}{1+k}\leq 1/2$, and we know from the greater than $1/2$-robustness that the optimal objective is greater than $1/2$. Thus, only the $\gamma \geq \min\left\{1, \max\left\{\frac{1}{p_k},\frac{p_1}{p_k - \frac{p_{k+1}}{k+1} + p_1} \right\}\right\} $ should be tight at the maximizing $\gamma$. It follows from the previous arguments that we can add the constraint
\begin{equation*}
    \gamma\in\left\{\min\left\{1, \max\left\{\frac{1}{p_k},\frac{p_1}{p_k - \frac{p_{k+1}}{k+1} + p_1} \right\}\right\}, \frac{p_1}{\frac{p_{k+1}}{k+1}-p_1}\right\}.
\end{equation*}
without changing the objective value. Also, since we are upper bounding, we can assume $k\in\{1,2\}$, in which case the variables $p_4, p_5,\dots, p_n$ are unused so we can get rid of them. We then obtain the optimization program in the theorem statement.

\end{proof}

\section{Deferred Proofs from Section \ref{sec:equilibrium_mechanism}} \label{sec:app:equilibrium_mechanism_proofs}
In this section, we prove \cref{thm:equilibrium_mechanism}, which we restate here for completeness.

\equilibriumTheorem*

Throughout this section, we assume we use the parameters
\begin{equation*}
    \paymentConstant = \frac{n+1}{1+n(1-1/n)^{n+1}}
\end{equation*}
and
\begin{equation*}
    \epsilon = \sqrt{T\ln T}.
\end{equation*}

Let $\tau_i$ be the time at which agent $i$ runs out of budget and $T$ if the agent never runs out of budget:
\begin{equation*}
    \tau_i = \begin{cases}\min\left\{t:\sum_{s=1}^tP_i[s] \geq \frac Tn\right\} & \text{if $\sum_{s=1}^T P_i[s] \geq \alpha_iT$}\\ T & \text{otherwise}\end{cases}.
\end{equation*}
Define
\begin{equation*}
    \tau_i' = \begin{cases}\min\left\{t:\sum_{s=1}^t r_i[s] \leq \frac tn - \epsilon + 1\right\} & \text{if there exists a $t$ such that $\sum_{s=1}^t r_i[s] \leq \frac tn - \epsilon + 1$}\\T & \text{otherwise}\end{cases}.
\end{equation*}
We can see that with this definition of $\tau_i'$, if $t\leq \tau_i'$, then the minimum bidding constraint does not affect player $i$ at time $t$, in that it guarantees that $\sum_{s=1}^t r_i[s] > t/n - \epsilon$. Indeed, if $t \leq \tau_i'$, then
\begin{equation*}
    \sum_{s=1}^t r_i[s] \geq \sum_{s=1}^{t-1}r_i[s] > \frac{t-1}{n}-\epsilon + 1 \geq \frac{t}{n}-\epsilon.
\end{equation*}

Let $\mathcal H_t$ denote the history up to and including time $t$. Notice that the $\tau_i$ and $\tau_i'$ are stopping times with respect to the filtration $\mathcal H_t$.

The below lemma lower bounds $\tau_i'$ to show that the bidding minimum is not enforced with high probability as long as agents are using a $1/n$-aggressive strategy.

\begin{lemma} \label{lem:robustness_stopping_time_lower_bound}
If agent $i$ uses a $1/n$-aggressive strategy, regardless of the strategies of the other agents, $\tau_i \leq \tau_i'$ with probability at least $1-O(1/T^2)$. 
\end{lemma}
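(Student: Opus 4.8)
The plan is to show that under a $1/n$-aggressive strategy, agent $i$'s voluntary bidding keeps her running bid-count comfortably above the threshold that triggers the mechanism's forcing rule, so the bidding minimum never engages before she exhausts her budget; equivalently, $\tau_i'\geq\tau_i$. Since the only relevant randomness is agent $i$'s own bidding, which is i.i.d.\ $\Bern(1/n)$ on the window where she is free to choose, this reduces to a single concentration estimate and works regardless of the other agents' behavior.

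Concretely, I would first set up a coupling much as in the construction of $\hat V_i$ preceding \cref{lem:bernoulli_reduction}: let $\tilde r_i[s]=r_i[s]$ for $s\leq\min\{\tau_i,\tau_i'\}$ and let $\tilde r_i[s]$ be a fresh independent $\Bern(1/n)$ draw for larger $s$. For $s\leq\min\{\tau_i,\tau_i'\}$, agent $i$ has positive budget (since $s\leq\tau_i$) and, by the observation just before the lemma statement, is not forced to bid (since $s\leq\tau_i'$); hence $r_i[s]$ is exactly her own choice under the $1/n$-aggressive strategy, namely the indicator that $V_i[s]$ lies in the top $1/n$-quantile of $\mathcal F_i$, an event of probability $1/n$ that is independent of $\mathcal H_{s-1}$. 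As $\min\{\tau_i,\tau_i'\}$ is a stopping time for $(\mathcal H_t)_t$, the standard splicing argument shows $(\tilde r_i[s])_{s\in[T]}$ is i.i.d.\ $\Bern(1/n)$ and agrees with $(r_i[s])_s$ on $\{1,\dots,\min\{\tau_i,\tau_i'\}\}$.

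Next I would apply the bounded-difference Azuma--Hoeffding inequality to the stopped martingale $M_t=\sum_{s=1}^{t\wedge\tau_i'}(\tilde r_i[s]-1/n)$, whose increments lie in an interval of length $1$: with $\epsilon=\sqrt{T\ln T}$,
\begin{equation*}
    \Pr\left[M_T\leq-(\epsilon-1)\right]\;\leq\;\exp\left(-\frac{2(\epsilon-1)^2}{T}\right)\;=\;O\left(\frac{1}{T^2}\right).
\end{equation*}
To conclude, I would argue by contradiction: if $\tau_i'<\tau_i$ then $\tau_i'<T$ and $\min\{\tau_i,\tau_i'\}=\tau_i'$, so $M_T=\sum_{s=1}^{\tau_i'}r_i[s]-\tau_i'/n$ by the previous paragraph, whereas the defining inequality of $\tau_i'$ gives $\sum_{s=1}^{\tau_i'}r_i[s]\leq\tau_i'/n-\epsilon+1$, i.e.\ $M_T\leq-(\epsilon-1)$. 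Hence $\{\tau_i'<\tau_i\}$ is contained in the event above, so $\tau_i\leq\tau_i'$ with probability $1-O(1/T^2)$; the other agents' strategies entered only through $\tau_i$ and $\tau_i'$ and were never used otherwise.

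The one point that needs care is the interplay between the coupling and the definition of $\tau_i'$: one must verify that on the (random, stopping-time--bounded) window where neither the budget constraint nor the forcing rule is active, the realized bids genuinely equal unconditional $\Bern(1/n)$ draws, so that the concentration bound for the i.i.d.\ surrogate transfers to the true process without any circularity. The remaining ingredients --- the martingale property and increment bound for $M_t$, and the bookkeeping of the additive $+1$ slack against $\epsilon=\sqrt{T\ln T}$ --- are routine.
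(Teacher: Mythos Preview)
Your proposal is correct and follows essentially the same approach as the paper: both verify that $r_i[s]$ has conditional mean $1/n$ on $\{s\le\min\{\tau_i,\tau_i'\}\}$, apply Azuma--Hoeffding to the resulting stopped martingale, and observe that $\{\tau_i'<\tau_i\}$ forces the tail event. The explicit coupling to an i.i.d.\ surrogate $\tilde r_i$ is an extra layer the paper omits (it works directly with the martingale stopped at $\min\{\tau_i,\tau_i'\}$), but it is harmless and the core argument is identical.
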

\begin{proof}
Assume agent $i$ uses a \oneOverNAggressiveStrategy. For any time $t\leq\min\{\tau_i,\tau_i'\}$, $\E[r_i[t]\mid\mathcal H_{t-1}] = 1/n$. It follows that $\sum_{s=1}^{\min\{t,\tau_i,\tau_i'\}}r_i[s] - \min\{t, \tau_i,\tau_i'\}/n$ is an $\mathcal H_t$-martingale, so by the Azuma-Hoeffding inequality,
\begin{equation*}
    \Pr\left(\sum_{t=1}^{\min\{\tau_i,\tau_i'\}} r_i[t] \leq \frac{\min\{\tau_i,\tau_i'\}}{n} - \epsilon + 1\right) \leq \exp\left(-\frac{2(\epsilon-1)^2}{T}\right) = O\left(\frac1{T^2}\right)
\end{equation*}
for $\epsilon = \sqrt{T\ln T}$. If $\tau_i'< T$ and $\tau_i'\leq \tau_i$, then the above event happens, so $\tau_i \leq \tau_i'$ with probability at least $1-O(1/T^2)$.
\end{proof}

The below lemma says that if all but one agent uses a $1/n$-aggressive strategy, then they will not run out of budget too quickly. This prevents a single agent from being able to run other agents out of budget.

\begin{lemma} \label{lem:equilibrium_stopping_time_lower_bound}
If all agents $j\neq i$ use a \oneOverNAggressiveStrategy, regardless of the strategy of agent $i$, $\min_{j\neq i}\tau_j \geq T - O\left(\sqrt{T\log T}\right)$ with probability at least $1-O(1/T^2)$.
\end{lemma}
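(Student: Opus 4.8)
The plan is to track the agent $j\neq i$ who is first to exhaust her budget. Set $\tau^* = \min_{j\neq i}\tau_j$ and fix $j$ attaining this minimum; if $\tau^* = T$ we are immediately done, so assume $\tau^*<T$, which by definition of $\tau_j$ forces $\sum_{t\le\tau^*}P_j[t]\ge T/n$. For every $t\le\tau^*$ all agents $\ell\neq i$ still have budget, and by \cref{lem:robustness_stopping_time_lower_bound} applied to each such $\ell$ (who plays a $1/n$-aggressive strategy), whp the minimum-bidding rule never overrides any of them before $\tau^*$; hence their bids are i.i.d.\ $\Bern(1/n)$, independent of the history and of agent $i$'s bid. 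Writing $Y_t\sim\Binom(n-2,1/n)$ for the number of bidders among $N\setminus\{i,j\}$ and $\rho_t=\Pr(r_i[t]=1\mid\mathcal H_{t-1})$, the uniform allocation and the payment rule give
\[
 \E[P_j[t]\mid\mathcal H_{t-1}] \;=\; \frac{\paymentConstant}{n}\,\E\!\left[\frac1{2+Y_t+r_i[t]}\,\middle|\,\mathcal H_{t-1}\right] \;=\; \frac{\paymentConstant}{n}\bigl(b-\rho_t(b-a)\bigr),
\]
where $a=\E[1/(3+Y)]$, $b=\E[1/(2+Y)]$ with $Y\sim\Binom(n-2,1/n)$, and $b\ge a$.

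The key identity is that, via the coupling $X\overset{d}{=}Y+\Bern(1/n)$ for $X\sim\Binom(n-1,1/n)$, one gets $\tfrac1n a+(1-\tfrac1n)b = \E[1/(2+X)] = 1/\paymentConstant$ by \cref{prop:expectation_of_functions_of_binom} and the choice of $\paymentConstant$. This has two uses. First, conditioned on agent $i$ bidding at any $t\le\tau^*$ (when all other agents are active), her expected payment is exactly $\paymentConstant\,\E[1/(2+X)]=1$, so $\sum_{t\le t'}(P_i[t]-r_i[t])$ is a martingale with increments $O(1)$; hence if agent $i$ exhausts her budget at some $\tau_i\le\tau^*$ then $\sum_{t\le\tau_i}P_i[t]\ge T/n$, and Azuma-Hoeffding yields $\sum_{t\le\tau^*}r_i[t]\ge\sum_{t\le\tau_i}r_i[t]\ge T/n-O(\sqrt{T\log T})\ge\tau^*/n-O(\sqrt{T\log T})$. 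Otherwise agent $i$ keeps positive budget on $[1,\tau^*)$, so the minimum-bidding rule is never overridden for her and directly forces $\sum_{t\le\tau^*}r_i[t]\ge\tau^*/n-\epsilon-O(1)$. In either case $\sum_{t\le\tau^*}r_i[t]\ge\tau^*/n-O(\sqrt{T\log T})$, and one more Azuma-Hoeffding bound on $\sum_{t\le\tau^*}(r_i[t]-\rho_t)$ gives $\sum_{t\le\tau^*}\rho_t\ge\tau^*/n-O(\sqrt{T\log T})$.

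Finally I would assemble the pieces. Summing the display and using $b-a>0$ together with the bound on $\sum\rho_t$,
\[
 \sum_{t\le\tau^*}\E[P_j[t]\mid\mathcal H_{t-1}] \;\le\; \frac{\paymentConstant}{n}\Bigl(b\tau^*-(b-a)\tfrac{\tau^*}{n}\Bigr)+O(\sqrt{T\log T}) \;=\; \frac{\paymentConstant\tau^*}{n}\Bigl(\tfrac1n a+(1-\tfrac1n)b\Bigr)+O(\sqrt{T\log T}) \;=\; \frac{\tau^*}{n}+O(\sqrt{T\log T}),
\]
using $\paymentConstant(\tfrac1n a+(1-\tfrac1n)b)=1$. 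An Azuma-Hoeffding bound (increments at most $\paymentConstant$) on $\sum_{t\le\tau^*}\bigl(P_j[t]-\E[P_j[t]\mid\mathcal H_{t-1}]\bigr)$ then turns $\sum_{t\le\tau^*}P_j[t]\ge T/n$ into $\tau^*/n\ge T/n-O(\sqrt{T\log T})$, i.e.\ $\tau^*\ge T-O(\sqrt{T\log T})$. All exceptional events have probability $O(1/T^2)$ and there are $O(1)$ of them, so a union bound gives the claimed probability. The step I expect to be the crux is the lower bound on $\sum_{t\le\tau^*}\rho_t$: the minimum-bidding rule by itself no longer controls agent $i$'s bidding frequency once her budget runs out, and the resolution is exactly the observation that the calibration $\paymentConstant\,\E[1/(2+X)]=1$ also forces agent $i$ to bid on roughly $T/n$ rounds before she can possibly spend her own budget, so the competition agent $j$ faces is, in aggregate, no weaker than in the fully-symmetric play for which $\paymentConstant$ was designed.
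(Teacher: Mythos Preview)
Your proof is correct and follows essentially the same route as the paper: compute $\E[P_j[t]\mid\cdot]$ in terms of whether agent $i$ bids, lower-bound agent $i$'s total bids up to $\tau^*$, invoke the calibration $(1-\tfrac1n)\E[1/(2+Y)]+\tfrac1n\E[1/(3+Y)]=1/\paymentConstant$, and close with Azuma--Hoeffding. Two differences are worth noting. First, the paper conditions on the $\sigma$-algebra $\mathcal G_{t-1}$ that already contains $r_i[t]$ and works directly with the realized count $\sum_{s\le\tau}r_i[s]$; this avoids your detour through $\rho_t$ and the extra Azuma step relating $\sum\rho_t$ to $\sum r_i[t]$. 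Second, the paper obtains $\sum_{s\le\tau}r_i[s]\ge\tau/n-\epsilon$ by citing only the minimum-bidding rule, whereas you correctly observe that this rule can be overridden by the budget constraint once agent $i$ is out of tokens, and you fill that gap via the payment martingale $\E[P_i[t]\mid r_i[t]]=r_i[t]$, which forces agent $i$ to have bid roughly $T/n\ge\tau^*/n$ times before she can possibly exhaust her budget; your extra care here is warranted and arguably patches an under-justified step in the paper's argument. One minor point: since the identity of the first-to-exhaust agent $j$ is random, the Azuma bound on $\sum P_j[t]$ should be stated for each fixed $j\neq i$ and then union-bounded (as the paper does), rather than applied to the post-hoc minimizer.
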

\begin{proof}
Assume each agent $j\neq i$ uses a \oneOverNAggressiveStrategy{} and that agent $i$'s strategy is arbitrary. Let $\mathcal G_t$ be the $\sigma$-algebra generated by $\mathcal H_t$ and $r_i[t+1]$. Let $\tau = \min_{j\neq i}\tau_j$ and $\tau' = \min_{j\neq i}\tau_j'$. We must lower bound $\tau$. For any $t\leq \min\{\tau, \tau'\}$ and $j\neq i$,
\begin{equation*}
    \E[P_j[t]\mid\mathcal G_{t-1}] = \paymentConstant\E\left[\frac{r_j[t]}{1 + r_i[t] + r_j[t] + \sum_{k\neq i,j}r_k[t]}\,\middle|\,\mathcal G_{t-1}\right] = \frac{\paymentConstant}{n}\E\left[\frac{1}{2 + r_i[t] + Y}\right]
\end{equation*}
where $Y\sim\Binom(n-2,1/n)$ is independent of $r_i[t]$. The first equality follows directly from \cref{lem:expected_payment_of_agent}, and the second equality follows from the fact that each agent other than $i$ bids with probability $1/n$, independently of all other agents when conditioned on $\mathcal G_{t-1}$. It follows that
\begin{equation*}
    \sum_{s=1}^{\min\{t,\tau,\tau'\}}P_j[s] - \frac{\paymentConstant}{n}\sum_{s=1}^{\min\{t,\tau,\tau'\}}\E\left[\frac{1}{2 + r_i[s] + Y}\right]
\end{equation*}
is a $\mathcal G_t$-martingale. Notice that the $P_j[t]$ are bounded by $\paymentConstant$. By Azuma-Hoeffding,
\begin{equation}
\label{eq:following_agents_payments_are_not_too_high}
    \Pr\left(\sum_{t=1}^{\min\{\tau,\tau'\}}P_j[t] \geq \frac{\paymentConstant}{n}\sum_{s=1}^{\min\{\tau,\tau'\}}\E\left[\frac{1}{2 + r_i[s] + Y}\right] + \paymentConstant\sqrt{T\ln T}\right) \leq \exp\left(-\frac{2T\ln T}{\paymentConstant T}\right) \leq \frac{1}{T^2}.
\end{equation}
Also, note that from \cref{lem:robustness_stopping_time_lower_bound}, $\tau_j\leq \tau_j'$ with probability at least $1-O(1/T^2)$ for each $j\neq i$. If $\tau_j\leq \tau_j'$ for each $j\neq i$, then $\tau \leq \tau'$. 
Consider what happens on the event that $\tau \leq \tau'$ and that the event in \eqref{eq:following_agents_payments_are_not_too_high} does not happen for any $j\neq i$ which has probability at least $1-O(1/T^2)$ by the union bound.
In that case,
\begin{equation*}
    \sum_{s=1}^\tau P_j[s] \leq \frac{\paymentConstant}{n}\sum_{s=1}^\tau\E\left[\frac{1}{2 + r_i[s] + Y}\right] + \paymentConstant\sqrt{T\ln T}.
\end{equation*}
for every $j\neq i$. If $\tau=T$, then there is nothing to prove, so assume $\tau < T$. Then, $\sum_{s=1}^\tau P_j[s] \geq T/n$ for some $j$, so
\begin{equation} \label{eq:following_agents_budget_stopping_time}
    \frac{T}{n} \leq \frac{\paymentConstant}{n}\sum_{s=1}^\tau\E\left[\frac{1}{2 + r_i[s] + Y}\right] + \paymentConstant\sqrt{T\ln T}.
\end{equation}
Write
\begin{equation*}
    \sum_{s=1}^\tau\E\left[\frac{1}{2 + r_i[s] + Y}\right] = \E\left[\frac{1}{2  + Y}\right]\#\{s\leq\tau:r_i[s]=0\} + \E\left[\frac{1}{3 + Y}\right]\#\{s\leq\tau:r_i[s]=1\}.
\end{equation*}
By the minimum bidding rule of the mechanism, we see that $\#\{s\leq\tau:r_i[s]=1\} \geq \tau/n- \epsilon$. It follows that
\begin{equation*}
\begin{split}
    \sum_{s=1}^\tau \E\left[\frac{1}{2 + r_i[s] + Y}\right] & \leq \E\left[\frac{1}{2 + Y}\right]\left(\tau - \left(\frac\tau n-\epsilon\right)\right) + \E\left[\frac{1}{3  + Y}\right]\left(\frac\tau n-\epsilon\right)\\
    & = \left(\E\left[\frac{1}{2+Y}\right]\left(1-\frac1n\right) + \E\left[\frac{1}{3+Y}\right]\cdot\frac1n\right)\tau + O\left(\sqrt{T\log T}\right),
\end{split}
\end{equation*}
using the fact that $\epsilon = \sqrt{T\ln T}$ for second line. Use \cref{prop:expectation_of_functions_of_binom} to see that 
\begin{equation*}
    \E\left[\frac{1}{2+Y}\right]\left(1-\frac1n\right) + \E\left[\frac{1}{3+Y}\right]\cdot\frac1n = \frac{1+n(1-1/n)^{n+1}}{n+1} = \frac{1}{\paymentConstant}.
\end{equation*}
Then,
\begin{equation*}
    \sum_{s=1}^\tau \E\left[\frac{1}{2 + r_i[s] + Y}\right] \leq \frac{\tau}{\paymentConstant} + O\left(\sqrt{T\log T}\right),
\end{equation*}
and substituting into \eqref{eq:following_agents_budget_stopping_time}, we obtain
\begin{equation*}
    \frac{T}{n} \leq \frac{\tau}{n} + O\left(\sqrt{T\log T}\right).
\end{equation*}
Solving for $\tau$, we obtain the desired lower bound on $\tau$.
\end{proof}

The below lemma shows that the robustness guarantee from \cref{thm:general_robustness_claim} when substituting our new value of $\paymentConstant$ is always at least $5/(3e)$.
\begin{lemma} \label{lem:robustness_with_equilibrium_payment_constant}
With our choice of
\begin{equation*}
    \paymentConstant = \frac{n+1}{1+n(1-1/n)^{n+1}},
\end{equation*}
for any $n$,
\begin{equation*}
    \min\left\{1-\frac{3(1-1/n)}{3\paymentConstant - \paymentConstant/n}, \frac{5-1/n}{\paymentConstant(3-1/n)}\right\} \geq \frac{5}{3e}.
\end{equation*}
\end{lemma}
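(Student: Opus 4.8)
The plan is to reduce everything to two-sided bounds on $\paymentConstant$. By \cref{prop:expectation_of_functions_of_binom}, $\E[1/(2+X)] = \frac{1+n(1-1/n)^{n+1}}{n+1}$ for $X\sim\Binom(n-1,1/n)$, so $\paymentConstant = 1/\E[1/(2+X)]$; moreover the proof of that proposition gives the integral representation $\E[1/(2+X)] = \int_0^1 t\bigl(1-\tfrac{1-t}{n}\bigr)^{n-1}\,dt$. Write $\alpha=1/n$ and $c=1-\alpha$, so that $c\in[\tfrac12,1)$ for $n\ge 2$, $3-\alpha=2+c$, and $3\paymentConstant-\paymentConstant/n=\paymentConstant(2+c)$; the case $n=1$ (where $c=0$ and $\paymentConstant=2$) is trivial since both bracketed terms equal $1$. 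So from now on $n\ge 2$.

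First I would show $\paymentConstant\le e$. The key pointwise estimate is $(1-s/n)^{n-1}\ge e^{-s}$ for $s\in[0,1]$, $n\ge2$: taking logarithms this is equivalent to $1-s/n\ge e^{-s/(n-1)}$, and since $e^{-x}\le 1-x+x^2/2$ for $x\ge 0$, it suffices that $\tfrac{s}{n(n-1)}\ge\tfrac{s^2}{2(n-1)^2}$, i.e. $s\le 2-2/n$, which holds. Integrating over $t\in[0,1]$ with $s=1-t$ gives $\E[1/(2+X)]\ge\int_0^1 te^{-(1-t)}\,dt = e^{-1}\int_0^1 te^t\,dt = e^{-1}$, hence $\paymentConstant\le e$. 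This already settles the second bracketed term: $\frac{5-\alpha}{\paymentConstant(3-\alpha)}\ge\frac{5-\alpha}{e(3-\alpha)}\ge\frac1e\cdot\frac53=\frac5{3e}$, where the last inequality uses that $\alpha\mapsto\frac{5-\alpha}{3-\alpha}$ is increasing on $[0,1)$ with minimum value $5/3$ at $\alpha=0$.

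For the first bracketed term I need a matching lower bound on $\paymentConstant$. Using $\ln(1-s/n)\le -s/n$ gives $(1-s/n)^{n-1}\le e^{-cs}$ with $c=(n-1)/n$, so $\E[1/(2+X)]\le\int_0^1 te^{-c(1-t)}\,dt = \frac{e^{-c}-(1-c)}{c^2}$ (the denominator is positive for $c\in(0,1]$ by strict convexity of $e^{-c}$), hence $\paymentConstant\ge\frac{c^2}{e^{-c}-(1-c)}$. Now $1-\frac{3(1-\alpha)}{\paymentConstant(3-\alpha)}\ge\frac5{3e}$ rearranges to $\paymentConstant\ge\frac{9ec}{(3e-5)(2+c)}$, so it suffices to prove $\phi(c):=c(3e-5)(2+c)-9e\bigl(e^{-c}-1+c\bigr)\ge0$ on $[\tfrac12,1]$. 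Here $\phi''(c)=2(3e-5)-9e\,e^{-c}$ is increasing in $c$ with $\phi''(1)=6e-19<0$, so $\phi$ is concave on $[0,1]$ and its minimum over $[\tfrac12,1]$ is at an endpoint; one checks $\phi(1)=9e-24>0$ and $\phi(\tfrac12)=\tfrac{33e-25}{4}-9\sqrt e>0$ (equivalently $33e-25>36\sqrt e$), which completes the argument.

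I expect the main obstacle to be the lower bound on $\paymentConstant$ in the last step: the margins are genuinely thin there (e.g. $\phi(1)=9e-24\approx0.46$ and $\phi'(\tfrac12)=9\sqrt e-15<0$ only because $e<25/9$), so the exponential estimates must be done carefully and there is little room to be wasteful, whereas the bound $\paymentConstant\le e$ and the second-term argument are comparatively slack. The only other point requiring a brief check is that $e^{-c}-(1-c)>0$ on $(0,1]$, so that the division producing the lower bound on $\paymentConstant$ is legitimate.
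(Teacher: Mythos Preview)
Your argument is correct and takes a genuinely different route from the paper's. The paper handles the second term by observing that each of the two factors $\frac{5-1/n}{3-1/n}$ and $\frac{1+n(1-1/n)^{n+1}}{n+1}=1/\paymentConstant$ is monotone decreasing in $n$ with limit $5/3$ and $1/e$ respectively, so the product decreases to $5/(3e)$; for the first term it simply checks $n\le 4$ by hand and for $n\ge 5$ uses the crude bound $\frac{3(1-1/n)}{3-1/n}\le 1$ together with the explicit monotone formula for $1/\paymentConstant$. You instead extract analytic two-sided bounds on $\paymentConstant$ from the integral representation $1/\paymentConstant=\int_0^1 t\bigl(1-\tfrac{1-t}{n}\bigr)^{n-1}dt$: the pointwise estimate $(1-s/n)^{n-1}\ge e^{-s}$ gives $\paymentConstant\le e$ uniformly (cleanly dispatching the second term), and the opposite estimate $(1-s/n)^{n-1}\le e^{-cs}$ with $c=(n-1)/n$ yields $\paymentConstant\ge c^2/(e^{-c}-1+c)$, after which the first term reduces to showing $\phi(c)=c(3e-5)(2+c)-9e(e^{-c}-1+c)\ge 0$ on $[\tfrac12,1]$, which you close by concavity and endpoint checks. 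Your approach avoids any case-by-case verification for small $n$ and packages everything as a one-variable inequality, at the cost of a slightly more delicate analytic computation; the paper's approach is more bare-hands but relies on finitely many numerical checks.
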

\begin{proof}
We have 
\begin{equation*}
\begin{split}
    &\min\left\{1-\frac{3(1-1/n)}{3\paymentConstant - \paymentConstant/n}, \frac{5-1/n}{\paymentConstant(3-1/n)}\right\}\\
    = &\min\left\{1-\frac{3(1-1/n)}{3-1/n}\cdot\frac{1+n(1-1/n)^{n+1}}{n+1},\frac{5-1/n}{3-1/n}\cdot\frac{1+n(1-1/n)^{n+1}}{n+1}\right\}.
\end{split}
\end{equation*}
The factors $\frac{5-1/n}{3-1/n}$ and $\frac{1+n(1-1/n)^{n+1}}{n+1}$ can be seen to be decreasing in $n$, so the second term in the minimum above is decreasing in $n$. This second term can easily be seen to have limit $5/(3e)$. We must now show that the first term in the minimum is always at least the second. We can check that first term is always at least $5/(3e)$ by checking it individually for $n\leq 4$, and for $n\geq 5$,
\begin{equation*}
    1-\frac{3(1-1/n)}{3-1/n}\cdot\frac{1+n(1-1/n)^{n+1}}{n+1} \geq 1 - \frac{1+n(1-1/n)^{n+1}}{n+1} \geq 0.614 > \frac{5}{3e},
\end{equation*}
using monotonicity to check the second-to-last inequality.
\end{proof}

The below proposition deals with the robustness of a \oneOverNAggressiveStrategy.  \cref{thm:general_robustness_claim} gives the robustness for \allPayMechanism, and \cref{lem:robustness_stopping_time_lower_bound} shows that \allPayMechanismWithBiddingMinimum and \allPayMechanism do not differ too much.
\begin{proposition} \label{prop:robustness_in_equilibrium_mechanism}
A \oneOverNAggressiveStrategy{} is $\lambda$-robust for some
\begin{equation*}
    \lambda \geq \frac{5}{3e} - O\left(\sqrt{\frac{\log T}{T}}\right).
\end{equation*}
\end{proposition}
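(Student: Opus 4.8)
The plan is to derive \cref{prop:robustness_in_equilibrium_mechanism} directly from the robustness analysis of \allPayMechanism\ in \cref{thm:general_robustness_claim}, using \cref{lem:robustness_stopping_time_lower_bound} to dispose of the only new feature of \allPayMechanismWithBiddingMinimum{} -- the minimum-bidding clause. First I would record the structural facts that the allocation rule, the payment rule, the budget endowments, and the budget-enforcement step of \allPayMechanismWithBiddingMinimum\ are identical to those of \allPayMechanism, and that the extra clause (i) only ever forces an agent to request \emph{more} often and (ii) is overridden by the subsequent budget-enforcement step, so an agent with nonpositive budget still requests nothing. The consequence is that restricting the other agents to strategies obeying their own minimum-bidding constraints merely shrinks the adversary's feasible set, so a robustness guarantee for agent $i$ proved against arbitrary adversaries in \allPayMechanism\ cannot be broken here -- provided agent $i$'s own play is still the intended one.

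That last proviso is exactly \cref{lem:robustness_stopping_time_lower_bound}: with probability $1-O(1/T^2)$ we have $\tau_i\le\tau_i'$, and on this event the clause never overrides agent $i$ for $t\le\tau_i$ (by definition of $\tau_i'$), so her requests coincide with a \oneOverNAggressiveStrategy, while for $t>\tau_i$ she has no budget and therefore requests -- and wins -- nothing. So I would take the high-probability event $E$ to be the analogue for \allPayMechanismWithBiddingMinimum\ of the one in \cref{lem:high_probability}, intersected with $\{\tau_i\le\tau_i'\}$ (still of probability $1-O(1/T^2)$). On $E$, agent $i$ still requests with probability $\alpha_i=1/n$ independently of $\mathcal G_{t-1}$ for each $t\le\tau_i$ and is out for $t>\tau_i$, so the identities $\E[P_i[t]\mid\mathcal G_{t-1}]=\alpha_i\paymentConstant/(2+k_i[t])$ and $\E[W_i[t]\mid\mathcal G_{t-1}]=\alpha_i/(1+k_i[t])$ for $t\le\tau_i$ hold verbatim (with $k_i[t]$ the realized number of other requesters), the other agents' budget constraint $\sum_{j\ne i}\sum_t P_j[t]\le(1-\alpha_i)T$ is unchanged, and hence \cref{lem:agent_utility_adversary_budget_constraint_rewritten,lem:robustness_when_not_running_out_of_budget,lem:robustness_when_running_out_of_budget} -- and so the conclusion of \cref{thm:general_robustness_claim} -- transfer unchanged to \allPayMechanismWithBiddingMinimum.

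It then remains only to instantiate. With $\alpha_i=1/n$ and $\paymentConstant=(n+1)/(1+n(1-1/n)^{n+1})$, which satisfies $\paymentConstant\ge2$ (readily checked via $n(1-1/n)^{n+1}\le(n-1)/2$ for all $n\ge2$), \cref{thm:general_robustness_claim} gives
\begin{equation*}
    \frac1T\sum_{t=1}^T\E[U_i[t]] \ \ge\ \left(\min\left\{\,1-\frac{3(1-1/n)}{3\paymentConstant-\paymentConstant/n},\ \frac{5-1/n}{\paymentConstant(3-1/n)}\,\right\}\ -\ O\!\left(\sqrt{\tfrac{\log T}{T}}\right)\right)v_i^{*},
\end{equation*}
and \cref{lem:robustness_with_equilibrium_payment_constant} shows the bracketed minimum is at least $5/(3e)$, so the \oneOverNAggressiveStrategy\ is $\big(5/(3e)-O(\sqrt{\log T/T})\big)$-robust.

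I expect the only real work to be the bookkeeping in the middle paragraph: making airtight that inserting the minimum-bidding clause neither helps the adversary (it only restricts their strategies) nor indirectly harms agent $i$ (it never binds while she has budget, by \cref{lem:robustness_stopping_time_lower_bound}, and afterward she is out of the running). An equivalent packaging would be a coupling of a run of \allPayMechanismWithBiddingMinimum\ with a run of \allPayMechanism\ that agree on $\{\tau_i\le\tau_i'\}$; the single point needing care there is that the other agents can reconstruct agent $i$'s request from the publicly revealed round size $|S[t]|$, so the behavior they are coupled to is a legitimate \allPayMechanism\ strategy. Either route, everything past this observation is a direct citation of \cref{thm:general_robustness_claim,lem:robustness_with_equilibrium_payment_constant}.
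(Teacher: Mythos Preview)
Your proposal is correct and matches the paper's proof: both reduce to \cref{thm:general_robustness_claim} via \cref{lem:robustness_stopping_time_lower_bound} and then invoke \cref{lem:robustness_with_equilibrium_payment_constant}. The paper packages the transfer exactly as your coupling alternative (the mechanisms agree up to $\tau_i'$, and on $\{\tau_i\le\tau_i'\}$ agent $i$ earns nothing afterward), which is the cleaner of your two routes since it sidesteps the conditioning subtlety in your direct re-derivation (independence of $r_i[t]$ from $\mathcal G_{t-1}$ holds unconditionally in the coupled \allPayMechanism\ run, whereas restricting to the event $\{\tau_i\le\tau_i'\}$ mid-argument is awkward because that event is not $\mathcal G_{t-1}$-measurable).
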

\begin{proof}
Assume agent $i$ uses a \oneOverNAggressiveStrategy. Fix the strategies of players $j\neq i$. Couple \allPayMechanismWithBiddingMinimum{} and \allPayMechanism{} such that agent $i$ is using a \oneOverNAggressiveStrategy{} in both mechanisms, for $t\leq \tau_i'$, agent $i$ requests in \allPayMechanismWithBiddingMinimum{} if and only if she requests in \allPayMechanism, and agents $j\neq i$ request in \allPayMechanismWithBiddingMinimum{} if and only if they request in \allPayMechanism. This is possible to do because \allPayMechanismWithBiddingMinimum{} is just \allPayMechanism{} with a bidding minimum, so any strategy used by agents $j\neq i$ in \allPayMechanismWithBiddingMinimum{}  can be used in \allPayMechanism. Also, agent $i$ can use the same strategy for $t\leq\tau_i'$ because for $t\leq\tau_i'$, the bidding minimum does not affect agent $i$'s ability to request.

By \cref{lem:robustness_stopping_time_lower_bound}, $\tau_i\leq\tau_i'$ with probability at least $1-O(1/T^2)$. On this event, \allPayMechanism{} and \allPayMechanismWithBiddingMinimum{} are the same up to $\tau_i'$, and since the agent does not gain any utility after the time they run out of budget $\tau_i$ which is at most $\tau_i'$, she obtains the same utility in \allPayMechanism as in \allPayMechanismWithBiddingMinimum.

It follows from \cref{thm:general_robustness_claim} that a \oneOverNAggressiveStrategy{} is $\lambda_i$-robust in \allPayMechanismWithBiddingMinimum{} for
\begin{equation*}
\begin{split}
    \lambda_i & \geq \min\left\{1-\frac{3(1-1/n)}{3\paymentConstant - \paymentConstant/n}, \frac{5-1/n}{\paymentConstant(3-1/n)}\right\}- O\left(\sqrt{\frac{\log T}{T}}\right).
\end{split}
\end{equation*}
Note that our choice of $\bar b$ indeed is at least $2$ so we can use \cref{thm:robustness_theorem}. This is at least $5/(3e)$ by \cref{lem:robustness_with_equilibrium_payment_constant}.

\end{proof}

\begin{proposition} \label{prop:equilibrium_utility_claim}
If each agent plays a \oneOverNAggressiveStrategy, each agent obtains at least a $1-(1-1/n)^n - O\left(\sqrt{\frac{\log T}{T}}\right)$-fraction of their ideal utility:
\begin{equation*}
    \frac1T\sum_{t=1}^T W_i[t] \geq \left(1-\left(1-\frac1n\right)^n\right)v_i^* - O\left(\sqrt{\frac{\log T}{T}}\right)
\end{equation*}
with probability at least $1-O(1/T^2)$.
\end{proposition}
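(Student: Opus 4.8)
The plan is to show that, when all $n$ agents play a $1/n$-aggressive strategy, with probability $1-O(1/T^2)$ no agent exhausts her budget or is affected by the bidding minimum before time $T-O(\sqrt{T\log T})$; on that event agent $i$'s per-round conditional win probability is exactly $\tfrac1n\bigl(1-(1-1/n)^n\bigr)$, after which a martingale concentration bound counts her wins and \cref{lem:bernoulli_reduction} converts that count into the claimed fraction of ideal utility.

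First I would control the stopping times of all agents simultaneously. For any fixed agent $j$, I apply \cref{lem:equilibrium_stopping_time_lower_bound} with some $i_0\ne j$ in the role of the ``deviating'' agent: its hypothesis (all agents $\ne i_0$ play a $1/n$-aggressive strategy) holds, and its conclusion $\min_{j'\ne i_0}\tau_{j'}\ge T-O(\sqrt{T\log T})$ lower-bounds $\tau_j$ in particular. Because $n$ is a constant, a union bound over $j\in[n]$ gives $\tau_j\ge T-O(\sqrt{T\log T})$ for every $j$ with probability $1-O(1/T^2)$. In the same way, \cref{lem:robustness_stopping_time_lower_bound} gives $\tau_j\le\tau_j'$ for every $j$ on an event of probability $1-O(1/T^2)$, so up to $\tau:=\min_j\tau_j$ the bidding minimum of \allPayMechanismWithBiddingMinimum{} is never triggered and every agent bids independently with probability exactly $1/n$.

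Next I would compute agent $i$'s win rate. For $t\le\tau$, conditionally on the history $\mathcal H_{t-1}$, the number of agents $\ne i$ that bid is $X\sim\Binom(n-1,1/n)$ and is independent of whether agent $i$ bids; so by the uniform allocation rule and \cref{prop:expectation_of_functions_of_binom},
\begin{equation*}
    \E[W_i[t]\mid\mathcal H_{t-1}] \;=\; \frac1n\,\E\!\left[\frac{1}{1+X}\right] \;=\; \frac1n\left(1-\left(1-\frac1n\right)^n\right).
\end{equation*}
Applying Azuma--Hoeffding to the martingale $\sum_{s=1}^{\min\{t,\tau\}}\bigl(W_i[s]-\E[W_i[s]\mid\mathcal H_{s-1}]\bigr)$ (increments in $[-1,1]$), using $W_i[t]\ge 0$ and $\tau\ge T-O(\sqrt{T\log T})$, I get $\frac1T\sum_{t=1}^T W_i[t]\ge \alpha_i\bigl(1-(1-1/n)^n\bigr)-O(\sqrt{\log T/T})$ with probability $1-O(1/T^2)$; since $v_i^*\le\alpha_i=1/n$, \cref{lem:bernoulli_reduction} then yields $\frac1T\sum_t U_i[t]\ge\bigl(1-(1-1/n)^n\bigr)v_i^* - O(\sqrt{\log T/T})$, i.e.\ agent $i$ receives at least that fraction of her ideal utility. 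A final union bound over the $O(1/T^2)$-probability failure events finishes the argument.

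The main obstacle is entirely in the first step: I must rule out that some agent runs out of budget early (which would let the remaining agents monopolize later rounds) and that the bidding minimum perturbs the clean $\Bern(1/n)$ bidding pattern. No new argument is needed here, since \cref{lem:equilibrium_stopping_time_lower_bound} already handles an arbitrary deviator and \cref{lem:robustness_stopping_time_lower_bound} the bidding minimum; the only care required is to invoke them for each of the $n$ agents and union-bound, which costs nothing because $n$ does not grow with $T$. The remaining steps are a routine conditional-expectation identity via \cref{prop:expectation_of_functions_of_binom} followed by standard concentration.
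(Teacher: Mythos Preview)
Your proposal is correct and follows essentially the same approach as the paper: control the stopping times via \cref{lem:equilibrium_stopping_time_lower_bound} and \cref{lem:robustness_stopping_time_lower_bound}, compute the conditional win probability $\tfrac1n(1-(1-1/n)^n)$ using \cref{prop:expectation_of_functions_of_binom}, apply Azuma--Hoeffding to count wins, and finish with \cref{lem:bernoulli_reduction}. If anything, you are slightly more explicit than the paper in bounding $\tau_i$ for the focal agent herself (the paper sets $\tau=\min_{j\neq i}\min\{\tau_j,\tau_j'\}$ and tacitly relies on symmetry for agent $i$); your device of invoking \cref{lem:equilibrium_stopping_time_lower_bound} once for each choice of ``deviator'' and union-bounding over the $n$ agents is a clean way to make this precise.
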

\begin{proof}
Let $\tau = \min_{j\neq i}\min\{\tau_j,\tau_j'\}$. By \cref{lem:equilibrium_stopping_time_lower_bound} and the union bound, there is an event of probability at least $1-O(1/T^2)$ on which $\tau \geq T - O\left(\sqrt{T\log T}\right)$. 

Fix an agent $i$.
For $t\leq \tau$,
\begin{align*}
    \E[W_i[t]\mid\mathcal H_{t-1}] & = \E\left[\frac{r_i[t]}{r_i[t] + \sum_{j\neq i}r_j[t]}\,\middle|\,\mathcal H_{t-1}\right] & \text{[Random allocation rule among bidding agents]}\\
    & = \frac1n\cdot\E_{X\sim\Binom(n-1,1/n)}\left[\frac1{1+X}\right] & \text{[Agents independently bid with probability $1/n$]}\\
    & = \frac1n\left(1-\left(1-\frac1n\right)^n\right). & \text{[\cref{prop:expectation_of_functions_of_binom}]}
\end{align*}
Then, $\sum_{s=1}^{\min\{t,\tau\}}W_i[s] - \frac{\min\{t,\tau\}}n\left(1-\left(1-\frac1n\right)^n\right)$ is an $\mathcal H_t$-martingale, so we can apply Azuma-Hoeffding to see that
\begin{equation*}
    \sum_{t=1}^\tau W_i[t] \geq \frac{\tau}{n}\left(1-\left(1-\frac1n\right)^n\right) - \sqrt{T\ln T}
\end{equation*}
with probability at least $1- 1/T^2$. Using the fact that $\tau \geq T - O\left(\sqrt{T\log T}\right)$ with probability at least $1-O(1/T^2)$, we obtain
\begin{equation*}
    \frac1T\sum_{t=1}^\tau W_i[t] \geq \frac{1}{n}\left(1-\left(1-\frac1n\right)^n\right) - O\left(\sqrt{\frac{\log T}{T}}\right)
\end{equation*}
with probability at least $1-O(1/T^2)$. Now we apply \cref{lem:bernoulli_reduction} to obtain the lemma statement.

\end{proof}

For the proof of the next lemma, it is useful to have the notion of $\beta$-ideal utility as defined in \cite{fikioris2025beyond}.

\begin{definition}[$\beta$-Ideal Utility] \label{def:beta_ideal_utility}
The \textit{$\beta$-ideal utility} $v_i^*(\beta)$ of agent $i$ is the value of the following maximization problem over measurable $\rho:[0,\infty)\to[0,1]$.
\begin{equation} \label{eq:beta_ideal_utility_maximization_problem}
    \max_{\rho}\E_{V_i\sim\mathcal F_i}[V_i\rho(V_i)] \quad\text{subject to}\quad \E_{V_i\sim\mathcal F_i}[\rho(V_i)]\leq\beta
\end{equation}
\end{definition}
This is the ideal utility of agent $i$ if they were to have fair share $\beta$. The ideal utility as defined in \cref{def:ideal_utility} of agent $i$ is just the $\alpha_i$-ideal utility. \citet{fikioris2025beyond} prove that the $\beta$-ideal utility is concave in $\beta$. This is intuitive: if $\beta_1 < \beta_2 < \beta_3$, then $\frac{v_i^*(\beta_2) - v_i^*(\beta_1)}{\beta_2-\beta_1} \geq \frac{v_i^*(\beta_3) - v_i^*(\beta_2)}{\beta_3 - \beta_2}$ since the gain in utility between the top $\beta_1$ and $\beta_2$ quantiles of the value distribution is greater than the gain in utility between the top $\beta_2$ and $\beta_3$ quantiles.
\begin{fact} \label{fact:beta_ideal_utility_concavity}
The function $\beta\mapsto v_i^*(\beta)$ is concave in $\beta$.
\end{fact}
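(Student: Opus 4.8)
The plan is to recognize $v_i^*(\beta)$ as the optimal value of a maximization problem that is \emph{linear in the decision variable $\rho$}, with $\beta$ entering only on the right-hand side of the single constraint; the value function of such a problem is automatically concave. Concretely, fix $\beta_1,\beta_2\geq 0$ and $\lambda\in[0,1]$, and write $\beta_\lambda:=\lambda\beta_1+(1-\lambda)\beta_2$. Let $\rho_1,\rho_2:[0,\infty)\to[0,1]$ be measurable functions that are feasible for \eqref{eq:beta_ideal_utility_maximization_problem} at $\beta_1,\beta_2$ respectively and that attain the optima (or, if the supremum is not attained, come within $\epsilon$), so that $\E_{V_i\sim\mathcal F_i}[\rho_j(V_i)]\leq\beta_j$ and $\E_{V_i\sim\mathcal F_i}[V_i\rho_j(V_i)]\geq v_i^*(\beta_j)-\epsilon$ for $j\in\{1,2\}$.

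Then I would take the pointwise convex combination $\rho_\lambda:=\lambda\rho_1+(1-\lambda)\rho_2$. This is measurable (a linear combination of measurable functions) and takes values in $[0,1]$ (a convex combination of points of $[0,1]$), so it is an admissible candidate in \eqref{eq:beta_ideal_utility_maximization_problem}. By linearity of expectation, its constraint value is $\E_{V_i\sim\mathcal F_i}[\rho_\lambda(V_i)]=\lambda\E[\rho_1(V_i)]+(1-\lambda)\E[\rho_2(V_i)]\leq\lambda\beta_1+(1-\lambda)\beta_2=\beta_\lambda$, so $\rho_\lambda$ is feasible for the $\beta_\lambda$-problem; and its objective value is $\E_{V_i\sim\mathcal F_i}[V_i\rho_\lambda(V_i)]=\lambda\E[V_i\rho_1(V_i)]+(1-\lambda)\E[V_i\rho_2(V_i)]\geq\lambda v_i^*(\beta_1)+(1-\lambda)v_i^*(\beta_2)-\epsilon$. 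Hence $v_i^*(\beta_\lambda)\geq\lambda v_i^*(\beta_1)+(1-\lambda)v_i^*(\beta_2)-\epsilon$, and letting $\epsilon\to 0$ gives the concavity inequality.

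There is essentially no obstacle here: the only things to verify are that the feasible set $\{\rho:[0,\infty)\to[0,1]\ \text{measurable}\}$ is convex and that both functionals $\rho\mapsto\E[V_i\rho(V_i)]$ and $\rho\mapsto\E[\rho(V_i)]$ are affine in $\rho$, both of which are immediate from linearity of the integral. As an alternative for readers who prefer the quantile picture, when $\mathcal F_i$ has an absolutely continuous CDF one has $v_i^*(\beta)=\int_{1-\beta}^{1}F_i^{-1}(u)\,du$, whose derivative in $\beta$ is $F_i^{-1}(1-\beta)$, a nonincreasing function of $\beta$; this yields concavity directly, and matches the intuition stated before the Fact. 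The linear-programming argument above is preferable only because it requires no regularity assumption on $\mathcal F_i$.
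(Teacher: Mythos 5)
Your proposal is correct. Note that the paper itself does not prove this Fact: it defers to the cited reference \cite{fikioris2025beyond} and only records the intuition that the marginal utility gained from successive quantiles of the value distribution is decreasing. Your argument supplies a complete, self-contained proof via the standard observation that the value function of a maximization problem that is linear in the decision variable $\rho$, with the parameter $\beta$ appearing only on the right-hand side of the budget constraint, is concave: averaging ($\epsilon$-)optimal solutions $\rho_1,\rho_2$ for $\beta_1,\beta_2$ yields a feasible solution for $\lambda\beta_1+(1-\lambda)\beta_2$ whose objective is at least the corresponding average, and letting $\epsilon\to 0$ gives concavity. The handling of possibly unattained suprema via $\epsilon$-optimal $\rho_j$ is careful and makes the argument work without any regularity assumption on $\mathcal F_i$, whereas your alternative quantile computation $v_i^*(\beta)=\int_{1-\beta}^{1}F_i^{-1}(u)\,du$ (which matches the paper's stated intuition) needs absolute continuity. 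In short: your main route is a different, more elementary and assumption-free argument than the quantile-marginal picture the paper gestures at, and both are valid; there is no gap.
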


The next lemma shows that agents cannot achieve much higher utility than guaranteed in \cref{prop:equilibrium_utility_claim} by deviating.
\begin{lemma} \label{lem:equilibrium_deviating_utility_upper_bound}
Fix an agent $i$. Assume all other agents $j\neq i$ are using a \oneOverNAggressiveStrategy. Then, regardless of the strategy of agent $i$,
\begin{equation*}
    \frac1T\sum_{t=1}^T \mathbb E[U_i[t]] \leq \left(1-\left(1-\frac1n\right)^n\right)v_i^* + O\left(\sqrt{\frac{\log T}{T}}\right).
\end{equation*}
\end{lemma}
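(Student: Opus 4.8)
The plan is to exploit the tuning $\paymentConstant=(n+1)/(1+n(1-1/n)^{n+1})$, which makes \emph{every} round that agent $i$ bids in cost her exactly one token in expectation while yielding only $q:=1-(1-1/n)^n$ expected wins; a token budget of $T/n$ then caps her expected number of (effective) wins at $\approx qT/n$, and the value of those wins is in turn capped by the ideal-utility threshold. Concretely, let $\tau:=\min_{j\ne i}\min\{\tau_j,\tau_j'\}$ and $\bar\tau:=\min\{\tau,\tau_i\}$; by \cref{lem:robustness_stopping_time_lower_bound} (so $\tau_j\le\tau_j'$ for each $j\ne i$), \cref{lem:equilibrium_stopping_time_lower_bound}, and a union bound, $\tau\ge T-O(\sqrt{T\log T})$ with probability $1-O(1/T^2)$, and for every $t\le\tau$ the count $X_t:=\sum_{j\ne i}r_j[t]$ of other bidders is $\Binom(n-1,1/n)$ and independent of $\mathcal G'_{t-1}$, the $\sigma$-algebra generated by $\mathcal H_{t-1}$, $V_i[t]$, and $r_i[t]$ (the opponents bid exactly when their own value lies in their top $1/n$-quantile, still have budget, and the bidding floor is inactive). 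Then \cref{lem:expected_payment_of_agent} and \cref{prop:expectation_of_functions_of_binom} give, for $t\le\bar\tau$, the identities $\E[P_i[t]\mid\mathcal G'_{t-1}]=\paymentConstant\,r_i[t]\,\E[1/(2+X_t)]=r_i[t]$ (this is exactly where the value of $\paymentConstant$ is used), $\E[W_i[t]\mid\mathcal G'_{t-1}]=r_i[t]\,\E[1/(1+X_t)]=q\,r_i[t]$, and $\E[U_i[t]\mid\mathcal G'_{t-1}]=V_i[t]\,\E[W_i[t]\mid\mathcal G'_{t-1}]=q\,V_i[t]\,r_i[t]$.

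Next I would apply Azuma--Hoeffding to the $\mathcal G'_t$-martingales obtained from these centered increments, stopped at the $\mathcal G'$-stopping time $\bar\tau$ (which satisfies $\bar\tau\le\tau_i$). Writing $M:=\sum_{t\le\bar\tau}r_i[t]$, the payment martingale gives $\sum_{t\le\bar\tau}P_i[t]=M\pm O(\sqrt{T\log T})$, and combining with $\sum_{t\le\bar\tau}P_i[t]\le\sum_{t\le\tau_i}P_i[t]\le T/n+\paymentConstant$ forces $M\le T/n+O(\sqrt{T\log T})$; the utility martingale gives $\sum_{t\le\bar\tau}U_i[t]\le q\sum_{t\le\bar\tau}V_i[t]r_i[t]+O(\sqrt{T\log T})$. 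To control $\sum_{t\le\bar\tau}V_i[t]r_i[t]$ I would use the deterministic inequality $Vr\le\theta r+(V-\theta)^+$ (valid for $r\in\{0,1\}$, $V,\theta\ge0$) with $\theta^*\in[0,1]$ the optimal dual threshold of the ideal-utility program \eqref{eq:ideal_utility_maximization_problem}, so that $\tfrac1n\theta^*+\E[(V_i-\theta^*)^+]=v_i^*$:
\[
\sum_{t\le\bar\tau}V_i[t]r_i[t]\ \le\ \theta^*M+\sum_{t=1}^T(V_i[t]-\theta^*)^+\ \le\ \theta^*\Big(\tfrac{T}{n}+O(\sqrt{T\log T})\Big)+T\,\E[(V_i-\theta^*)^+]+O(\sqrt{T\log T})\ =\ T v_i^*+O(\sqrt{T\log T}),
\]
where the middle bound uses ordinary Hoeffding on the i.i.d.\ $[0,1]$-valued sum $\sum_t(V_i[t]-\theta^*)^+$, and the final equality uses $\theta^*\le1$. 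Equivalently, since $M/T\le\tfrac1n+o(1)$, agent $i$ effectively bids with rate at most $\tfrac1n+o(1)$, so her collected value is at most $T\,v_i^*(M/T)\le T v_i^*+o(T)$ by concavity of the $\beta$-ideal utility (\cref{fact:beta_ideal_utility_concavity}).

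Finally I would assemble the pieces: since $U_i[t]=0$ for $t>\tau_i$ and $\tau_i-\bar\tau\le(T-\tau)^+=O(\sqrt{T\log T})$ on the good event, $\tfrac1T\sum_{t=1}^T U_i[t]=\tfrac1T\sum_{t\le\tau_i}U_i[t]\le\tfrac1T\sum_{t\le\bar\tau}U_i[t]+O(\sqrt{\log T/T})\le q v_i^*+O(\sqrt{\log T/T})$ with probability $1-O(1/T^2)$; bounding $U_i[t]\le1$ on the $O(1/T^2)$-probability complement and taking expectations yields $\tfrac1T\sum_t\E[U_i[t]]\le q v_i^*+O(\sqrt{\log T/T})$, which is the claim with $q=1-(1-1/n)^n$. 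I expect the main obstacle to be the middle step: correctly capturing that agent $i$'s \emph{adaptive, value-dependent} choice of which rounds to bid in cannot beat the threshold / ideal-utility bound, and carrying this through martingale concentration under the enlarged filtration $\mathcal G'_t$ so that $\tau$, $\tau_i$ remain stopping times and the opponents' bid counts stay independent of the conditioning; the remaining pieces are bookkeeping on top of the already-established stopping-time lemmas.
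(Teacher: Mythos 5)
Your proposal is correct and follows the paper's overall skeleton — the same stopping times $\tau_j,\tau_j'$, the same appeal to \cref{lem:robustness_stopping_time_lower_bound,lem:equilibrium_stopping_time_lower_bound}, the same use of the tuned $\paymentConstant$ (via \cref{lem:expected_payment_of_agent,prop:expectation_of_functions_of_binom}) so that each bid costs one token in expectation while yielding only $q=1-(1-1/n)^n$ expected wins, and the same Azuma/budget argument forcing $\sum_t r_i[t]\le T/n+O(\sqrt{T\log T})$ — but it diverges in the final ``cap the value of the bid rounds'' step. The paper defines the averaged request rule $\rho_i(v)=\frac1T\sum_t\E[r_i[t]\mid V_i[t]=v]$, checks it is feasible for the $\beta$-ideal-utility program \eqref{eq:beta_ideal_utility_maximization_problem} with $\beta=\frac1n+O(\sqrt{\log T/T})$, and invokes concavity of $\beta\mapsto v_i^*(\beta)$ (\cref{fact:beta_ideal_utility_concavity}); this handles the adaptivity of agent $i$'s bids in expectation via the tower property. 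You instead use the Lagrangian dual of \eqref{eq:ideal_utility_maximization_problem}: the pointwise inequality $V r\le\theta^* r+(V-\theta^*)^+$ together with $\tfrac1n\theta^*+\E[(V_i-\theta^*)^+]=v_i^*$ and plain Hoeffding on the i.i.d.\ sum $\sum_t(V_i[t]-\theta^*)^+$, which yields a pathwise high-probability bound and sidesteps the $\rho_i$ construction entirely; conditioning on the finer filtration containing $V_i[t]$ and $r_i[t]$ is legitimate here since opponents' round-$t$ values are independent of it for $t\le\tau$. Your route is arguably cleaner in how it handles adaptive, value-dependent bidding, at the cost of invoking strong duality for the fractional-knapsack program, which you assert but do not prove: you need the existence of $\theta^*$ with $\tfrac1n\theta^*+\E[(V_i-\theta^*)^+]\le v_i^*$ (weak duality gives only the reverse inequality), which does hold by taking $\theta^*$ to be the appropriate $(1-1/n)$-quantile (or $\theta^*=0$ when the constraint is slack) and should be spelled out in a full write-up. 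Also note that your parenthetical ``equivalently, her collected value is at most $T\,v_i^*(M/T)$ by concavity'' is not rigorous as stated — that inequality is exactly the point where adaptivity must be addressed, and it is what the paper's $\rho_i$ construction (or your threshold argument) is for — so it should be treated as a heuristic remark rather than an alternative proof.
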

\begin{proof}
Let $\tau = \min_{j\neq i}\min\{\tau_j,\tau_j'\}$. Let $\mathcal G_t$ be the $\sigma$-algebra generated by the history $\mathcal H_t$ and the next agent $i$ bid $r_i[t+1]$. At each time $t\leq\tau$,
\begin{align*}
    \E[P_i[t]\mid\mathcal G_{t-1}] & = \paymentConstant\E\left[\frac{r_i[t]}{1 + r_i[t] + \sum_{j\neq i}r_j[t]}\,\middle|\,\mathcal G_{t-1}\right] & \text{[by \cref{lem:expected_payment_of_agent}]}\\
     & = \paymentConstant r_i[t]\cdot \E_{X\sim\Binom(n-1,1/n)}\left[\frac{1}{2+X}\right] & \text{[by the other agents' strategies]}\\
     & = r_i[t]. & \text{[by \cref{prop:expectation_of_functions_of_binom}]}
\end{align*}
and
\begin{align*}
    \E[U_i[t]\mid\mathcal G_{t-1}] & = \E\left[V_i[t]\cdot\frac{r_i[t]}{r_i[t] + \sum_{j\neq i}r_j[t]}\,\middle|\,\mathcal G_{t-1}\right] & \text{[by the random allocation rule]}\\
     & = \E_{X\sim\Binom(n-1,1/n)}\left[\frac{1}{1+X}\right]\cdot \E[V_i[t]r_i[t]] & \text{[by the strategy of agents $j\neq i$]}\\
     & = \left(1-\left(1-\frac1n\right)^n\right)\E[V_i[t]r_i[t]]. & \text{[by \cref{prop:expectation_of_functions_of_binom}]}
\end{align*}
By Azuma-Hoeffding applied to the $\mathcal G_t$-martingale $\sum_{s=1}^{\min\{t,\tau\}}P_i[t] - \sum_{s=1}^{\min\{t,\tau\}} r_i[s]$ with increments bounded by $\paymentConstant$,
\begin{equation}
\label{eq:deviating_agent_payment}
    \sum_{t=1}^\tau P_i[t] \geq \sum_{t=1}^\tau r_i[t] - \paymentConstant\sqrt{T\ln T}
\end{equation}
with probability at least $1-1/T^2$ and to the $\mathcal G_t$-martingale $\sum_{s=1}^{\min\{t,\tau\}}U_i[t] - \min\{t,\tau\}$ with increments bounded by $\bar v$ where $\bar v$ bounds the value distribution $\mathcal F_i$,
\begin{equation}
\label{eq:deviating_agent_utility}
    \sum_{t=1}^\tau U_i[t] \leq \sum_{t=1}^\tau\left(1-\left(1-\frac1n\right)^n\right)\E[V_i[t]r_i[t]] + \bar v\sqrt{T\ln T}
\end{equation}
with probability at least $1-1/T^2$. Consider what happens on the event that \eqref{eq:deviating_agent_payment}, \eqref{eq:deviating_agent_utility}, and the event that $\tau \geq T - O\left(\sqrt{T\log T}\right)$, which has probability at least $1-O(1/T^2)$ by \cref{lem:robustness_stopping_time_lower_bound,lem:equilibrium_stopping_time_lower_bound}. Call this event $E$, which has probability at least $1-O(1/T^2)$ by the union bound.

Using \eqref{eq:deviating_agent_payment} and agent $i$'s budget constraint, we find that
\begin{equation*}
    \sum_{t=1}^\tau r_i[t] \leq \frac{T}{n} + \paymentConstant\sqrt{T\ln T}.
\end{equation*}
It follows from the fact that $\tau\geq T-O\left(\sqrt{T\log T}\right)$ that 
\begin{align}
    \sum_{t=1}^T r_i[t] \leq \frac Tn + O\left(\sqrt{T\log T}\right). \label{eq:deviating_agent_total_requests}
\end{align}
Also, by the fact that $\tau\geq T-O\left(\sqrt{T\log T}\right)$ and \eqref{eq:deviating_agent_utility},
\begin{align}
    \sum_{t=1}^T U_i[t] \leq \left(1-\left(1-\frac1n\right)^n\right)\sum_{t=1}^T\E[V_i[t]r_i[t]] + O\left(\sqrt{T\log T}\right). \label{eq:deviating_agent_total_utility}
\end{align}
Let $\rho_i[t]:[0,\infty)\to[0,1]$ be a measurable function such that $\rho_i[t](V_i[t]) = \E[r_i[t]\mid V_i[t]]$. Let
\begin{equation*}
    \rho_i(v_i) = \frac1T\sum_{t=1}^T \rho_i[t](v_i)
\end{equation*}

Taking expectations, using \eqref{eq:deviating_agent_total_requests} on $E$ and noting that $r_i[t]\leq 1$ on the complement event $E^c$, we obtain
\begin{equation*}
\begin{split}
    \E_{V_i\sim\mathcal F_i}[\rho_i(V_i)] = \frac1T\sum_{t=1}^T \E_{V_i\sim\mathcal F_i}[\rho_i[t](V_i)] = \frac1T\sum_{t=1}^T \E[r_i[t]] \leq \Pr(E^c) + \frac1T\sum_{t=1}^T \E[r_i[t]\pmb1_E]\\
     \leq \Pr(E^c) + \left(\frac1n + O\left(\sqrt{\frac{\log T}{T}}\right)\right)\Pr(E) \leq \frac1n + O\left(\sqrt{\frac{\log T}{T}}\right).
\end{split}
\end{equation*}
Thus, $\rho_i$ is a feasible solution to \eqref{eq:beta_ideal_utility_maximization_problem} with $\beta = \frac1n + O\left(\sqrt{\frac{\log T}{T}}\right)$. Then,
\begin{equation*}
    \E_{V_i\sim\mathcal F_i}[V_i\rho_i(V_i)] \leq v_i^*\left(\frac1n + O\left(\sqrt{\frac{\log T}{T}}\right)\right)\leq v_i^*\left(\frac1n\right) + O\left(\sqrt{\frac{\log T}{T}}\right)
\end{equation*}
where the inequality follows from the concavity of $\beta\mapsto v_i^*(\beta)$. Using the above and \eqref{eq:deviating_agent_total_utility}, we compute
\begin{equation*}
\begin{split}
    \frac1T\sum_{t=1}^T \E[U_i[t]] & \leq \frac1T\sum_{t=1}^T \E[U_i[t]\pmb1_E] + \frac1T\sum_{t=1}^T \E[U_i[t]\pmb1_{E^c}]\\
    & \leq \left(\frac1T\left(1-\left(1-\frac1n\right)^n\right)\sum_{t=1}^T\E[V_i[t]r_i[t]] + O\left(\sqrt{\frac{\log T}{T}}\right)\right)\Pr(E) + \E_{V_i\sim\mathcal F_i}[V_i\pmb1_{E^c}]\\
    & \leq \frac1T\left(1-\left(1-\frac1n\right)^n\right)\sum_{t=1}^T\E[V_i[t]r_i[t]] + O\left(\sqrt{\frac{\log T}{T}}\right)\\
    & = \left(1-\left(1-\frac1n\right)^n\right)\E_{V_i\sim\mathcal F_i}[V_i\rho_i(V_i)] + O\left(\sqrt{\frac{\log T}{T}}\right)\\
    & \leq v_i^*\left(\frac1n\right) + O\left(\sqrt{\frac{\log T}{T}}\right).
\end{split}
\end{equation*}
using the fact that the value distribution is bounded for the third inequality.

\end{proof}


We have essentially completed the proof of \cref{thm:equilibrium_mechanism}. \cref{prop:robustness_in_equilibrium_mechanism} gives the robustness claim. \cref{prop:equilibrium_utility_claim} gives the utility at equilibrium lower bound, and \cref{lem:equilibrium_deviating_utility_upper_bound} shows that each agent can gain at most an additive $O\left(\sqrt{\frac{\log T}{T}}\right)$ of utility from deviating.
\section{Deferred Proofs from Section \ref{sec:asymmetric_fair_shares}} \label{sec:app:asymmetric_fair_shares_proofs}

In this section, we prove \cref{thm:asymmetric_fair_shares}, which we restate below for convenience.

\asymmetricFairSharesTheorem*

The proof goes similarly to the proof of \cref{thm:equilibrium_mechanism} given in \cref{sec:app:equilibrium_mechanism_proofs}. We shall use hat notation to denote quantities in the simulated mechanism $\hat{\mathcal M}$; we let $\hat B_{(i,i')}[t]$, $\hat P_{(i,i')}[t]$, $\hat r_{(i,i')}[t]$, $\hat W_{(i,i')}[t]$ be the budgets, payments, request indicators, and win indicators of the simulated agent $(i,i')$ in $\hat{\mathcal M}$, respectively.

Throughout this section, we assume we use the parameters
\begin{equation*}
    \paymentConstant = \frac{m+1}{1+m(1-1/m)^{m+1}}
\end{equation*}
and
\begin{equation*}
    \epsilon = \sqrt{T\ln T}.
\end{equation*}

Let $\hat\tau_{(i,i')}$ be the time at which simulated agent $(i,i')$ runs out of budget in $\hat{\mathcal M}$, and $T$ if this never happens:
\begin{equation*}
    \tau_{(i,i')} = \begin{cases}\min\left\{t:\sum_{s=1}^t \hat P_{(i,i')}[s] \geq T/m\right\} & \text{if $\sum_{s=1}^T \hat P_{(i,i')}[s] \geq T/m$}\\T & \text{otherwise}\end{cases}.
\end{equation*}
Let $\tau_i = \min_{i'}\tau_{(i,i')}$.

Define
\begin{equation*}
    \tau_i' = \begin{cases}\min\left\{t:\sum_{s=1}^t r_i[s] \leq \left(\asymmetricRate{i}\right)t - \epsilon + 1\right\} & \text{if $\exists t$: $\sum_{s=1}^t r_i[s] \leq \left(\asymmetricRate{i}\right)t - \epsilon + 1$}\\T & \text{otherwise}\end{cases}.
\end{equation*}
We can see that with this definition of $\tau_i'$, if $t\leq \tau_i'$, then the minimum bidding constraint does not affect player $i$ at time $t$, in that it guarantees that $\sum_{s=1}^t r_i[s] > \left(\asymmetricRate{i}\right)t - \epsilon\cdot t$. Indeed, if $t \leq \tau_i'$, then
\begin{equation*}
    \sum_{s=1}^t r_i[s] \geq \sum_{s=1}^{t-1}r_i[s] > \left(\asymmetricRate{i}\right)(t-1)-\epsilon + 1 \geq \left(\asymmetricRate{i}\right)-\epsilon.
\end{equation*}

Let $\mathcal H_t$ denote the history up to and including time $t$. Notice that the $\tau_i$ and $\tau_i'$ are stopping times with respect to the filtration $\mathcal H_t$.

The below lemma uses standard probability concentration bounds to show that the number of times that an agent can bid after $\tau_i$, the time at which at least one of the simulated agents $(i,i')$ runs out of budget, is sublinear in $T$. The lemma will help us reason about agent $i$'s bids only for $t\leq \tau_i$.
\begin{lemma} \label{lem:not_many_requests_after_running_out_of_budget}
For any fixed agent $i$, regardless of any agent's strategy,
\begin{equation} \label{eq:not_many_requests_after_running_out_of_budget}
    \sum_{t=\tau_i+1}^T r_i[t] \leq O\left(\sqrt{T\log T}\right).
\end{equation}
\end{lemma}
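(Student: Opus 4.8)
The plan is to reduce the statement to a lower bound on the stopping time $\tau_i$, and then establish that bound with a martingale argument in the spirit of \cref{lem:equilibrium_stopping_time_lower_bound}.

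Since each $r_i[t]\in\{0,1\}$, we have the deterministic bound $\sum_{t=\tau_i+1}^{T}r_i[t]\le T-\tau_i$, so it suffices to show that with probability $1-O(1/T^2)$ no simulated agent $(i,i')$ controlled by $i$ exhausts its budget $T/m$ before round $T-O(\sqrt{T\log T})$, i.e.\ $\tau_{(i,i')}\ge T-O(\sqrt{T\log T})$ for every $i'$.

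Fix a simulated agent $(i,i')$ and let $\mathcal H_t$ be the history through round $t$, so that $\tau_{(i,i')}$ is an $\mathcal H_t$-stopping time. The core step is to bound $\E[\hat P_{(i,i')}[t]\mid\mathcal H_{t-1}]$ for $t\le\tau_{(i,i')}$. Applying \cref{lem:expected_payment_of_agent} inside $\hat{\mathcal M}$, conditionally on the realized simulated requests we have $\hat P_{(i,i')}[t]=\paymentConstant\,\hat r_{(i,i')}[t]\big/\big(1+\sum_{(j,j')\neq(i,i')}\hat r_{(j,j')}[t]\big)$, and conditionally on $\mathcal H_{t-1}$ the indicator $\hat r_{(i,i')}[t]$ is Bernoulli with the per-round request probability of $(i,i')$ (equal to $1/m$ when $i$ plays a \asymmetricStrategy{i}), independent of the other simulated requests. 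Because the minimum-bidding rule forces every other real agent $j$ to request whenever it falls $\epsilon$ behind the line $(\asymmetricRate{j})t$, the competition $\sum_{(j,j')\neq(i,i')}\hat r_{(j,j')}[t]$ at least matches, up to the clumping slack, its idealized counterpart $X\sim\Binom(m-1,1/m)$ obtained when every simulated agent requests independently with probability $1/m$; since $x\mapsto 1/(1+x)$ is decreasing, this gives, up to lower-order terms, $\E[\hat P_{(i,i')}[t]\mid\mathcal H_{t-1}]\le\tfrac{\paymentConstant}{m}\,\E[1/(2+X)]=\tfrac1m$, using \cref{prop:expectation_of_functions_of_binom} and the calibration $\paymentConstant=(m+1)/(1+m(1-1/m)^{m+1})$. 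Feeding this into the Azuma--Hoeffding inequality for the $\mathcal H_t$-martingale $\sum_{s=1}^{\min\{t,\tau_{(i,i')}\}}\big(\hat P_{(i,i')}[s]-\E[\hat P_{(i,i')}[s]\mid\mathcal H_{s-1}]\big)$, whose increments are bounded by $\paymentConstant$, yields $\sum_{s\le\tau_{(i,i')}}\hat P_{(i,i')}[s]\le\tau_{(i,i')}/m+O(\sqrt{T\log T})$ with probability $1-O(1/T^2)$; if $(i,i')$ has run out of budget then the left-hand side is at least $T/m$, forcing $\tau_{(i,i')}\ge T-O(\sqrt{T\log T})$. A union bound over the $k_i\le m$ simulated agents controlled by $i$ then gives $\tau_i\ge T-O(\sqrt{T\log T})$, which is what we needed.

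I expect the delicate point to be the competition-control step. An adversary running the other agents may clump agent $j$'s forced requests within the $\epsilon=\sqrt{T\log T}$ slack, creating short low-competition windows in which, by convexity of $x\mapsto 1/(2+x)$, $(i,i')$'s expected payment exceeds $1/m$. One has to show these windows only negligibly accelerate the exhaustion of $(i,i')$ -- e.g.\ by using that the per-round check keeps every partial sum $\sum_{s\le t}r_j[s]$ within $\epsilon$ of $(\asymmetricRate{j})t$ at all times, and that the windows of different agents, whose lengths depend on the distinct rates $\asymmetricRate{j}$, cannot all be aligned -- so that the aggregate excess payment charged to $(i,i')$ is $o(T/m)$, which is exactly what keeps $\tau_{(i,i')}$ within $o(T)$ of $T$.
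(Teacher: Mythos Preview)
Your reduction to showing $\tau_i\ge T-O(\sqrt{T\log T})$ is the wrong target: that bound is simply false under the hypotheses of the lemma. The statement holds ``regardless of any agent's strategy,'' which includes agent $i$'s own strategy, and if agent $i$ over-requests then her simulated agents can exhaust their budgets long before $T$. Concretely, take $m=2$, $k_i=k_j=1$, let agent $i$ request every round, and let agent $j$ request only when forced. Then $\hat r_{(i,1)}[t]=1$ whenever $r_i[t]=1$ (since $\mathcal D_{1,2}$ is a point mass at $1$), so simulated agent $(i,1)$ bids every round while it has budget; a direct computation shows its expected per-round payment is $\Theta(1)$, so $\tau_i=\tau_{(i,1)}=\Theta(T)$ with $T-\tau_i=\Theta(T)$. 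Your step ``$\hat r_{(i,i')}[t]$ is Bernoulli with probability $1/m$'' already flags the issue parenthetically---it is $1/m$ only when $i$ plays the \asymmetricStrategy{i}---but you then proceed as if that rate held unconditionally, and the calibration $\paymentConstant\,\E[1/(2+X)]=1$ collapses once $(i,i')$ bids more often than $1/m$.

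The paper's argument is structurally different and never tries to lower-bound $\tau_i$. Instead it exploits the \emph{exchangeability} of the $k_i$ simulated agents that $i$ controls: conditioned on $r_i[t]$ and the other agents' requests, $\E[\hat P_{(i,i')}[t]\mid\mathcal G_{t-1}]$ is the same for all $i'$, so Azuma gives $\bigl|\sum_{t\le\tau_i}\hat P_{(i,i')}[t]-\sum_{t\le\tau_i}\hat P_{(i,i'')}[t]\bigr|\le O(\sqrt{T\log T})$ for every pair. Since one simulated agent has spent $T/m$ at time $\tau_i$, every other one has spent at least $T/m-O(\sqrt{T\log T})$, i.e.\ has only $O(\sqrt{T\log T})$ budget left. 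The second ingredient is a \emph{lower} bound on payments: whenever $r_i[t]=1$ and $(i,i')$ still has budget, its expected payment is at least a constant $c>0$ (because $\hat V_{(i,i')}=1$ with positive probability and the winner always pays at least $\paymentConstant/(m+1)$). Combining, the residual budget of the last surviving simulated agent can fund at most $O(\sqrt{T\log T})$ additional rounds with $r_i[t]=1$. Your competition-control worry about clumping of forced bids is orthogonal to this; the paper's argument never needs to lower-bound the competition faced by $(i,i')$.
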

\begin{proof}
Let $\mathcal G_t$ denote the $\sigma$-algebra generated by $\mathcal H_t$ and the requests in the next time, $r_j[t+1]$ for all $j\in[n]$. For any $t\leq \tau_i$ and for any $i'$ and $i''$, $\E[\hat P_{(i,i')}[t]\mid\mathcal G_{t-1}] = \E[\hat P_{(i,i'')}[t]\mid\mathcal G_{t-1}]$ by symmetry. By the Azuma-Hoeffding inequality applied to the $\mathcal G_t$-martingale $\sum_{s=1}^{\min\{t,\tau\}} \left(\hat P_{(i,i')}[s] - \hat P_{(i,i'')}[s]\right)$ with increments bounded by $\paymentConstant$, for any fixed $i'$ and $i''$,
\begin{equation} \label{eq:simulated_agent_payments_not_too_far}
    \left|\sum_{t=1}^{\tau_i} \hat P_{(i,i')}[t] - \sum_{t=1}^{\tau_i} \hat P_{(i,i'')}[t]\right|\leq \paymentConstant\sqrt{T\ln T}
\end{equation}
with probability at least $1-2/T^2$. For any fixed $i'$ and $t$, $\E[\hat P_{(i,i')}[t]\mid\mathcal G_{t-1}] \geq c_{i'}r_i[t]$ for some $c_{i'}$ not depending on $t$ or $T$, because if $r_i[t] = 1$, then the expected payment of the simulated agent $(i,i')$ is positive by the mechanism. By the Azuma-Hoeffding inequality applied to the $\mathcal G_t$-submartingale $\sum_{s=\tau_i+1}^t\hat P_{(i,i')}[s] - c_{i'}\sum_{s=\tau_i+1}^tr_i[s]$ with increments bounded by $\paymentConstant$,
\begin{equation} \label{eq:requests_when_one_simualted_agent_no_budget_small}
    \sum_{t=\tau_i+1}^T\hat P_{(i,i')}[t] \geq c_{i'}\sum_{t=\tau_i+1}^Tr_i[t] - \paymentConstant\sqrt{T\ln T}
\end{equation}
with probability at least $1-1/T^2$.

Consider what happens on the event that \eqref{eq:simulated_agent_payments_not_too_far} happens for all $i\neq i'$ and \eqref{eq:requests_when_one_simualted_agent_no_budget_small} happens for all $i'$, which has probability at least $1 - O(1/T^2)$ by the union bound. Let $\underline i' = \argmin_{i'}\tau_{(i,i')}$ and $\bar i' = \argmax_{i'} \tau_{(i,i')}$. By definition of $\tau_i$, $\sum_{t=1}^{\tau_i}\hat P_{(i,\underline i')}[t] \geq T/m$, and by \eqref{eq:simulated_agent_payments_not_too_far},
\begin{equation*}
    \sum_{t=1}^{\tau_i}\hat P_{(i,\bar i')}[t] \geq \sum_{t=1}^{\tau_i} \hat P_{(i,\underline i')}[t] - \paymentConstant\sqrt{T\ln T}\geq \frac Tm - \paymentConstant\sqrt{T\ln T}.
\end{equation*}
By the budget constraint enforcement, $\sum_{t=1}^T \hat P_{(i, \bar i')}[t] \leq T/m + \paymentConstant$, so combining the above with \eqref{eq:requests_when_one_simualted_agent_no_budget_small}, we have
\begin{equation*}
\begin{split}
    \sum_{t=\tau_i+1}^T r_i[t] & \leq \frac{1}{c_{\bar i'}}\left(\sum_{t=\tau_i+1}^T \hat P_{(i,\bar i')}[t] + \paymentConstant\sqrt{T\ln T}\right)\\
    & = \frac{1}{c_{\bar i'}}\left(\sum_{t=1}^T \hat P_{(i,\bar i')}[t] - \sum_{t=1}^{\tau_i}\hat P_{(i,\bar i')}[t] + \paymentConstant\sqrt{T\ln T}\right)\\
    & \leq \frac{1}{c_{\bar i'}}\left(\paymentConstant + 2\paymentConstant\sqrt{T\ln T}\right) = O\left(\sqrt{T\log T}\right).
\end{split}
\end{equation*}
\end{proof}

The below lemma is analogous to \cref{lem:robustness_stopping_time_lower_bound}.
\begin{lemma} \label{lem:asymmetric_robustness_stopping_time_lower_bound}
If agent $i$ uses a \asymmetricStrategy{i}, regardless of the strategies of the other agents, $\tau_i' = T$ with probability at least $1-O(1/T^2)$.
\end{lemma}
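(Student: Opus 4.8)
The plan is to mirror the argument of \cref{lem:robustness_stopping_time_lower_bound}, exploiting the fact that in \allPayMechanismWithasymmetricFairShares{} the real agent $i$ carries no budget of her own (budgets live inside $\hat{\mathcal M}$ and constrain only the simulated agents), so she is never barred from requesting. Write $\beta_i := \asymmetricRate{i}$ and let $\xi_t$ be the indicator that agent $i$'s $\beta_i$-aggressive strategy calls on her to request at time $t$, i.e.\ that $V_i[t]$ lies in the top $\beta_i$-quantile of $\mathcal F_i$ (with the prescribed randomized tie-breaking at the cutoff, so that $\E[\xi_t]=\beta_i$ exactly). By the i.i.d.\ value assumption the $\xi_t$ are i.i.d.\ $\Bern(\beta_i)$ and independent of $\mathcal H_{t-1}$. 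The key structural point is that $r_i[t]\ge \xi_t$ for \emph{every} $t$: under the aggressive strategy she requests whenever $\xi_t=1$, the minimum-bidding rule can only \emph{add} requests, and nothing in the mechanism suppresses a request of the real agent. Hence $\E[r_i[t]\mid\mathcal H_{t-1}]\ge\beta_i$.

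Next I would set $M_t:=\sum_{s=1}^t(r_i[s]-\beta_i)$, which is an $\mathcal H_t$-submartingale, and pass to its stopped version $N_t:=M_{\min\{t,\tau_i'\}}$ (recall $\tau_i'$ is an $\mathcal H_t$-stopping time); $N_t$ is again a submartingale with increments bounded by $1$ in absolute value. Azuma--Hoeffding for submartingales, together with $\E[N_T]\ge N_0=0$, gives
\begin{equation*}
\Pr\!\left(N_T\le -(\epsilon-1)\right)\le \exp\!\left(-\frac{2(\epsilon-1)^2}{T}\right)=O\!\left(\frac{1}{T^2}\right)
\end{equation*}
for $\epsilon=\sqrt{T\ln T}$. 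Finally, on the event $\{\tau_i'<T\}$ the definition of $\tau_i'$ forces $\sum_{s=1}^{\tau_i'}r_i[s]\le\beta_i\tau_i'-\epsilon+1$, i.e.\ $N_T=M_{\tau_i'}\le-(\epsilon-1)$; therefore $\Pr(\tau_i'<T)\le\Pr(N_T\le-(\epsilon-1))=O(1/T^2)$, which is precisely the claim.

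I do not expect a genuine obstacle here; the two points that need care are (i) establishing $r_i[t]\ge\xi_t$ \emph{unconditionally} — in the symmetric mechanism this fails after the budget-exhaustion time, which is why \cref{lem:robustness_stopping_time_lower_bound} concludes only $\tau_i\le\tau_i'$, whereas the absence of a real-agent budget here upgrades it to the stronger statement $\tau_i'=T$ — and (ii) keeping the concentration bound in its range form $\exp(-2(\epsilon-1)^2/T)$ rather than union-bounding over $t\in[T]$, which would cost a factor of $T$ and only yield $O(1/T)$. In particular, \cref{lem:not_many_requests_after_running_out_of_budget} is not needed for this lemma; it will instead be used later, when agent $i$ may deviate from the aggressive strategy and the bound $r_i[t]\ge\xi_t$ no longer holds.
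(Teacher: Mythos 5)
Your proof is correct and follows essentially the same route as the paper: the paper also applies Azuma--Hoeffding to the martingale $\sum_{s=1}^{\min\{t,\tau_i'\}}r_i[s]-\left(\asymmetricRate{i}\right)\min\{t,\tau_i'\}$ (using that for $t\le\tau_i'$ the bidding minimum does not bind, so $\E[r_i[t]\mid\mathcal H_{t-1}]$ equals the rate exactly) and concludes that the defining event of $\tau_i'<T$ has probability $O(1/T^2)$. Your only variation is working with a submartingale via the unconditional bound $r_i[t]\ge\xi_t$ rather than the paper's exact conditional expectation, and your remark that the absence of a real-agent budget is what upgrades the conclusion to $\tau_i'=T$ (versus $\tau_i\le\tau_i'$ in the symmetric lemma) matches the paper's reasoning.
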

\begin{proof}
Assume agent $i$ uses a \asymmetricStrategy{i}. For $t\leq \tau_i'$, $\E[r_i[t]\mid \mathcal H_t] = \asymmetricRate{i}$, so by the Azuma-Hoeffding inequality applied to the martingale $\sum_{s=1}^{\min\{t,\tau_i'\}}r_i[s] - \left(\asymmetricRate{i}\right)\min\{t,\tau_i'\}$,
\begin{equation*}
    \Pr\left(\sum_{t=1}^{\tau_i'}r_i[t] \leq \left(\asymmetricRate{i}\right)\tau_i' - \epsilon + 1 \right) \leq \exp\left(-\frac{2(\epsilon-1)^2}{T}\right) \leq O\left(\frac{1}{T^2}\right).
\end{equation*}
If the above event does not happen, then by the definition of $\tau_i'$, it must be that $\tau_i'=T$, establishing the first claim of the lemma.  
\end{proof}

The below lemma shows that if agent $i$ is using a $\asymmetricRate{i}$ strategy, then the simulated agents are $(i,i')$ bidding i.i.d. with probability $1/m$ (subject to the minimum bidding constraint and budget constraint).
\begin{lemma} \label{lem:simulated_distribution}
Fix an agent $i$, and suppose
\begin{equation*}
    \E[r_i[t]\mid\mathcal H_{t-1}] = \asymmetricRate{i}.
\end{equation*}
Then, conditioned on $\mathcal H_{t-1}$, the $\hat r_{(i,i')}[t]$ are independent across $i'$ with expectation
\begin{equation*}
    \E[\hat r_{(i,i')}[t]\mid\mathcal H_{t-1}] = \frac1m.
\end{equation*}
\end{lemma}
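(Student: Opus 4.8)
The plan is to recognize the mechanism's round-$t$ sampling step as nothing more than an instance of the law of total probability. Fix $t$ and condition throughout on $\mathcal H_{t-1}$. Let $Z_1,\dots,Z_{k_i}$ be i.i.d.\ $\Bern(1/m)$ and let $A = \{\exists\, i'\in[k_i]: Z_{i'}=1\}$, so that $\Pr(A) = 1-(1-1/m)^{k_i} = \asymmetricRate{i}$. By hypothesis this equals $\E[r_i[t]\mid\mathcal H_{t-1}]$, and since $r_i[t]\in\{0,1\}$ this conditional expectation is just $\Pr(r_i[t]=1\mid\mathcal H_{t-1})$. So the first step is simply to record $\Pr(r_i[t]=1\mid\mathcal H_{t-1}) = \Pr(A)$.

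Next I would write out the generative description of $(\hat r_{(i,i')}[t])_{i'\in[k_i]}$ implied by \allPayMechanismWithasymmetricFairShares: it equals $(0,\dots,0)$ when $r_i[t]=0$, and a fresh draw from $\mathcal D_{k_i,m}$ — i.e.\ a draw from the law of $(Z_1,\dots,Z_{k_i})$ conditioned on $A$ — when $r_i[t]=1$. The key observation is that this fresh draw uses randomness internal to round $t$ and is therefore independent of $\mathcal H_{t-1}$ (and, given $r_i[t]=1$, of everything else). Consequently, conditioned on $\mathcal H_{t-1}$, the vector $(\hat r_{(i,i')}[t])_{i'}$ is the mixture placing weight $\Pr(A)$ on $\mathrm{Law}\big((Z_{i'})_{i'}\mid A\big)$ and weight $1-\Pr(A) = \Pr(A^c)$ on the point mass at $(0,\dots,0)$. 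But conditioned on $A^c$ the vector $(Z_{i'})_{i'}$ is almost surely $(0,\dots,0)$, so the point mass at $(0,\dots,0)$ is precisely $\mathrm{Law}\big((Z_{i'})_{i'}\mid A^c\big)$. By the law of total probability the mixture therefore equals the unconditional law of $(Z_1,\dots,Z_{k_i})$, namely that of $k_i$ i.i.d.\ $\Bern(1/m)$ random variables.

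From this identification the conclusion is immediate: conditioned on $\mathcal H_{t-1}$, the $\hat r_{(i,i')}[t]$ are i.i.d.\ $\Bern(1/m)$, hence independent across $i'$ with $\E[\hat r_{(i,i')}[t]\mid\mathcal H_{t-1}] = 1/m$. I do not expect a genuine obstacle here; the only point requiring a word of care is the independence of the within-round sampling of $(\hat V_{(i,i')}[t])_{i'}$ from the conditioning $\sigma$-algebra $\mathcal H_{t-1}$, which holds by construction of the mechanism.
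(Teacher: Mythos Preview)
Your proposal is correct and takes essentially the same approach as the paper: both arguments use the law of total probability to recombine the conditional-on-bidding distribution $\mathcal D_{k_i,m}$ (weight $\Pr(A)$) with the point mass at zero (weight $\Pr(A^c)$) into the unconditional i.i.d.\ $\Bern(1/m)$ law. The paper carries this out by checking that $\E[\prod_{i'\in I'}\hat r_{(i,i')}[t]\mid\mathcal H_{t-1}]=(1/m)^{|I'|}$ for every nonempty $I'$, whereas you phrase it directly at the level of distributions, but the underlying idea is identical.
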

\begin{proof}

Let $\hat Z_1, \dots, \hat Z_{k_i}$ be i.i.d. $\Bern(1/m)$ random variables. Let $Z = \hat Z_1 + \dots + \hat Z_{k_i}$. The distribution of $(\hat r_{(i,i')})_{i'=1}^{k_i}$ conditioned on $\mathcal H_{t-1}$ and $r_i[t]=1$ is $\mathcal D_{k_i,m}$, which is the distribution of $(\hat Z_{i'})_{i'=1}^{k_i}$ conditioned on $Z\geq 1$ by definition of $\mathcal D_{k_i,m}$.

Consider any $\emptyset\subsetneq I'\subseteq\{1,2,\dots, k_i\}$. Then,
\begin{align*}
    \E\left[\prod_{i'\in I'}\hat r_{(i,i')}[t]\,\middle|\,\mathcal H_{t-1}\right]& = \E\left[\prod_{i'\in I'}\hat r_{(i,i')}[t]r_i[t]\,\middle|\,\mathcal H_{t-1}\right] & \text{[$r_i[t]=0\implies \hat r_{(i,i')}[t]=0$]}\\
    & = \E\left[\prod_{i'\in I'}\hat r_{(i,i')}[t]\,\middle|\,\mathcal H_{t-1}, r_i[t]=1\right]\E[r_i[t]\mid \mathcal H_{t-1}]\\
    & = \E\left[\prod_{i'\in I'}\hat Z_{i'}\,\middle|\,Z\geq 1\right]\E[r_i[t]\mid \mathcal H_{t-1}] & \text{[by the previous paragraph]}\\
    & = \frac{\E\left[\prod_{i'\in I'}\hat Z_{i'}\right]}{\Pr(Z\geq 1)}\cdot \E[r_i[t]\mid \mathcal H_{t-1}] & \text{[$Z=0\implies \hat Z_{i'}=0$]}\\
    & = \E\left[\prod_{i'\in I'}\hat Z_{i'}\right] = \left(\frac1m\right)^{|I'|}.
\end{align*}
To get the last line, we use the fact that $\Pr(Z\geq 1) = \asymmetricRate{i}$, which is equal to $\E[r_i[t]\mid\mathcal H_{t-1}]$ by assumption. This shows that the $\hat r_{(i,i')}[t]$ are i.i.d. $\Bern(1/m)$ random variables when conditioned on $\mathcal H_{t-1}$, as desired.
\end{proof}

The below lemma is analogous to \cref{lem:equilibrium_stopping_time_lower_bound}.
\begin{lemma} \label{lem:asymmetric_equilibrium_stopping_time_lower_bound}

If all agents $j\neq i$ use a \asymmetricStrategy{j}, regardless of the strategy of agent $i$, $\min_{j\neq i}\tau_j \geq T - O\left(\sqrt{T\log T}\right)$ with probability at least $1-O(1/T^2)$.
\end{lemma}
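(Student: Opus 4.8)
The plan is to follow the proof of \cref{lem:equilibrium_stopping_time_lower_bound} almost verbatim, working inside the simulated mechanism $\hat{\mathcal M}$ and using \cref{lem:simulated_distribution} to reduce to the symmetric situation. Assume every $j\neq i$ plays a $\left(\asymmetricRate{j}\right)$-aggressive strategy. By \cref{lem:asymmetric_robustness_stopping_time_lower_bound}, $\tau_j' = T$ for all $j\neq i$ with probability $1-O(1/T^2)$; on this event the minimum-bidding enforcement never binds for agents $j\neq i$, so $\E[r_j[s]\mid\mathcal H_{s-1}] = \asymmetricRate{j}$ for all $s$, and \cref{lem:simulated_distribution} gives that all $m-k_i$ simulated agents $(j,j')$ with $j\neq i$ request i.i.d.\ $\Bern(1/m)$ (conditioned on $\mathcal H_{s-1}$) for every $s\le\tau$. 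Fix one such simulated agent $(j_0,j_0')$, let $\mathcal G_t$ be the $\sigma$-algebra generated by $\mathcal H_t$ together with the real request $r_i[t+1]$, and recall that $\tau = \min_{j\neq i}\tau_j$ is exactly the first time any simulated agent controlled by some $j\neq i$ runs out of budget, a $\mathcal G_t$-stopping time.

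Second, I would compute, for $s\le\tau$, $\E[\hat P_{(j_0,j_0')}[s]\mid\mathcal G_{s-1}] = \frac{\paymentConstant}{m}\,\E\!\left[\frac{1}{2+Y+b_s}\,\middle|\,\mathcal G_{s-1}\right]$ via \cref{lem:expected_payment_of_agent}, where $Y\sim\Binom(m-k_i-1,1/m)$ counts the remaining $\neq i$-controlled simulated bidders and $b_s$ counts the simulated bidders controlled by $i$ in round $s$. On rounds with $r_i[s]=0$ we have $b_s=0$; on rounds with $r_i[s]=1$ and all of $i$'s simulated agents still solvent, $b_s$ is the count of a $\mathcal D_{k_i,m}$-vector, and since $Y+b_s$ conditioned on $b_s\ge1$ together with the identity $\E[1/(2+W)]=1/\paymentConstant$ for $W\sim\Binom(m-1,1/m)$ (\cref{prop:expectation_of_functions_of_binom}) makes the per-round conditional expectation strictly smaller on $r_i[s]=1$ rounds. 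The key point, exactly as in \cref{lem:equilibrium_stopping_time_lower_bound}, is that the minimum-bidding rule in \cref{alg:principal_bids_for_you_all_pay_with_asymmetric_fair_shares} acts on the real request $r_i[\cdot]$ with no budget override, so $\#\{s\le\tau:r_i[s]=1\}\ge(\asymmetricRate{i})\tau-\epsilon$ holds unconditionally, and by \cref{lem:not_many_requests_after_running_out_of_budget} agent $i$ makes only $O(\sqrt{T\log T})$ further requests after $\tau_i$, so the ``bad'' rounds on which $b_s$ is deflated below its $\mathcal D_{k_i,m}$-count carry only $O(\sqrt{T\log T})$ requests and contribute only $O(\sqrt{T\log T})$. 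Plugging in and using the choice $\paymentConstant=(m+1)/(1+m(1-1/m)^{m+1})$, the coefficient of $\tau$ collapses to exactly $1/m$, giving $\sum_{s\le\tau}\E[\hat P_{(j_0,j_0')}[s]\mid\mathcal G_{s-1}]\le\frac{\tau}{m}+O(\sqrt{T\log T})$.

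Third, apply Azuma--Hoeffding to the $\mathcal G_t$-martingale $\sum_{s\le\min\{t,\tau\}}\big(\hat P_{(j_0,j_0')}[s]-\E[\hat P_{(j_0,j_0')}[s]\mid\mathcal G_{s-1}]\big)$, whose increments are bounded by $\paymentConstant$, to get $\sum_{s\le\tau}\hat P_{(j_0,j_0')}[s]\le\frac{\tau}{m}+O(\sqrt{T\log T})$ with probability $1-O(1/T^2)$. If $\tau=T$ there is nothing to prove; otherwise $\tau=\tau_{(j_0,j_0')}$ for some $j_0\neq i$, whence $\sum_{s\le\tau}\hat P_{(j_0,j_0')}[s]\ge T/m$ by the budget-exhaustion, and combining the two inequalities yields $\tau\ge T-O(\sqrt{T\log T})$. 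A union bound over the $O(m^2)=O(1)$ high-probability events used above then gives the lemma.

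The main obstacle is the middle step, and specifically the bookkeeping for a \emph{deviating} agent $i$: one must control the rounds in $(\tau_i,\tau]$ on which $i$'s simulated agents have partially exhausted their budgets, so that either $b_s$ or a forced $r_i[s]=1$ breaks the clean i.i.d.\ picture. \cref{lem:not_many_requests_after_running_out_of_budget} is the new ingredient relative to the symmetric proof that limits how many such rounds carry a request of $i$, and the precise value of $\paymentConstant$ is exactly what makes the per-round expected payment of a $\neq i$-controlled simulated agent average out to $1/m$ irrespective of $i$'s (possibly adversarial, possibly budget-exhausting) behaviour, reproducing the delicate balance already present in the proof of \cref{lem:equilibrium_stopping_time_lower_bound}.
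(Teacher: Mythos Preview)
Your proposal is correct and follows essentially the same approach as the paper: you work inside the simulated mechanism, use \cref{lem:simulated_distribution} to make the $j\neq i$-controlled simulated agents i.i.d.\ $\Bern(1/m)$, compute the conditional expected payment of a fixed $(j_0,j_0')$ splitting on $r_i[s]\in\{0,1\}$, invoke the minimum-bidding rule to lower-bound the number of $r_i[s]=1$ rounds and \cref{lem:not_many_requests_after_running_out_of_budget} to absorb the $(\tau_i,\tau]$ rounds, and then use the identity \eqref{eq:asymmetric_probabilistic_payment_at_equilibrium} (equivalently $\E[1/(2+W)]=1/\paymentConstant$) together with Azuma--Hoeffding exactly as the paper does. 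The only cosmetic slip is the reuse of the symbol $(j_0,j_0')$ when you say ``otherwise $\tau=\tau_{(j_0,j_0')}$ for some $j_0\neq i$'', but your final union bound over all simulated agents handles this correctly.
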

\begin{proof}
Assume agents $j\neq i$ use a \asymmetricStrategy{j}{} and that agent $i$'s strategy is arbitrary. Let $\mathcal G_t$ be the $\sigma$-algebra generated by $\mathcal H_t$ and $r_i[t+1]$. Let $\tau = \min_{j\neq i}\tau_i$ and $\tau'=\min_{j\neq i}\tau_j'$. We must lower bound $\tau$. For any $t\leq \min\{\tau_i,\tau,\tau'\}$, and for any $(j,j')\in\hat N$ with $j\neq i$,
\begin{equation} \label{eq:expected_payment_simulated_agent_equilibrium}
\begin{split}
    \E[\hat P_{(j,j')}\mid \mathcal G_{t-1}] & = \paymentConstant\E\left[\frac{\hat r_{(j,j')}[t]}{1  + \hat r_{(j,j')}[t]+ \sum_{i'=1}^{k_i}\hat r_{(i,i')}[t] + \sum_{j''\neq j'}\hat r_{(j,j'')}[t] + \sum_{k\neq i,j}^{k_k}\hat r_{(k,k')}[t]}\right]\\
    & = \frac{\paymentConstant}{m}\cdot\begin{cases}\E\left[\frac{1}{2 + \hat X + \hat Y}\right] & \text{if $r_i[t]=1$}\\\E\left[\frac{1}{2 + \hat Y}\right] & \text{if $r_i[t]=0$}\end{cases}.
\end{split}
\end{equation}
where $\hat X \overset{d}= \sum_{i'=1}^{i_k}\hat V_{(i,i')}$ where $(\hat V_{(i,i')})\sim\mathcal D_{k_i,m}$ and $\hat Y\sim\Binom(m-k_i-1,1/m)$ are independent. The first line follows directly from \cref{lem:expected_payment_of_agent}. We now argue the second line. The agent $(j,j')$ bids with probability $1/m$ by \cref{lem:simulated_distribution}. Also by \cref{lem:simulated_distribution}, $\hat Y$ is equal in distribution to $\sum_{j''\neq j'}\hat r_{(j,j'')}[t] + \sum_{k\neq i,j}^{k_k}\hat r_{(k,k')}[t]$ conditioned on $t\leq\min\{\tau,\tau'\}$. By the mechanism, $\hat X$ simulated agents $(i,i')$ bid if $r_i[t]=1$ and no simulated agents $(i,i')$ bid if $r_i[t]=0$. Combining the previous three sentences, we obtain the second equality in \eqref{eq:expected_payment_simulated_agent_equilibrium}. 

\cref{eq:expected_payment_simulated_agent_equilibrium} holds for $t\leq \min\{\tau_i,\tau,\tau'\}$. For any $t\leq \min\{\tau,\tau'\}$, we still have the bottom case of \eqref{eq:expected_payment_simulated_agent_equilibrium} by the same argument, in that we still have
\begin{equation} \label{eq:expected_payment_simulated_agent_equilibrium_specialized}
    \E[\hat P_{(j,j')}\mid \mathcal G_{t-1}] = \frac\paymentConstant m\E\left[\frac{1}{2 + \hat Y}\right] \quad \text{if $r_i[t]=0$}.
\end{equation}

Notice that
\begin{equation*}
    \sum_{s=1}^{\min\{t,\tau_i,\tau,\tau'\}}\hat P_{(j,j')}[s] - \sum_{s=1}^{\min\{t,\tau_i,\tau,\tau'\}}\E[\hat P_{(j,j')}[s]\mid \mathcal G_{t-1}]
\end{equation*}
is a $\mathcal G_t$-martingale with increments bounded by $\paymentConstant$, so by Azuma-Hoeffding,
\begin{equation*}
    \Pr\left(\sum_{s=1}^{\min\{\tau_i,\tau,\tau'\}} \hat P_{(j,j')}[s] \geq \sum_{s=1}^{\min\{\tau_i,\tau,\tau'\}}\E[\hat P_{(j,j')}[s]\mid\mathcal G_{s-1}] + \paymentConstant\sqrt{T\ln T}\right) \leq \frac{1}{T^2}.
\end{equation*}
From \cref{lem:asymmetric_robustness_stopping_time_lower_bound}, $\tau' = T$ with probability at least $1-O(1/T^2)$. Consider what happens on the event that $\tau'=T$, \eqref{eq:not_many_requests_after_running_out_of_budget} as in \cref{lem:not_many_requests_after_running_out_of_budget} happens for agent $i$, and the above display does not happen for any $(j,j')$ with $j\neq i$, which has probability at least $1-O(1/T^2)$ by the union bound. In that case,
\begin{equation*}
    \sum_{s=1}^{\min\{\tau_i,\tau\}} \hat P_{(j,j')}[s]\leq \sum_{s=1}^{\min\{\tau_i,\tau\}} \E[\hat P_{(j,j')}[s]\mid\mathcal G_{s-1}] + \paymentConstant\sqrt{T\ln T}
\end{equation*}
for every $(j,j')$ with $j\neq i$. If $\tau=T$, then there is nothing to prove, so assume $\tau < T$. Then, $\sum_{s=1}^\tau \hat P_{(j,j')}[s] \geq T/m$ for some $(j,j')$ with $j\neq i$, so
\begin{equation}
\label{eq:asymmetric_nondeviating_agents_stopping_time}
    \frac Tm \leq \sum_{s=1}^{\min\{\tau_i, \tau\}} \E[\hat P_{(j,j')}[s]\mid\mathcal G_{s-1}] + \paymentConstant\sqrt{T\ln T}.
\end{equation}
Write
\begin{equation*}
\begin{split}
    \sum_{s=1}^\tau \E[\hat P_{(j,j')}[s]\mid\mathcal G_{s-1}] & = \sum_{s=1}^{\min\{\tau_i,\tau\}}\E[\hat P_{(j,j')}[s]\mid \mathcal G_{s-1}]r_i[s] + \sum_{s=\tau_i+1}^\tau\E[\hat P_{(j,j')}\mid \mathcal G_{s-1}]r_i[s]\\
    & \quad + \sum_{s=1}^{\tau}\E[\hat P_{(j,j')}[s]\mid \mathcal G_{s-1}](1-r_i[s])\\
    & \leq \sum_{s=1}^{\min\{\tau_i,\tau\}}\E[\hat P_{(j,j')}\mid \mathcal G_{s-1}]r_i[s] + \sum_{s=1}^{\tau}\E[\hat P_{(j,j')}\mid \mathcal G_{s-1}](1-r_i[s]) + O\left(\sqrt{T\log T}\right)\\
    & \leq \frac\paymentConstant m\left(\E\left[\frac{1}{2 + \hat X + \hat Y}\right]\#\{s\leq \tau :r_i[s]=1\} + \E\left[\frac{1}{2 + \hat Y}\right]\#\{s\leq\tau: r_i[s]=0\}\right)\\
    & \quad + O\left(\sqrt{T\log T}\right)\\
\end{split}
\end{equation*}
where the first inequality follows from the fact that the payments are bounded and the fact that from \eqref{eq:not_many_requests_after_running_out_of_budget}, $\sum_{t=\tau_i+1}^{\tau}r_i[s]\leq O\left(\sqrt{T\log T}\right)$. The second inequality follows from \eqref{eq:expected_payment_simulated_agent_equilibrium} and \eqref{eq:expected_payment_simulated_agent_equilibrium_specialized}.

By the minimum bidding rule of the mechanism, $\#\{s\leq\tau:r_i[s]=1\}\geq \left(\asymmetricRate{i}\right)\tau - \epsilon$. It follows that
\begin{equation} \label{eq:asymmetric_payment_bound_equilibrium}
\begin{split}
    \sum_{s=1}^\tau \E[\hat P_{(j,j')}[s]\mid\mathcal G_{s-1}] & \leq \frac\paymentConstant m\left(\E\left[\frac{1}{2 + \hat X + \hat Y}\right]\left(\left(\asymmetricRate{i}\right)\tau - \epsilon\right)\right.\\
    & \quad \left.+ \E\left[\frac{1}{2 + \hat Y}\right]\left(\tau - \left(\left(\asymmetricRate{i}\right)\tau - \epsilon\right)\right) \right)\\
    & \leq \frac\paymentConstant m\left(\E\left[\frac{1}{2+\hat X+\hat Y}\right]\left(\asymmetricRate{i}\right) + \E\left[\frac{1}{2+\hat Y}\right]\left(1-\frac1m\right)^{k_i}\right)\tau\\
    & \quad + O\left(\sqrt{T\log T}\right).
\end{split}
\end{equation}
We shall now argue that
\begin{equation}
\label{eq:asymmetric_payment_at_equilibrium}
    \paymentConstant\left(\E\left[\frac{1}{2+\hat X+\hat Y}\right]\left(\asymmetricRate{i}\right) + \E\left[\frac{1}{2+\hat Y}\right]\left(1-\frac1m\right)^{k_i}\right) =1.
\end{equation}
Let $Z = \sum_{k=1}^{m-1}Z_k$ where each $Z_k\sim \Bern(1/m)$ are independent. Then, one can see that $\hat Y \overset{d}= \sum_{k=k_i+1}^{m-1} Z_k$, $\hat X+\hat Y$ is equal in distribution to $Z$ conditioned on $\sum_{k=1}^{k_i}Z_k\geq 1$, $\hat Y$ is equal in distribution to $Z$ conditioned on $\sum_{k=1}^{k_i}Z_k=0$, $\Pr(\sum_{k=1}^{k_i}Z_k\geq 1) =  1-(1-1/m)^{k_i}$, and $\Pr(\sum_{k=1}^{k_i}Z_k = 0) = (1-1/m)^{k_i}$. It follows that 
\begin{equation} \label{eq:asymmetric_probabilistic_payment_at_equilibrium}
\begin{split}
     \E\left[\frac{1}{2+Z}\right] = \E\left[\frac{1}{2+Z}\,\middle|\,\sum_{k=1}^{k_i}Z_k\geq 1\right]\Pr\left(\sum_{k=1}^{k_i}Z_k\geq 1\right) + \E\left[\frac{1}{2+Z}\,\middle|\,\sum_{k=1}^{k_i}Z_k = 0\right]\Pr\left(\sum_{k=1}^{k_i}Z_k = 0\right)\\
     = \E\left[\frac{1}{2+\hat X+\hat Y}\right]\left(\asymmetricRate{i}\right) + \E\left[\frac{1}{2+\hat Y}\right]\left(1-\frac1m\right)^{k_i}.
\end{split}
\end{equation}
By \cref{prop:expectation_of_functions_of_binom} on $Z\sim\Binom(m-1,1/m)$ and the value of $\paymentConstant$, $\E[1/(2+Z)]$ is exactly $1/\paymentConstant$, so \eqref{eq:asymmetric_payment_at_equilibrium} holds. 

Substituting \eqref{eq:asymmetric_payment_at_equilibrium} into \eqref{eq:asymmetric_payment_bound_equilibrium}, we obtain
\begin{equation*}
    \sum_{s=1}^\tau \E[\hat P_{(j,j')}[s]\mid\mathcal G_{s-1}] \leq \frac{\tau}{m} + O\left(\sqrt{T\log T}\right).
\end{equation*}
Then, substituting into \eqref{eq:asymmetric_nondeviating_agents_stopping_time}, and solving for $\tau$, we obtain that
\begin{equation*}
    \tau  \geq T - O\left(\sqrt{T\log T}\right).
\end{equation*}

\end{proof}

Recall that $v_i^*(\beta)$ denotes agent $i$'s $\beta$-ideal utility, as in \cref{def:beta_ideal_utility}.

The below lemma is analogous to \cref{lem:bernoulli_reduction} and shows that it suffices to argue about the number of wins of the simulated agents to obtain utility lower bounds.
\begin{lemma} \label{lem:asymmetric_bernoulli_reduction}
Fix an agent $i$ and the strategies of other agents $j\neq i$. Assume agent $i$ is using a \asymmetricStrategy{i}. Suppose
\begin{equation}
\label{eq:utility_of_simulated_agents_assumption}
    \Pr\left(\frac1T\sum_{t=1}^T \hat W_{(i,i')}[t] \geq \frac{\lambda_i}{m}\quad\forall i'\right) \geq 1-\delta.
\end{equation}
Then, with probability at least $1-\delta-O(1/T^2)$,
\begin{equation*}
    \frac1T\sum_{t=1}^T U_i[t] \geq \lambda_i\cdot\frac{k_i}{m}\cdot\frac{v_i^*\left(\asymmetricRate{i}\right)}{\asymmetricRate{i}} - O\left(\sqrt{\frac{\log T}{T}}\right).
\end{equation*}
\end{lemma}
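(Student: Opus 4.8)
The plan is to reduce, exactly as in \cref{lem:bernoulli_reduction}, the desired utility bound to a lower bound on the number of rounds agent $i$ wins, and then read off that win bound from the hypothesis \eqref{eq:utility_of_simulated_agents_assumption}. First I would clear away the minimum-bidding constraint and pin down agent $i$'s requests: by \cref{lem:asymmetric_robustness_stopping_time_lower_bound} we have $\tau_i'=T$ with probability $1-O(1/T^2)$, and on this event the enforcement step never forces agent $i$ to bid, so (using her \asymmetricStrategy{i}) her request $r_i[t]$ is the cutoff-randomized indicator that $V_i[t]$ lies in the top $\asymmetricRate{i}$-quantile of $\mathcal F_i$; in particular $\E[r_i[t]\mid\mathcal H_{t-1}]=\asymmetricRate{i}$, the variable $r_i[t]$ depends on agent $i$'s time-$t$ randomness only through $V_i[t]$, and $\E[V_i[t]\mid r_i[t]=1]=v_i^*\!\left(\asymmetricRate{i}\right)/\asymmetricRate{i}$ by the definition of the $\beta$-ideal utility (the constraint in \eqref{eq:beta_ideal_utility_maximization_problem} being tight, which we may assume since otherwise $v_i^*=0$ and the claim is trivial). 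Next I would record two facts about wins: (a) agent $i$ cannot win a round she does not request, since $r_i[t]=0$ implies no simulated agent $(i,i')$ is placed in $\hat S[t]$, hence none wins in $\hat{\mathcal M}$ and $W_i[t]=0$, so $W_i[t]\le r_i[t]$; and (b) agent $i$ wins round $t$ exactly when some $(i,i')$ wins in $\hat{\mathcal M}$, and since $(i,1),\dots,(i,k_i)$ are distinct agents of $\hat{\mathcal M}$ at most one of them wins, so $W_i[t]=\sum_{i'=1}^{k_i}\hat W_{(i,i')}[t]\in\{0,1\}$. Summing (b) over $t$ and $i'$, on the event in \eqref{eq:utility_of_simulated_agents_assumption} we obtain $\sum_{t=1}^T W_i[t]=\sum_{i'=1}^{k_i}\sum_{t=1}^T\hat W_{(i,i')}[t]\ge \lambda_i k_i T/m$.

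The core step then mirrors \cref{lem:bernoulli_reduction}. The key observation is that the allocation inside $\hat{\mathcal M}$ never inspects the realized value $V_i[t]$: the samples $(\hat V_{(i,i')}[t])_{i'}\sim\mathcal D_{k_i,m}$ depend on $V_i[t]$ only through $r_i[t]$, and everything else in $\hat{\mathcal M}$ is a function of past history, the samples, and $\hat{\mathcal M}$'s internal randomness. Hence, conditioned on $r_i[t]$ (and using that values are independent across rounds), $V_i[t]$ is independent of the entire win sequence $(W_i[s])_{s\in[T]}$. Consequently, conditioned on $(W_i[s])_{s\in[T]}$ the variables $U_i[t]=V_i[t]W_i[t]$ are independent with $\E[U_i[t]\mid (W_i[s])_s]=\E[V_i[t]\mid r_i[t]=1]\,W_i[t]=\tfrac{v_i^*(\asymmetricRate{i})}{\asymmetricRate{i}}W_i[t]$, where the first equality uses $W_i[t]\le r_i[t]$. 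A conditional application of Hoeffding's inequality (each $U_i[t]$ bounded by $\bar v\le 1$) gives $\sum_t U_i[t]\ge \tfrac{v_i^*(\asymmetricRate{i})}{\asymmetricRate{i}}\sum_t W_i[t]-\bar v\sqrt{T\ln T}$ with probability $1-O(1/T^2)$; a union bound over this event, the event $\{\tau_i'=T\}$, and the event of \eqref{eq:utility_of_simulated_agents_assumption}, combined with $\sum_t W_i[t]\ge\lambda_i k_i T/m$ and division by $T$, yields the stated bound with probability $1-\delta-O(1/T^2)$.

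I expect the only delicate point to be exactly the one present already in \cref{lem:bernoulli_reduction}: carefully justifying that conditioning on the whole win sequence $(W_i[s])_s$ (together with $\tau_i'=T$) reveals nothing about $V_i[t]$ beyond its top-quantile indicator $r_i[t]$. This rests on the three structural facts highlighted above — values independent across rounds, $r_i[t]$ a deterministic (cutoff-randomized) function of $V_i[t]$ once the bidding minimum is known not to bind, and $\hat{\mathcal M}$ reading agent $i$'s time-$t$ data solely through $r_i[t]$ — after which the remainder is a routine repackaging of the Hoeffding/Azuma arguments already used throughout the paper.
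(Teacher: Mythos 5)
Your proposal is correct and follows essentially the same route as the paper's proof: express $W_i[t]=\sum_{i'}\hat W_{(i,i')}[t]$, use that conditioned on the win sequence the values $V_i[t]$ are independent with $\E[V_i[t]\mid r_i[t]=1]=v_i^*\left(\asymmetricRate{i}\right)/\left(\asymmetricRate{i}\right)$, apply a conditional Hoeffding bound, and union-bound with the event in \eqref{eq:utility_of_simulated_agents_assumption}. The only difference is that you explicitly condition on $\tau_i'=T$ (via \cref{lem:asymmetric_robustness_stopping_time_lower_bound}) to rule out forced bids before identifying $r_i[t]$ with the top-quantile indicator, a point the paper leaves implicit and which is absorbed by the $O(1/T^2)$ slack already present in the statement.
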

\begin{proof}
By the \asymmetricStrategy{i}{} and the definition of $\beta$-ideal utility,
\begin{equation*}
    \E[V_i[t] \mid r_i[t]=1] = \frac{v_i^*\left(\asymmetricRate{i}\right)}{\asymmetricRate{i}}.
\end{equation*}
Conditioned on $r_i[t]$, the values $V_i[t]$ are independent across time and independent of the wins of the simulated agents $(\hat W_{(i,i')}[s])_{i',s}$. Conditioned on the $(\hat W_{(i,i')}[s])_{i',s}$, the utilities $U_i[t] = V_i[t]W_i[t] = V_i[t]\sum_{i'}\hat W_{(i,i')}[s]$ are i.i.d. with mean
\begin{equation*}
    \E[V_i[t]W_i[t] \mid (\hat W_{(i,i')}[s])_{i',s}] = \E[V_i[t]\mid r_i[t]=1]W_i[t] = \frac{v_i^*\left(\asymmetricRate{i}\right)}{\asymmetricRate{i}}\sum_{i'=1}^{k_i}\hat W_{(i,i')}[t]
\end{equation*}
by the strategy of agent $i$ and the definition of $\beta$-ideal utility (recall that agent $i$ is using a $\beta = \asymmetricRate{i}$ strategy, a strategy that requests with probability $\rho(V_i)$ where $\rho$ solves \eqref{eq:beta_ideal_utility_maximization_problem} as in the definition of $\beta$-ideal utility).
By Hoeffding's inequality, letting $\bar v$ upper bound the value distribution $\mathcal F_i$,
\begin{equation*}
    \Pr\left(\sum_{t=1}^T U_i[t] \leq \frac{v_i^*\left(\asymmetricRate{i}\right)}{\asymmetricRate{i}}\sum_{t=1}^T\sum_{i'=1}^{k_i}\hat W_{(i,i')}[t] - \bar v\sqrt{T\ln T}\right) \leq \frac{1}{T^2}.
\end{equation*}
Now, the result follows from the above, the assumption \eqref{eq:utility_of_simulated_agents_assumption}, and the union bound.
\end{proof}

We convert the guarantee in terms of the $\left(\asymmetricRate{i}\right)$-ideal utility in  \cref{lem:asymmetric_bernoulli_reduction} to a guarantee in terms of the $\alpha_i$-ideal utility in the below lemma.
\begin{lemma} \label{lem:asymmetric_concavity}
We have
\begin{equation*}
    \frac{k_i}{m}\cdot \frac{v_i^*\left(\asymmetricRate{i}\right)}{\asymmetricRate{i}} \geq v_i^*\left(\frac{k_i}{m}\right) = v_i^*(\alpha_i) = v_i^*.
\end{equation*}
\end{lemma}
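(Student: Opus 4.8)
The plan is to reduce the statement to two elementary facts about the map $\beta\mapsto v_i^*(\beta)$: that it is concave (\cref{fact:beta_ideal_utility_concavity}) and that $v_i^*(0)=0$ (take $\rho\equiv 0$ in \eqref{eq:beta_ideal_utility_maximization_problem}). First I would dispose of the right-hand equalities: since $\alpha_i=k_i/m$ by hypothesis, $v_i^*(k_i/m)=v_i^*(\alpha_i)=v_i^*$ is immediate from \cref{def:ideal_utility,def:beta_ideal_utility}. So the only content is the inequality $\frac{k_i}{m}\cdot\frac{v_i^*(\beta)}{\beta}\ge v_i^*(\alpha_i)$, where I abbreviate $\beta:=1-(1-1/m)^{k_i}$ and $\alpha_i:=k_i/m$.

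Next I would record the key comparison $\beta\le\alpha_i$. By Bernoulli's inequality (equivalently, a union bound over $k_i$ independent $\Bern(1/m)$ trials) we have $(1-1/m)^{k_i}\ge 1-k_i/m$, hence $\beta=1-(1-1/m)^{k_i}\le k_i/m=\alpha_i$; note also $\beta>0$ since $k_i\ge 1$. Then I would invoke the standard fact that for a concave $g$ with $g(0)=0$ the ratio $g(x)/x$ is nonincreasing on $(0,\infty)$: for $0<x\le y$, concavity applied to the convex combination $x=\tfrac xy\,y+(1-\tfrac xy)\cdot 0$ gives $g(x)\ge\tfrac xy\,g(y)$, i.e. $g(x)/x\ge g(y)/y$. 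Applying this with $g=v_i^*$ and $0<\beta\le\alpha_i$ yields $\frac{v_i^*(\beta)}{\beta}\ge\frac{v_i^*(\alpha_i)}{\alpha_i}$, and multiplying both sides by $\alpha_i=k_i/m>0$ gives $\frac{k_i}{m}\cdot\frac{v_i^*(\beta)}{\beta}\ge v_i^*(\alpha_i)=v_i^*$, which is exactly the claim.

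There is essentially no obstacle here; the argument is three lines once \cref{fact:beta_ideal_utility_concavity} is in hand. The only point needing any care is getting the direction of the Bernoulli bound right so that $\beta\le\alpha_i$ (so that the subdomain-to-endpoint concavity comparison goes the correct way), and noting the boundary case $k_i=m$, where $\beta=\alpha_i=1$ and the inequality degenerates to an equality. If desired, one can also remark that $v_i^*$ is nonnegative (being the supremum of expectations of $V_i\rho(V_i)\ge 0$), which makes the ratio manipulation harmless even when $v_i^*(\beta)=0$.
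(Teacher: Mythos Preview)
Your proof is correct and follows the same approach as the paper: both use Bernoulli's inequality to get $1-(1-1/m)^{k_i}\le k_i/m$ and then invoke the concavity of $\beta\mapsto v_i^*(\beta)$ (\cref{fact:beta_ideal_utility_concavity}). You spell out the standard step that concavity together with $v_i^*(0)=0$ makes $v_i^*(\beta)/\beta$ nonincreasing, which the paper leaves implicit, but the argument is essentially identical.
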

\begin{proof}
This follows from the fact that $1-(1-1/m)^{k_i} \leq k_i/m = \alpha_i$ and the concavity of $\beta\mapsto v_i^*(\beta)$ (\cref{fact:beta_ideal_utility_concavity}).
\end{proof}

Now we give our robustness claim.
\begin{proposition} \label{prop:asymmetric_robustness}
A \asymmetricStrategy{i}{} is $\lambda_i$-robust for some
\begin{equation*}
    \lambda_i \geq \frac{5}{3e} - O\left(\sqrt{\frac{\log T}{T}}\right).
\end{equation*}
\end{proposition}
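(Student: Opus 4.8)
The plan is to reduce \cref{prop:asymmetric_robustness} to the robustness of \allPayMechanism{} already established in \cref{thm:general_robustness_claim}, exploiting the fact that \allPayMechanismWithasymmetricFairShares{} is literally an execution of the equal-share mechanism $\hat{\mathcal M}$ on the simulated agent set $\hat N$. Fix a real agent $i$ playing a \asymmetricStrategy{i}, and arbitrary (possibly colluding) strategies for the agents $j\neq i$. First I would pass to the probability-$(1-O(1/T^2))$ event on which $\tau_i' = T$, which holds by \cref{lem:asymmetric_robustness_stopping_time_lower_bound}; on this event the minimum-bidding enforcement never overrides agent $i$, so $\E[r_i[t]\mid\mathcal H_{t-1}] = \asymmetricRate{i}$ for every $t$. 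Then \cref{lem:simulated_distribution} applies at every round: conditioned on $\mathcal H_{t-1}$, the sampled requests $\hat r_{(i,1)}[t],\dots,\hat r_{(i,k_i)}[t]$ are i.i.d.\ $\Bern(1/m)$. Moreover, since agent $i$'s decision to request at time $t$ depends only on $V_i[t]$ and $\mathcal H_{t-1}$, and $V_i[t]$ is independent of everything else, these requests are, conditioned on $\mathcal H_{t-1}$, independent of the requests of every simulated agent controlled by some real $j\neq i$. Hence inside the execution of $\hat{\mathcal M}$ (which is \allPayMechanism{} with agent set $\hat N$, equal fair shares $1/m$, and payment constant $\paymentConstant$), each simulated agent $(i,i')$ requests i.i.d.\ with probability $1/m$ whenever it still has budget and is barred otherwise; that is, $(i,i')$ follows a $1/m$-aggressive strategy (equivalently, it behaves as an agent with value distribution $\Bern(1/m)$ playing $1/m$-aggressively, for which the top $1/m$-quantile is exactly the value-$1$ event, so its utility equals its number of wins).

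Next I would apply \cref{thm:general_robustness_claim} inside $\hat{\mathcal M}$ to each simulated agent $(i,i')$ against the arbitrary behavior of all other simulated agents --- legitimate since $\paymentConstant = (m+1)/(1+m(1-1/m)^{m+1})\geq 2$ --- which gives, with probability $1-O(1/T^2)$,
\[
    \frac1T\sum_{t=1}^T \hat W_{(i,i')}[t] \;\geq\; \frac1m\left(\min\left\{1-\frac{3(1-1/m)}{3\paymentConstant-\paymentConstant/m},\ \frac{5-1/m}{\paymentConstant(3-1/m)}\right\} - O\left(\sqrt{\tfrac{\log T}{T}}\right)\right)
\]
(alternatively, one can cite the win-count bounds in \cref{lem:robustness_when_not_running_out_of_budget} and \cref{lem:robustness_when_running_out_of_budget} directly). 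By \cref{lem:robustness_with_equilibrium_payment_constant} the inner minimum is at least $5/(3e)$, so the right-hand side is $\tfrac1m\bigl(\tfrac{5}{3e} - O(\sqrt{\log T/T})\bigr)$, with the fixed $m$-dependent constant absorbed into the error term. A union bound over the constantly-many indices $i'\in[k_i]$, intersected with the event $\tau_i' = T$, keeps this simultaneously for all $i'$ with probability $1-O(1/T^2)$.

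Finally I would invoke \cref{lem:asymmetric_bernoulli_reduction} with $\lambda_i := \tfrac{5}{3e} - O(\sqrt{\log T/T})$ and $\delta = O(1/T^2)$, which yields that with probability $1-O(1/T^2)$,
\[
    \frac1T\sum_{t=1}^T U_i[t] \;\geq\; \lambda_i\cdot\frac{k_i}{m}\cdot\frac{v_i^*\left(\asymmetricRate{i}\right)}{\asymmetricRate{i}} - O\left(\sqrt{\tfrac{\log T}{T}}\right),
\]
and then \cref{lem:asymmetric_concavity} (concavity of $\beta\mapsto v_i^*(\beta)$, \cref{fact:beta_ideal_utility_concavity}) gives $\tfrac{k_i}{m}\cdot\tfrac{v_i^*(\asymmetricRate{i})}{\asymmetricRate{i}}\geq v_i^*$, so $\tfrac1T\sum_t U_i[t] \geq \bigl(\tfrac{5}{3e} - O(\sqrt{\log T/T})\bigr)v_i^*$ on this event. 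A standard high-probability-to-expectation conversion --- exactly as at the end of the proof of \cref{thm:general_robustness_claim} --- then gives $\tfrac1T\sum_t \E[U_i[t]] \geq \bigl(\tfrac{5}{3e} - O(\sqrt{\log T/T})\bigr)v_i^*$, which is the asserted $\lambda_i$-robustness. The step I expect to require the most care is the middle of the first paragraph: verifying that the induced per-round requests of a fixed simulated copy $(i,i')$ really do form an i.i.d.\ $\Bern(1/m)$ sequence that is independent (given $\mathcal H_{t-1}$) of the other simulated agents' simultaneous requests, so that the symmetric robustness guarantee applies verbatim inside $\hat{\mathcal M}$. Everything else --- the minimum-bidding coupling through $\tau_i' = T$, the union bound over the constantly-many simulated copies, and absorbing $m$-dependent constants into $O(\sqrt{\log T/T})$ --- is routine bookkeeping.
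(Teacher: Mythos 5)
Your proposal is correct and follows essentially the same route as the paper's proof: pass to the event $\tau_i'=T$ via \cref{lem:asymmetric_robustness_stopping_time_lower_bound}, use \cref{lem:simulated_distribution} to see that each simulated copy $(i,i')$ plays a $1/m$-aggressive strategy in $\hat{\mathcal M}$, apply \cref{thm:general_robustness_claim} (with \cref{lem:robustness_with_equilibrium_payment_constant}) to each copy, and conclude with \cref{lem:asymmetric_bernoulli_reduction,lem:asymmetric_concavity}. The only cosmetic difference is that the paper phrases the "apply the symmetric guarantee inside $\hat{\mathcal M}$" step as an explicit coupling with a run of the \allPayMechanism{}, whereas you verify the conditional independence of the simulated requests directly, which amounts to the same justification.
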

\begin{proof}
Assume agent $i$ uses a \asymmetricStrategy{i}. Then, each simulated agent $(i,i')$ is using a $1/m$-aggressive strategy in $\hat{\mathcal M}$ with values $\hat V_{(i,i')}$ if the minimum bidding constraint were not enforced in that the values $\hat V_{(i,i')}$ would be i.i.d. across time from the distribution $\hat{\mathcal F}_{(i,i')} = \Bern(1/m)$ by \cref{lem:simulated_distribution} and each simulated agent $(i,i')$ bids if and only if $\hat V_{(i,i')}=1$, which is a $1/m$-aggressive strategy in $\hat{\mathcal M}$.

Fix the strategies of players $j\neq i$. Couple $\hat{\mathcal M}$ and \allPayMechanism{} run with the simulated agents $\hat N$ such that each simulated agent $(i,i')$ is using a $1/m$-aggressive strategy in both mechanisms, for $t\leq \tau_i'$, agent $i$ requests in $\hat{\mathcal M}$ if and only if she requests in \allPayMechanism, and agents $(j,j')$ for $j\neq i$ request in $\hat{\mathcal M}$ if and only if they request in \allPayMechanism. This is possible to do because \allPayMechanismWithasymmetricFairShares{} runs the same as \allPayMechanism, so any strategy used by agents $j\neq i$ in $\hat{\mathcal M}$  can be used in \allPayMechanism. Also, agent $i$ can use the same strategy for $t\leq\tau_i'$ because for $t\leq\tau_i'$, the bidding minimum does not affect agent $i$'s ability to request. By \cref{lem:asymmetric_robustness_stopping_time_lower_bound}, $\tau_i'=T$ with probability at least $1-O(1/T^2)$. In this case, as argued before, since the minimum bidding rule is never enforced, each simulated agent $(i,i')$ is using a $1/m$-aggressive strategy in $\hat M$. It follows from \cref{thm:general_robustness_claim} that
\begin{equation*}
    \frac1T\sum_{t=1}^T\hat W_{(i,i')}[t] \geq \frac{\lambda_i}{m} - \left(\sqrt{\frac{\log T}{T}}\right).
\end{equation*}
for every $(i,i')$ for
\begin{equation*}
    \lambda_i \geq \min\left\{1 - \frac{3(1-1/m)}{3\paymentConstant - \paymentConstant/m}, \frac{5-1/m}{\paymentConstant(3-1/m)}\right\}  - \left(\sqrt{\frac{\log T}{T}}\right).
\end{equation*}
with probability at least $1-O(1/T^2)$.
By \cref{lem:asymmetric_bernoulli_reduction,lem:asymmetric_concavity}, a \asymmetricStrategy{i}{} is $\lambda_i$-robust in \allPayMechanismWithasymmetricFairShares. By \cref{lem:robustness_with_equilibrium_payment_constant}, $\lambda_i \geq 5/(3e)$.
\end{proof}

\begin{proposition} \label{prop:asymmetric_equilibrium_utility_claim}
If each agent $i$ plays a \asymmetricStrategy{i}, each agent is guaranteed
\begin{equation*}
    \frac1T\sum_{t=1}^T U_i[t] \geq \left(1-\left(1-\frac1m\right)^m\right)\frac{k_i}{m}\cdot\frac{v_i^*\left(\asymmetricRate{i}\right)}{\asymmetricRate{i}} - O\left(\sqrt{\frac{\log T}{T}}\right).
\end{equation*}

with probability at least $1-O(1/T^2)$.
\end{proposition}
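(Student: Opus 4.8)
The plan is to prove this exactly as \cref{prop:equilibrium_utility_claim} was proved in the symmetric case, only carried out inside the simulated mechanism $\hat{\mathcal M}$, which has $m$ simulated agents since $\sum_{j\in[n]}k_j=m$. Since every real agent $j$ plays a \asymmetricStrategy{j}, \cref{lem:asymmetric_robustness_stopping_time_lower_bound} gives $\tau_j'=T$ for every $j$ with probability $1-O(1/T^2)$; on this event the minimum-bidding rule never actually forces any real agent, so $\E[r_j[t]\mid\mathcal H_{t-1}]=\asymmetricRate{j}$ at all times. Applying \cref{lem:simulated_distribution} to each real agent $j$ separately, and using that the real agents' requests are mutually independent given $\mathcal H_{t-1}$ (each is a function only of that agent's private value) and that the mechanism draws the conditioned vectors $\mathcal D_{k_j,m}$ independently across $j$, the family $(\hat r_{(j,j')}[t])_{(j,j')\in\hat N}$ is, conditionally on $\mathcal H_{t-1}$, a collection of $m$ i.i.d.\ $\Bern(1/m)$ random variables: i.e.\ $\hat{\mathcal M}$ is running $m$ symmetric agents each bidding with probability $1/m$.

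Next I would establish budget solvency inside $\hat{\mathcal M}$. Define the stopping time
\[
  \tau \;:=\; \min\Bigl(\min_{(j,j')\in\hat N}\tau_{(j,j')},\; \min_{j\in[n]}\tau_j'\Bigr)
\]
with respect to $(\mathcal H_t)$. Running \cref{lem:asymmetric_equilibrium_stopping_time_lower_bound} with each real agent in turn cast as the (arbitrary-strategy) excluded agent gives $\tau_j\ge T-O(\sqrt{T\log T})$ for every real $j$, hence $\min_{(j,j')}\tau_{(j,j')}\ge T-O(\sqrt{T\log T})$; combined with $\tau_j'=T$ for all $j$ this gives $\tau\ge T-O(\sqrt{T\log T})$ with probability $1-O(1/T^2)$. (When $n=1$ the statement is the symmetric \cref{prop:equilibrium_utility_claim} with $m$ agents, so assume $n\ge 2$.) Fix a simulated agent $(i,i')$. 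Since $\{t\le\tau\}\in\mathcal H_{t-1}$, on that event all $m$ simulated agents still have budget and no real agent is forced, so by the first paragraph and the uniform allocation rule of $\hat{\mathcal M}$,
\[
  \E\bigl[\hat W_{(i,i')}[t]\mid\mathcal H_{t-1}\bigr]
  \;=\; \E\!\left[\frac{\hat r_{(i,i')}[t]}{\sum_{(j,j')\in\hat N}\hat r_{(j,j')}[t]}\,\middle|\,\mathcal H_{t-1}\right]
  \;=\; \frac1m\,\E_{X\sim\Binom(m-1,1/m)}\!\left[\frac{1}{1+X}\right]
  \;=\; \frac1m\left(1-\left(1-\frac1m\right)^m\right),
\]
the last step by \cref{prop:expectation_of_functions_of_binom}. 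Thus $\sum_{s=1}^{\min\{t,\tau\}}\bigl(\hat W_{(i,i')}[s]-\E[\hat W_{(i,i')}[s]\mid\mathcal H_{s-1}]\bigr)$ is an $\mathcal H_t$-martingale with increments bounded by $1$, and Azuma-Hoeffding gives $\sum_{t=1}^{\tau}\hat W_{(i,i')}[t]\ge\frac{\tau}{m}\bigl(1-(1-1/m)^m\bigr)-\sqrt{T\ln T}$ with probability $1-1/T^2$; combining with $\tau\ge T-O(\sqrt{T\log T})$ and $\hat W_{(i,i')}\ge 0$, and union-bounding over the $m$ simulated agents,
\[
  \Pr\!\left(\frac1T\sum_{t=1}^T\hat W_{(i,i')}[t]\;\ge\;\frac1m\left(1-\left(1-\frac1m\right)^m\right)-O\!\left(\sqrt{\frac{\log T}{T}}\right)\ \text{ for all }(i,i')\in\hat N\right)\;\ge\;1-O(1/T^2).
\]

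Finally I would plug this into \cref{lem:asymmetric_bernoulli_reduction} for the fixed agent $i$, with $\lambda_i=\bigl(1-(1-1/m)^m\bigr)-O(\sqrt{\log T/T})$ and $\delta=O(1/T^2)$. Because the value distribution, hence $v_i^*(\cdot)$, is bounded, the $O(\sqrt{\log T/T})$ inside $\lambda_i$ only multiplies the bounded quantity $\frac{k_i}{m}\cdot\frac{v_i^*(\asymmetricRate{i})}{\asymmetricRate{i}}$ and is absorbed into the error term, so \cref{lem:asymmetric_bernoulli_reduction} yields
\[
  \frac1T\sum_{t=1}^T U_i[t]\;\ge\;\left(1-\left(1-\frac1m\right)^m\right)\frac{k_i}{m}\cdot\frac{v_i^*\left(\asymmetricRate{i}\right)}{\asymmetricRate{i}}-O\!\left(\sqrt{\frac{\log T}{T}}\right)
\]
with probability $1-O(1/T^2)$, as claimed. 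The main obstacle is the first two paragraphs: one must check carefully that on the high-probability event where neither the minimum-bidding rule nor any simulated budget constraint has yet bound, the process inside $\hat{\mathcal M}$ is genuinely $m$ symmetric agents bidding i.i.d.\ $\Bern(1/m)$. This is precisely where \cref{lem:simulated_distribution} does the work (removing the correlation that conditioning on $\mathcal D_{k_j,m}$ injects among simulated agents controlled by a common real agent) and where \cref{lem:asymmetric_equilibrium_stopping_time_lower_bound} must be invoked once per real agent to keep every simulated agent solvent up to time $T-o(T)$; once that symmetric i.i.d.\ structure is secured, everything else is the same martingale-concentration computation as in \cref{prop:equilibrium_utility_claim}.
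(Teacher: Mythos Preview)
Your proposal is correct and follows essentially the same approach as the paper: define the stopping time $\tau$ capturing the first moment any simulated agent runs out of budget or any real agent hits the bidding minimum, use \cref{lem:asymmetric_robustness_stopping_time_lower_bound} and \cref{lem:asymmetric_equilibrium_stopping_time_lower_bound} to get $\tau\ge T-O(\sqrt{T\log T})$, compute $\E[\hat W_{(i,i')}[t]\mid\mathcal H_{t-1}]$ via \cref{lem:simulated_distribution} and \cref{prop:expectation_of_functions_of_binom}, apply Azuma--Hoeffding, and finish with \cref{lem:asymmetric_bernoulli_reduction}. You are slightly more explicit than the paper about why the full collection $(\hat r_{(j,j')}[t])_{(j,j')\in\hat N}$ is i.i.d.\ $\Bern(1/m)$ given $\mathcal H_{t-1}$ (combining \cref{lem:simulated_distribution} with independence of real agents' requests and of the mechanism's $\mathcal D_{k_j,m}$ draws), and about applying \cref{lem:asymmetric_equilibrium_stopping_time_lower_bound} once per excluded agent to cover every $\tau_j$.
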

\begin{proof}
Let $\tau = \min_{j\in[n]}\min\{\tau_j,\tau_j'\}$. By \cref{lem:robustness_stopping_time_lower_bound,lem:asymmetric_equilibrium_stopping_time_lower_bound} and the union bound, there is an event of probability at least $1-O(1/T^2)$ on which $\tau \geq T - O\left(\sqrt{T\log T}\right)$. 

Fix a simulated agent $(i,i')$.
For $t\leq \tau$,
\begin{align*}
    \E[\hat W_{(i,i')}[t]\mid\mathcal H_{t-1}] & = \E\left[\frac{\hat r_{(i,i')}[t]}{\hat r_{(i,i')}[t] + \sum_{\hat j\in \hat N\setminus\{(i,i')\}}r_{\hat j}[t]}\,\middle|\,\mathcal H_{t-1}\right] & \text{[Random allocation among bidding agents]}\\
    & = \frac1m\cdot\E_{X\sim\Binom(m-1,1/m)}\left[\frac1{1+X}\right] & \text{[\cref{lem:simulated_distribution}: agents bidding i.i.d. $\Bern(1/m)$]}\\
    & = \frac1m\left(1-\left(1-\frac1m\right)^m\right). & \text{[\cref{prop:expectation_of_functions_of_binom}]}
\end{align*}
Then, $\sum_{s=1}^{\min\{t,\tau\}}\hat U_{(i,i')}[s] - \frac{\min\{t,\tau\}}n\left(1-\left(1-\frac1m\right)^m\right)$ is an $\mathcal H_t$-martingale, so we can apply Azuma-Hoeffding to see that
\begin{equation*}
    \sum_{t=1}^\tau \hat W_{(i,i')}[t] \geq \frac{\tau}{m}\left(1-\left(1-\frac1m\right)^m\right) - \sqrt{T\ln T}
\end{equation*}
with probability at least $1- 1/T^2$. Using the fact that $\tau \geq T - O\left(\sqrt{T\log T}\right)$ with probability at least $1-O(1/T^2)$, we obtain
\begin{equation*}
    \frac1T\sum_{t=1}^\tau \hat W_{(i,i')}[t] \geq \frac{1}{m}\left(1-\left(1-\frac1m\right)^m\right) - O\left(\sqrt{\frac{\log T}{T}}\right)
\end{equation*}
with probability at least $1-O(1/T^2)$. This implies the result by \cref{lem:asymmetric_bernoulli_reduction}.

\end{proof}

\begin{lemma} \label{lem:asymmetric_equilibrium_deviating_utility_upper_bound}
Fix an agent $i$. Assume all other agents $j\neq i$ are using a \asymmetricStrategy{j}. Then, regardless of the strategy of agent $i$,
\begin{equation*}
    \frac1T\sum_{t=1}^T \mathbb E[U_i[t]] \leq \left(1-\left(1-\frac1m\right)^m\right)v_i^* + O\left(\sqrt{\frac{\log T}{T}}\right).
\end{equation*}
\end{lemma}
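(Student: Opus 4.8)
The plan is to follow the proof of \cref{lem:equilibrium_deviating_utility_upper_bound} almost verbatim, but carried through the simulated instance $\hat{\mathcal M}$. Write $\alpha_i=k_i/m$ and $\beta_i=\asymmetricRate{i}$, and recall $\tau_i=\min_{i'}\tau_{(i,i')}$. First I would set $\tau=\min_{j\neq i}\min\{\tau_j,\tau_j'\}$ and assemble a good event $E$ of probability $1-O(1/T^2)$ on which: (i) $\tau\geq T-O(\sqrt{T\log T})$, which follows from \cref{lem:asymmetric_robustness_stopping_time_lower_bound} (each $j\neq i$ plays a \asymmetricStrategy{j}, so $\tau_j'=T$ whp) together with \cref{lem:asymmetric_equilibrium_stopping_time_lower_bound}; and (ii) $\sum_{t=\tau_i+1}^T r_i[t]=O(\sqrt{T\log T})$, by \cref{lem:not_many_requests_after_running_out_of_budget} applied to agent $i$. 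Since $U_i[t]\leq\bar v$ always and $U_i[t]\leq\bar v\,r_i[t]$, (i) and (ii) give $\sum_{t>\min\{\tau_i,\tau\}}U_i[t]=O(\sqrt{T\log T})$ on $E$, so it suffices to control $\sum_{t\leq\min\{\tau_i,\tau\}}U_i[t]$.

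Next, with $\mathcal G_t=\sigma(\mathcal H_t,r_i[t+1])$, I would compute the per-round conditional expectations for $t\leq\min\{\tau_i,\tau\}$. Because every $j\neq i$ plays a \asymmetricStrategy{j}, \cref{lem:simulated_distribution} makes each simulated agent they control bid i.i.d.\ $\Bern(1/m)$; and when agent $i$ requests, the number $\hat X$ of her simulated agents that bid is a sum of $k_i$ i.i.d.\ $\Bern(1/m)$'s conditioned on being positive (the mechanism's $\mathcal D_{k_i,m}$), independent both of the count $\hat Y\sim\Binom(m-k_i,1/m)$ of other bidders and of $V_i[t]$. Summing \cref{lem:expected_payment_of_agent} over $i$'s simulated agents, using $\bar b=(m+1)/(1+m(1-1/m)^{m+1})$, and applying the two identities of \cref{prop:ratio_of_binomials} with $\Pr(\hat X\geq1)=\beta_i$, this yields $\E[P_i[t]\mid\mathcal G_{t-1}]=\frac{\alpha_i}{\beta_i}r_i[t]$ and $\E[U_i[t]\mid\mathcal G_{t-1}]=\frac{\alpha_i}{\beta_i}\bigl(1-(1-1/m)^m\bigr)\E[V_i[t]r_i[t]\mid\mathcal G_{t-1}]$; in the symmetric case $k_i=1$, $\alpha_i=\beta_i$, and these collapse to the formulas in \cref{lem:equilibrium_deviating_utility_upper_bound}.

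I would then apply Azuma--Hoeffding to the stopped $\mathcal G_t$-martingales built from $P_i[s]-\E[P_i[s]\mid\mathcal G_{s-1}]$ (increments $\leq\bar b$) and $U_i[s]-\E[U_i[s]\mid\mathcal G_{s-1}]$ (increments $\leq\bar v$), both stopped at $\min\{t,\tau_i,\tau\}$. On $E$, the payment estimate together with agent $i$'s total simulated budget $\alpha_iT$ (overshoot $O(1)$ across her $k_i$ agents) gives $\frac{\alpha_i}{\beta_i}\sum_{t\leq\min\{\tau_i,\tau\}}r_i[t]\leq\alpha_iT+O(\sqrt{T\log T})$, hence $\sum_{t=1}^T r_i[t]\leq\beta_iT+O(\sqrt{T\log T})$ after adding back the post-$\tau_i$ requests; the utility estimate gives $\sum_{t=1}^T U_i[t]\leq\frac{\alpha_i}{\beta_i}\bigl(1-(1-1/m)^m\bigr)\sum_t\E[V_i[t]r_i[t]\mid\mathcal G_{t-1}]+O(\sqrt{T\log T})$. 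Passing to unconditional expectations and setting $\rho_i(v)=\frac1T\sum_t\E[r_i[t]\mid V_i[t]=v]$, the request bound reads $\E_{V_i}[\rho_i(V_i)]\leq\beta_i+O(\sqrt{\log T/T})$, so $\rho_i$ is feasible for the $(\beta_i+o(1))$-ideal-utility program and $\frac1T\sum_t\E[V_i[t]r_i[t]]=\E_{V_i}[V_i\rho_i(V_i)]\leq v_i^*(\beta_i)+o(1)$ by concavity of $\beta\mapsto v_i^*(\beta)$ (\cref{fact:beta_ideal_utility_concavity}); the $E^c$ contributions are $O(1/T)$ as in \cref{lem:equilibrium_deviating_utility_upper_bound}. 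This gives $\frac1T\sum_t\E[U_i[t]]\leq\frac{\alpha_i}{\beta_i}\bigl(1-(1-1/m)^m\bigr)v_i^*(\beta_i)+O(\sqrt{\log T/T})$, which is precisely the equilibrium value in \cref{prop:asymmetric_equilibrium_utility_claim}; by \cref{lem:asymmetric_concavity} this quantity is at least $\bigl(1-(1-1/m)^m\bigr)v_i^*$, so the displayed bound in the statement is its consequence and, more to the point, the estimate certifies that no deviation gains more than $o(1)$ over the strategy profile.

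The main obstacle I anticipate is the bookkeeping around the deviating agent's stopping behaviour: her $k_i$ simulated agents may hit their budget floors at very different times, her requests after $\tau_i$ are harmless only by \cref{lem:not_many_requests_after_running_out_of_budget}, and one must stop every martingale at $\min\{\tau_i,\tau\}$ and absorb both the $O(1)$ budget overshoot and the $o(T)$ post-$\tau_i$ tail into the error. A secondary point of care is the independence structure behind the two applications of \cref{prop:ratio_of_binomials}: $\hat X$ is sampled by the principal, $\hat Y$ by the other players' strategies, and $V_i[t]$ by nature, so they are conditionally independent given $\{r_i[t]=1\}$ and $\mathcal G_{t-1}$ --- this should be checked, not assumed. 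Finally, \cref{lem:asymmetric_bernoulli_reduction} is not available here since it presupposes agent $i$ plays a \asymmetricStrategy{i}, so the reduction from wins to utilities is carried out directly through the conditional expectations above.
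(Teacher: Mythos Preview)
Your approach is essentially the paper's own, carried through the simulated mechanism $\hat{\mathcal M}$: compute the per-round conditional expectations via \cref{lem:expected_payment_of_agent} and \cref{prop:ratio_of_binomials}, apply Azuma--Hoeffding, use the budget constraint to bound $\sum_t r_i[t]$, and close with the $\rho_i$/concavity argument. Your extra step of stopping the martingales at $\min\{\tau_i,\tau\}$ and invoking \cref{lem:not_many_requests_after_running_out_of_budget} for the post-$\tau_i$ tail is actually more careful than the paper, which stops only at $\tau$ and writes the formula $\E\bigl[\sum_{i'}\hat P_{(i,i')}[t]\mid\mathcal G_{t-1}\bigr]=\bar b\,\E[\hat X/(1+\hat X+\hat Y)]\,r_i[t]$ as though it held for all $t\leq\tau$; strictly speaking this needs all of agent $i$'s simulated agents to still have budget, i.e.\ $t\leq\tau_i$.

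There is, however, a genuine slip in your last paragraph. You correctly arrive at
\[
\frac1T\sum_t\E[U_i[t]]\;\leq\;\frac{\alpha_i}{\beta_i}\Bigl(1-(1-\tfrac1m)^m\Bigr)\,v_i^*(\beta_i)+O\!\left(\sqrt{\tfrac{\log T}{T}}\right),
\]
and you correctly observe that by \cref{lem:asymmetric_concavity} the right-hand side is \emph{at least} $\bigl(1-(1-\tfrac1m)^m\bigr)v_i^*$. But from $A\leq B$ and $B\geq C$ you cannot conclude $A\leq C$: the displayed bound in the lemma statement does \emph{not} follow. In fact the lemma as stated is too strong --- for strictly concave $\beta\mapsto v_i^*(\beta)$ and $k_i\geq 2$, \cref{prop:asymmetric_equilibrium_utility_claim} already shows the equilibrium strategy itself attains $\frac{\alpha_i}{\beta_i}(1-(1-\tfrac1m)^m)v_i^*(\beta_i)>\bigl(1-(1-\tfrac1m)^m\bigr)v_i^*$. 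The paper's own proof ends at the same $\frac{\alpha_i}{\beta_i}v_i^*(\beta_i)$ bound and never derives the displayed $v_i^*$ inequality either; what is actually used downstream, and what you do establish, is that the deviation upper bound matches the equilibrium lower bound of \cref{prop:asymmetric_equilibrium_utility_claim} up to $o(1)$, which is exactly your ``more to the point'' remark. So drop the claim that the displayed bound is a consequence, and state the conclusion in terms of $\frac{\alpha_i}{\beta_i}v_i^*(\beta_i)$.
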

\begin{proof}
Let $\tau = \min_{j\neq i}\min\{\tau_j,\tau_j'\}$. Let $\mathcal G_t$ be the $\sigma$-algebra generated by the history $\mathcal H_t$ and the next agent $i$ bid $r_i[t+1]$. At each time $t\leq\tau$, by \cref{lem:expected_payment_of_agent},
\begin{align*}
    \E\left[\sum_{i'=1}^{k_i}\hat P_{(i,i')}[t]\,\middle|\,\mathcal G_{t-1}\right] & = \paymentConstant\E\left[\frac{\sum_{i'=1}^{k_i}\hat r_{(i,i')}[t]}{1 + \sum_{i'=1}^{k_i}\hat r_{(i,i')}[t] + \sum_{j\neq i}\sum_{j'=1}^{k_j}\hat r_{(j,j')}[t]}\,\middle|\,\mathcal G_{t-1}\right]\\
    & = \paymentConstant\E\left[\frac{\hat X}{1+\hat X + \hat Y}\right]r_i[t]
\end{align*}
where $\hat X \overset{d}= \sum_{i'=1}^{i_k}\hat V_{(i,i')}$ where $(\hat V_{(i,i')})\sim\mathcal D_{k_i,m}$ and $\hat Y\sim\Binom(m-k_i,1/m)$ are independent. The first line follows from \cref{lem:expected_payment_of_agent} and the second line follows from the mechanism and \cref{lem:simulated_distribution} (so that simulated agents $(j,j')$ for $j\neq i$ are bidding i.i.d. $\Bern(1/m)$).

Notice that $\hat X$ is the distribution of a $\Binom(k_i,1/m)$ random variable conditioned on being nonzero. Using \cref{prop:ratio_of_binomials} and substituting $\paymentConstant$, letting $X\sim \Binom(k_i,1/m)$, the above display is equal to
\begin{equation*}
\begin{split}
    \paymentConstant\cdot\E\left[\frac{X}{1+X+\hat Y}\,\middle|\,X\geq 1\right]r_i[t] = \paymentConstant\cdot\frac{\E\left[\frac{X}{1+X+\hat Y}\right]}{\Pr(X\geq 1)} = \paymentConstant\cdot\frac{\frac {k_i}m\cdot\frac{1+m(1-1/m)^{m+1}}{m+1}}{\asymmetricRate{i}}\cdot r_i[t]\\
     = \frac{k_i}{m}\cdot\frac{r_i[t]}{\asymmetricRate{i}}.
\end{split}
\end{equation*}
Using similar reasoning,
\begin{align*}
    \E[U_i[t]\mid\mathcal G_{t-1}] & = \E\left[V_i[t]\cdot \sum_{i'=1}^{k_i}\hat W_{(i,i')}[t]\right]\\
    & = \E[V_i[t]\mid r_i[t]=1]r_i[t]\sum_{i'=1}^{k_i}\E\left[\frac{\hat r_{(i,i')}[t]}{\hat r_{(i,i')}[t] + \sum_{\hat j\neq (i,i')}r_{\hat j}[t]}\,\middle|\,\mathcal G_{t-1}, r_i[t]=1\right]\\
    & =  \E[V_i[t]\mid r_i[t]=1]\E\left[\frac{\hat X}{\hat X + \hat Y}\,\middle|\,\hat X\geq 1\right]r_i[t]\\
    & = \frac{\frac{k_i}{m}\left(1-\left(1-\frac1m\right)^m\right)}{\asymmetricRate{i}}\cdot\E[V_i[t]r_i[t]]
\end{align*}
where we use \cref{prop:ratio_of_binomials} for the last equality.

By Azuma-Hoeffding applied to the $\mathcal G_t$-martingale $\sum_{s=1}^{\min\{t,\tau\}}\sum_{i'=1}^{k_i}\hat P_{(i,i')}[t] - \frac{k_i}{m}\cdot r_i[t]\cdot\min\{t,\tau\}$ with increments bounded by $\paymentConstant$,
\begin{equation}
\label{eq:asymmetric_deviating_agent_payment}
    \sum_{t=1}^\tau \sum_{i'=1}^{k_i} \hat P_{(i,i')}[t] \geq \frac{k_i}{m}\sum_{t=1}^\tau r_i[t] - \paymentConstant\sqrt{T\ln T}
\end{equation}
with probability at least $1-1/T^2$ and to the $\mathcal G_t$-martingale $\sum_{s=1}^{\min\{t,\tau\}}U_i[t] - \min\{t,\tau\}$ with increments bounded by $\bar v$ where $\bar v$ bounds the value distribution $\mathcal F_i$,
\begin{equation}
\label{eq:asymmetric_deviating_agent_utility}
    \sum_{t=1}^\tau U_i[t] \leq \sum_{t=1}^\tau\frac{\frac{k_i}{m}\left(1-\left(1-\frac1m\right)^m\right)}{\asymmetricRate{i}}\E[V_i[t]r_i[t]] + \bar v\sqrt{T\ln T}
\end{equation}
with probability at least $1-1/T^2$. Consider what happens on the event that \eqref{eq:asymmetric_deviating_agent_payment}, \eqref{eq:asymmetric_deviating_agent_utility}, and the event that $\tau \geq T - O\left(\sqrt{T\log T}\right)$, which has probability at least $1-O(1/T^2)$ by \cref{lem:asymmetric_robustness_stopping_time_lower_bound,lem:asymmetric_equilibrium_stopping_time_lower_bound}. Call this event $E$, which has probability at least $1-O(1/T^2)$ by the union bound.

Using agents' $(i,i')$ budget constraints that $\sum_{t=1}^\tau \hat P_{(i,i')} \leq T/m + \paymentConstant$, so
\begin{equation*}
    \sum_{t=1}^\tau \sum_{i'=1}^{k_i}\hat P_{(i,i')}[t] \leq \frac{k_i}{m}\cdot T + k_i\paymentConstant.
\end{equation*}
Substituting the above into \eqref{eq:asymmetric_deviating_agent_payment} and rearranging,
\begin{equation*}
    \sum_{t=1}^\tau r_i[t] \leq \left(\asymmetricRate{i}\right)T + O\left(\sqrt{T\log T}\right).
\end{equation*}
It follows from the fact that $\tau\geq T-O\left(\sqrt{T\log T}\right)$ that 
\begin{align}
    \sum_{t=1}^T r_i[t] \leq \left(\asymmetricRate{i}\right)T + O\left(\sqrt{T\log T}\right) \label{eq:asymmetric_deviating_agent_total_requests}
\end{align}
and also using the fact that $\tau\geq T-O\left(\sqrt{T\log T}\right)$ in \eqref{eq:asymmetric_deviating_agent_utility}, we obtain
\begin{align}
    \sum_{t=1}^T U_i[t] \leq \frac{\frac{k_i}{m}\left(1-\left(1-\frac1m\right)^m\right)}{\asymmetricRate{i}}\sum_{t=1}^T\E[V_i[t]r_i[t]] + O\left(\sqrt{T\log T}\right). \label{eq:asymmetric_deviating_agent_total_utility}
\end{align}
Let $\rho_i[t]:[0,\infty)\to[0,1]$ be a measurable function such that $\rho_i[t](V_i[t]) = \E[r_i[t]\mid V_i[t]]$. Let
\begin{equation*}
    \rho_i(v_i) = \frac1T\sum_{t=1}^T \rho_i[t](v_i)
\end{equation*}

Taking expectations, using \eqref{eq:asymmetric_deviating_agent_total_requests} on $E$ and noting that $r_i[t]\leq 1$ on the complement event $E^c$, we obtain
\begin{equation*}
\begin{split}
    \E_{V_i\sim\mathcal F_i}[\rho_i(V_i)] = \frac1T\sum_{t=1}^T \E_{V_i\sim\mathcal F_i}[\rho_i[t](V_i)] = \frac1T\sum_{t=1}^T \E[r_i[t]] \leq \Pr(E^c) + \frac1T\sum_{t=1}^T \E[r_i[t]\pmb1_E]\\
        \leq \Pr(E^c) + \left(\asymmetricRate{i} + O\left(\sqrt{\frac{\log T}{T}}\right)\right)\Pr(E) \leq \asymmetricRate{i}+ O\left(\sqrt{\frac{\log T}{T}}\right).
\end{split}
\end{equation*}
Thus, $\rho_i$ is a feasible solution to \eqref{eq:beta_ideal_utility_maximization_problem} with $\beta = \asymmetricRate{i} + O\left(\sqrt{\frac{\log T}{T}}\right)$. Then,
\begin{equation*}
    \E_{V_i\sim\mathcal F_i}[V_i\rho_i(V_i)] \leq v_i^*\left(\asymmetricRate{i} + O\left(\sqrt{\frac{\log T}{T}}\right)\right)\leq v_i^*\left(\asymmetricRate{i}\right) + O\left(\sqrt{\frac{\log T}{T}}\right)
\end{equation*}
where the inequality follows from the concavity of $\beta\mapsto v_i^*(\beta)$ (\cref{fact:beta_ideal_utility_concavity}). Using the above and \eqref{eq:deviating_agent_total_requests}, we compute
\begin{equation*}
\begin{split}
    \frac1T\sum_{t=1}^T \E[U_i[t]] & \leq \frac1T\sum_{t=1}^T \E[U_i[t]\pmb1_E] + \frac1T\sum_{t=1}^T \E[U_i[t]\pmb1_{E^c}]\\
    & \leq \left(\frac1T\cdot\frac{\frac{k_i}{m}\left(1-\left(1-\frac1m\right)^m\right)}{\asymmetricRate{i}}\sum_{t=1}^T\E[V_i[t]r_i[t]] + O\left(\sqrt{\frac{\log T}{T}}\right)\right)\Pr(E) + \E_{V_i\sim\mathcal F_i}[V_i\pmb1_{E^c}]\\
    & \leq \frac1T\cdot \frac{\frac{k_i}{m}\left(1-\left(1-\frac1m\right)^m\right)}{\asymmetricRate{i}}\sum_{t=1}^T\E[V_i[t]r_i[t]] + O\left(\sqrt{\frac{\log T}{T}}\right)\\
    & = \frac{\frac{k_i}{m}\left(1-\left(1-\frac1m\right)^m\right)}{\asymmetricRate{i}}\E_{V_i\sim\mathcal F_i}[V_i\rho_i(V_i)] + O\left(\sqrt{\frac{\log T}{T}}\right)\\
    & \leq \frac{\frac{k_i}{m}\left(1-\left(1-\frac1m\right)^m\right)}{\asymmetricRate{i}}\cdot v_i^*\left(\asymmetricRate{i}\right) + O\left(\sqrt{\frac{\log T}{T}}\right).
\end{split}
\end{equation*}
using the fact that the value distribution is bounded for the third inequality.

\end{proof}

Now we have essentially proved all of \cref{thm:asymmetric_fair_shares}. The robustness follows from \cref{prop:asymmetric_robustness}. The fact that each player $i$ playing a \asymmetricStrategy{i} is an approximate equilibrium follows from \cref{prop:asymmetric_equilibrium_utility_claim,lem:asymmetric_equilibrium_deviating_utility_upper_bound}. The fraction of ideal utility guaranteed at equilibrium guarantee follows from \cref{prop:asymmetric_equilibrium_utility_claim,lem:asymmetric_concavity}.
\dledit{\section{Approximating the Fair Shares in \texorpdfstring{\allPayMechanismWithasymmetricFairShares}{Asymmetric Fair Share Mechanism}} \label{sec:approximate_fair_shares}
In this section, we show that if we use fair shares $\alpha_i'$ that are close to the true fair shares $\alpha_i$ in \allPayMechanismWithasymmetricFairShares, we obtain approximate utility guarantees. This is useful because the computational resources used by \allPayMechanismWithasymmetricFairShares scales with the least common denominator of the fair shares being used, so if the $\alpha_i$ are irrational, or if the common denominator of the $\alpha_i$ is large, we can instead use fair shares $\alpha_i'$ close to the true fair shares $\alpha_i$ with a small common denominator.

Recall the definition of $\beta$-ideal utility (\cref{def:beta_ideal_utility}): agent $i$'s $\beta$-ideal utility is their ideal utility if they had fair share $\beta$. Recall that the $\beta$-ideal utility is concave in $\beta$ (\cref{fact:beta_ideal_utility_concavity}). We are interested in proving bounds with respect to the $\alpha_i$-ideal utility, what we were calling just ``the ideal utility'' previously. If we instead use fair shares $\alpha_i'$, our guarantees (e.g., \cref{thm:asymmetric_fair_shares}) would be in terms of the $\alpha_i'$-ideal utility. To obtain guarantees in terms of the $\alpha_i$-ideal utility, we lower bound the $\alpha_i'$-ideal utility in terms of the $\alpha_i$-ideal utility in the below lemma.
\begin{lemma}
    Let $\epsilon > 0$. If $\alpha_i' \geq \alpha_i - \delta$ for $\delta \leq \alpha_i\epsilon$, then $v_i^*(\alpha_i') \geq (1-\epsilon)v_i^*(\alpha_i)$
\end{lemma}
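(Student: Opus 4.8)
The plan is to deduce the statement entirely from the concavity of $\beta \mapsto v_i^*(\beta)$ (\cref{fact:beta_ideal_utility_concavity}) together with the boundary value $v_i^*(0) = 0$. First I would dispose of two trivial sub-cases. If $1-\epsilon \le 0$, the inequality is immediate since $v_i^*(\alpha_i') \ge 0$; so I may assume $\epsilon < 1$, which gives $\delta \le \alpha_i\epsilon < \alpha_i$ and hence $\alpha_i' \ge \alpha_i - \delta > 0$. Next, if $\alpha_i' \ge \alpha_i$, then since any $\rho$ feasible for the $\alpha_i$-problem in \eqref{eq:beta_ideal_utility_maximization_problem} is also feasible for the $\alpha_i'$-problem, the map $\beta \mapsto v_i^*(\beta)$ is nondecreasing, so $v_i^*(\alpha_i') \ge v_i^*(\alpha_i) \ge (1-\epsilon) v_i^*(\alpha_i)$.

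The substantive case is $\alpha_i - \delta \le \alpha_i' < \alpha_i$. Here I would first record that $v_i^*(0) = 0$: taking $\beta = 0$ in \eqref{eq:beta_ideal_utility_maximization_problem}, the constraint $\E[\rho(V_i)] \le 0$ together with $\rho \ge 0$ forces $\rho = 0$ $\mathcal F_i$-almost surely, so the objective vanishes. Then, since $v_i^*$ is concave with $v_i^*(0) = 0$, the map $\beta \mapsto v_i^*(\beta)/\beta$ is nonincreasing on $(0,\infty)$: for $0 < x < y$, writing $x = (x/y)\,y + (1 - x/y)\cdot 0$ and applying concavity gives $v_i^*(x) \ge (x/y)\,v_i^*(y)$, i.e.\ $v_i^*(x)/x \ge v_i^*(y)/y$. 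Applying this with $x = \alpha_i'$ and $y = \alpha_i$ and then using $\alpha_i' \ge \alpha_i - \delta$ and $\delta \le \alpha_i\epsilon$ yields
\[
    v_i^*(\alpha_i') \ge \frac{\alpha_i'}{\alpha_i}\,v_i^*(\alpha_i) \ge \frac{\alpha_i - \delta}{\alpha_i}\,v_i^*(\alpha_i) = \left(1 - \frac{\delta}{\alpha_i}\right) v_i^*(\alpha_i) \ge (1-\epsilon)\,v_i^*(\alpha_i),
\]
as desired.

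I do not anticipate a genuine obstacle: the only two points needing a word of justification are the identity $v_i^*(0) = 0$ (immediate from the optimization problem) and the superlinearity consequence of concavity through the origin (a one-line interpolation), and the monotonicity used in the easy case is likewise immediate from feasibility. The whole argument is essentially a single application of \cref{fact:beta_ideal_utility_concavity}.
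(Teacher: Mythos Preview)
Your proof is correct and follows essentially the same approach as the paper: both reduce to the case $\alpha_i' < \alpha_i$ via monotonicity, invoke concavity together with $v_i^*(0)=0$ to get $v_i^*(\alpha_i') \ge (\alpha_i'/\alpha_i)\,v_i^*(\alpha_i)$, and then chain the inequalities $\alpha_i' \ge \alpha_i - \delta$ and $\delta \le \alpha_i\epsilon$. You are simply a bit more explicit about the edge case $\epsilon \ge 1$ and about why concavity through the origin yields the slope inequality.
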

\begin{proof}
    Assume $\alpha_i' < \alpha_i$; otherwise, there is nothing to prove by the monotonicity of $\beta\mapsto v_i^*(\beta)$. By the concavity of $\beta\mapsto v_i^*(\beta)$ (\cref{fact:beta_ideal_utility_concavity}) and the fact that $v_i^*(0)=0$, 
    \[
    v_i^*(\alpha_i') \geq \frac{\alpha_i'}{\alpha_i}\cdot v_i^*(\alpha_i) \geq \frac{\alpha_i - \delta}{\alpha_i}\cdot v_i^*(\alpha_i) \geq (1-\epsilon)v_i^*(\alpha_i)
    \]
    for $\delta \leq \alpha_i\epsilon$.
\end{proof}

Using the above lemma, by using fair shares $(\alpha_i')$ such that $\alpha_i'$ differs from $\alpha_i$ by at most $\alpha_i\epsilon$ for each $i$, if agent $i$ obtains at least a $\lambda$-fraction of their $\alpha_i'$-ideal utility, they obtain at least a $(1-\epsilon)\lambda$ fraction of their $\alpha_i$-ideal utility. Since every real number is at distance at most $1/(2m')$ from a rational number with denominator at most $m'$, to obtain a $(1-\epsilon)$-approximation of the guarantees in \cref{thm:asymmetric_fair_shares}, it suffices to use approximate fair shares $\alpha_i'$ with denominators at least $1/(2\min_i\alpha_i\epsilon)$.}

\end{document}